\journal{ar\!Xiv.org for the record as a manuscript.
	}
\begin{document}

\newtheorem{theorem}{Theorem}[section]
\newtheorem{thm}[theorem]{Theorem}
\newtheorem{df}[theorem]{Definition}
\newtheorem{lem}[theorem]{Lemma}
\newtheorem{corollary}[theorem]{Corollary}
\newtheorem{cor}[theorem]{Corollary}
\newtheorem{mth}[theorem]{The Main Theorem}
\newtheorem{mapth}[theorem]{The Mapping Theorem}
\newtheorem{mlm}[theorem]{The Main Lemma}
\newtheorem{remlm}[theorem]{The ${\tt RemoveMax} $ Invertibility Lemma}
\newtheorem{racrlm}[theorem]{The ${\tt RemoveAll} $ Cost Lemma}
\newtheorem{rmcrlm}[theorem]{The ${\tt RemoveMax} $ Cost Lemma}
\newtheorem{losslm}[theorem]{Credit Loss Characterization Lemma}
\newtheorem{pqlm}[theorem]{The pq Lemma}
\newtheorem{fixlm}[theorem]{Subheap Repair Lemma}
\newtheorem{unlm}[theorem]{The Uniqueness Lemma}
\newtheorem{complem}[theorem]{Competition Lemma}
\newtheorem{diaglem}[theorem]{Diagram Lemma}
\newtheorem{fundlem}[theorem]{Fundamental Lemma}
\newtheorem{charlem}[theorem]{Worst-case Heap Characterization Lemma}
\newtheorem{sLBlm}[theorem]{$ \sum \lambda $ Lower Bound Lemma}
\newtheorem{dth}[theorem]{Decomposition Theorem}
\newtheorem{sith}[theorem]{Singularity Theorem}
\newtheorem{1LBth}[theorem]{The 1$ ^{\mbox{st}} $ Lower Bound Theorem}
\newtheorem{2LBth}[theorem]{The Lower Bound Theorem}
\newtheorem{UBth}[theorem]{The Upper Bound Theorem}
\newtheorem{1oth}[theorem]{The 1$ ^{st} $ Optimality Theorem}
\newtheorem{2oth}[theorem]{The 2$ ^{nd} $ Optimality Theorem}
\newtheorem{3oth}[theorem]{The 3$ ^{rd} $ Optimality Theorem}
\newtheorem{cropt}[theorem]{Optimality Criterion}
\newtheorem{hyp}[theorem]{Hypothesis}
\newtheorem{example}[theorem]{Example}
\newtheorem{property}[theorem]{Property}
\newtheorem{algMergeSort}[theorem]{Algorithm $ {\tt MergeSort} $}
\newtheorem{exercise}[theorem]{Exercise}

\newtheorem{theorem1}{Theorem}[subsection]
\newtheorem{thm1}[theorem1]{Theorem}
\newtheorem{thm2}{Theorem}[section]
\newtheorem{lemma1}[theorem1]{Lemma}
\newtheorem{lemma2}[thm2]{Lemma}
\newtheorem{claim1}[theorem1]{Claim}
\newtheorem{corollary1}[theorem1]{Corollary}
\newtheorem{proposition1}[theorem1]{Proposition}
\newtheorem{problem1}[theorem1]{Problem}
\newtheorem{example1}{Example}[subsection]
\newtheorem{conjecture1}[theorem1]{Conjecture}
\newtheorem{remark1}[theorem1]{Remark}
\newtheorem{property1}[theorem1]{Property}
\newtheorem{algMergeSort1}[theorem1]{Algorithm $ {\tt MergeSort} $}

\pagestyle{myheadings}
\markboth{M.A.Suchenek}{M. A. Suchenek:  Best-case Analysis of MergeSort {\tt (MS) v2}}

\begin{frontmatter}

\title{Best-case Analysis of MergeSort \\ with an Application to the Sum of Digits Problem
\bigskip \\ {\tt \small A manuscript (MS) v2\tnoteref{label1} 
intended for future journal publication\tnoteref{label2}} }
\tnotetext[label2]{\copyright 2016 Marek A. Suchenek.}
\tnotetext[label1]{This is a longer (new Sections~\ref{sec:proof_formula_level_Zigzag}, \ref{sec:proof_no_closed_form}, and \ref{sec:comment} added) version v2 of the article \cite{suc:bestmerg} deposited at ArXive on July 15, 2016, under the same title. These new Sections contain the detailed analytic proofs of Theorems~\ref{thm:formula_level_Zigzag}, \ref{thm:no_closed_form}, and \ref{thm:mainB} that were not included in the original version v1 of this article. Other than that, the current version v2 is identical with the original version v1.}
\author{MAREK A. SUCHENEK}


\address{California State University Dominguez Hills,
Department of Computer Science, \\
1000 E. Victoria St., Carson, CA 90747, USA,
  \texttt{Suchenek@csudh.edu}}

\begin{abstract}
%

\noindent An exact formula 
 \[ B(n) = \frac{n}{2}(\lfloor \lg n \rfloor + 1) -  \sum _{k=0} ^{\lfloor \lg n \rfloor} 2^k  \mbox{\textit{Zigzag}}\,(\frac{n}{2^{k+1}}), \]
 where
 \[ \mbox{\textit{Zigzag}} (x)  =  \min (x - \lfloor x \rfloor, \lceil x \rceil - x), \]
for the minimum number $ B(n) $ of comparisons of keys performed by $ {\tt MergeSort} $ on an $ n $-element array is derived and analyzed. The said formula is less complex than any other known formula for the same and can be evaluated in $ O(\log ^{c}) $ time, where $ c $ is a constant. It is shown that there is no closed-form formula for the above. Other variants for $ B(n) $ are described as well.

\smallskip

Since the recurrence relation for the minimum number of comparisons of keys for $ {\tt MergeSort} $ is identical with a recurrence relation for the number of 1s in binary expansions of all integers between $ 0 $ and $ n $ (exclusively), the above results extend to the sum of binary digits problem. 
\end{abstract}

\begin{keyword}
 MergeSort \sep sum of digits \sep sorting \sep best case.
\smallskip
\\
\MSC[2010] 68W40    	Analysis of algorithms \sep \MSC[2010] 11A63    	Radix representation
\smallskip
\\
\textbf{ACM Computing Classification}
\\
Theory of computation: Design and analysis of algorithms: Data structures design and analysis: Sorting and searching
\\
Mathematics of computing: Discrete mathematics: Graph theory: Trees
\\
Mathematics of computing: Continuous mathematics: Calculus

\end{keyword}

\end{frontmatter}


\tableofcontents

 \newpage

\section{Introduction}

\begin{quotation}
\textit{``One Picture is Worth a Thousand Words''}\\
 \hspace*{\fill}  [An advertisement for the \textit{San Antonio Light} (1918)] 
 \end{quotation}

%

Teaching undergraduate \textit{Analysis of Algorithms} has been a rewarding, although a bit taxing, experience. I was often surprised to learn that many basic problems that clearly belong to its core syllabus had been left unanswered or partially answered. Also, it seemed a bit odd to me that many otherwise decent texts offered  unnecessarily imprecise computations\footnote{A notable exception in this category is  \cite{sedfla:ana}.} of several rather fundamental results.

\medskip

In this article, I pursue a seemingly marginal topic, the best-case behavior of a well-known sorting algorithm $ {\tt MergeSort} $, which pursuit, however, yields some interesting findings that could hardly be characterized as \textit{``marginal.''} It turns out that - contrary to what a casual student of this subject might believe - computing the exact formula for the number of comparisons of keys that $ {\tt MergeSort} $ performs on any $ n $-element array in the best case is not a routine exercise and leads to a problem that gained some notoriety for being a hard nut to crack analytically: the \textit{sum of digits} problem. Even more unexpectedly, a relatively straightforward\footnote{Although not quite \textit{closed-form}.} formula for the said number of comparisons yields an improvement of a well-known answer to this instance of the \textit{sum of digits} problem:

\begin{quote}
How many $ 1 $s appear in binary representations of all integers between (but not including) $ 0 $ and $ n $?
\end{quote} 

\section{$ {\tt MergeSort} $ and its best-case behavior} \label{sec:mergsor}

A call to $ {\tt MergeSort} $ inherits an $ n $-element array $ {\tt A} $ of integers and sorts it non-decreasingly, following the steps described below.

\begin{algMergeSort} \label{alg:mersor}
To 
sort an $ n $-element array $ {\tt A} $ do:

\begin{enumerate}
\item \label{alg:mersor:item1} If $ n \leq 1 $ then return $ {\tt A} $ to the caller,
\item \label{alg:mersor:item2} If $ n \geq 2 $ then
\begin{enumerate}
\item \label{alg:mersor:item2:it1} pass the first $ \lfloor \frac{n}{2} \rfloor $ elements of $ {\tt A} $ to a recursive call to $ {\tt MergeSort} $,
\item pass the last $ \lceil \frac{n}{2} \rceil $ elements of $ {\tt A} $ to another recursive call to $ {\tt MergeSort} $,
\item \label{alg:mersor:item2:it2} linearly merge, by means of a call to $ {\tt Merge} $, the non-decreasingly sorted arrays that were returned from those calls onto one non-decreasingly sorted array  $ {\tt A}^{\prime} $,
\item \label{alg:mersor:item2:it3} return $ {\tt A}^{\prime} $ to the caller.
 \end{enumerate}  
\end{enumerate}
\end{algMergeSort}

%

A Java code of 
$ {\tt Merge} $ is shown on the Figure~\ref{fig:Merge}.

\begin{figure}[h]
\centering
\includegraphics[width=0.7\linewidth]{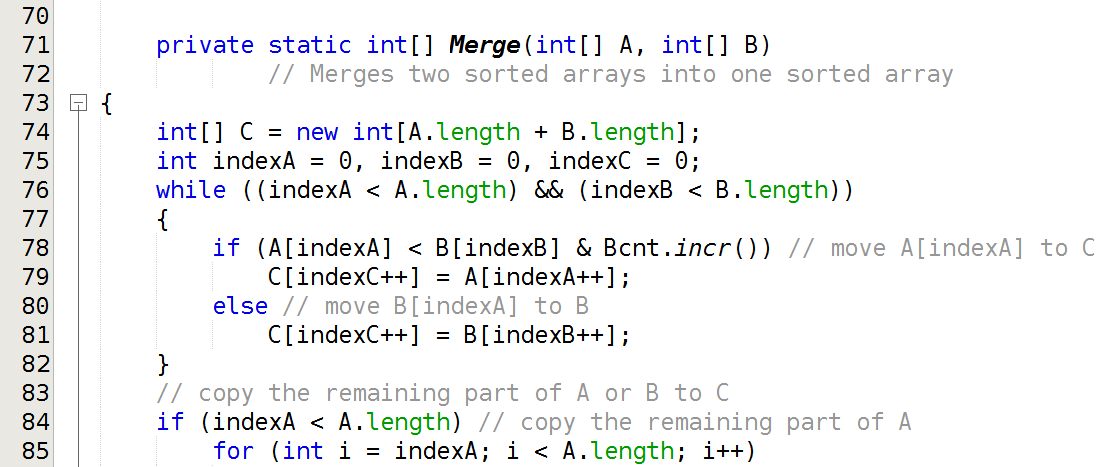} 
\includegraphics[width=0.7\linewidth]{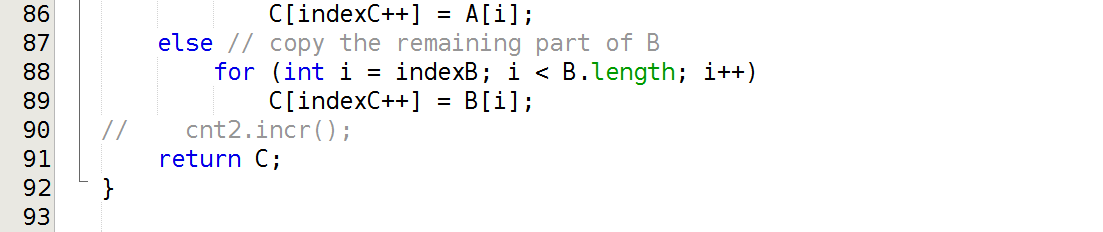}
\caption{A Java code of $ {\tt Merge} $, based on a pseudo-code from \cite{baa:ana}. Calls to $ {\tt Boolean} $ method $ {\tt Bcnt.incr()} $ count the number of comps.}
\label{fig:Merge}
\end{figure}

\medskip

A typical measure of the running time of $ {\tt MergeSort} $ is the number of \textit{comparisons of keys}, which for brevity I call \textit{comps}, that it performs while sorting array $ {\tt A} $. Since no comps are performed outside $ {\tt Merge} $, the running time of $ {\tt MergeSort} $ can be computed as the sum of numbers of comps performed by all calls to $ {\tt Merge} $ during the execution of  $ {\tt MergeSort} $. Since the minimum number of comps performed by $ {\tt Merge} $ on two list 
is 
equal to the length of the shorter list, and any increasingly sorted array on any size $ N \geq 2 $ produces only best-case scenarios for all subsequent calls to $ {\tt Merge} $, a rudimentary analysis of the recursion tree for $ {\tt MergeSort} $ easily yields the exact formula for the minimum number of comps for the entire $ {\tt MergeSort} $. The problem arises when one tries to reduce the said formula, which naturally involves long summations, to one that can be evaluated in a logarithmic time.  

\subsection{Recursion tree} \label{subsecsec:rectre}

The obvious recursion tree for $ {\tt MergeSort} $ and sufficiently large $ n $ is shown on Figure~\ref{fig:rectre}.
\begin{figure} [h]
\includegraphics[scale=.15]{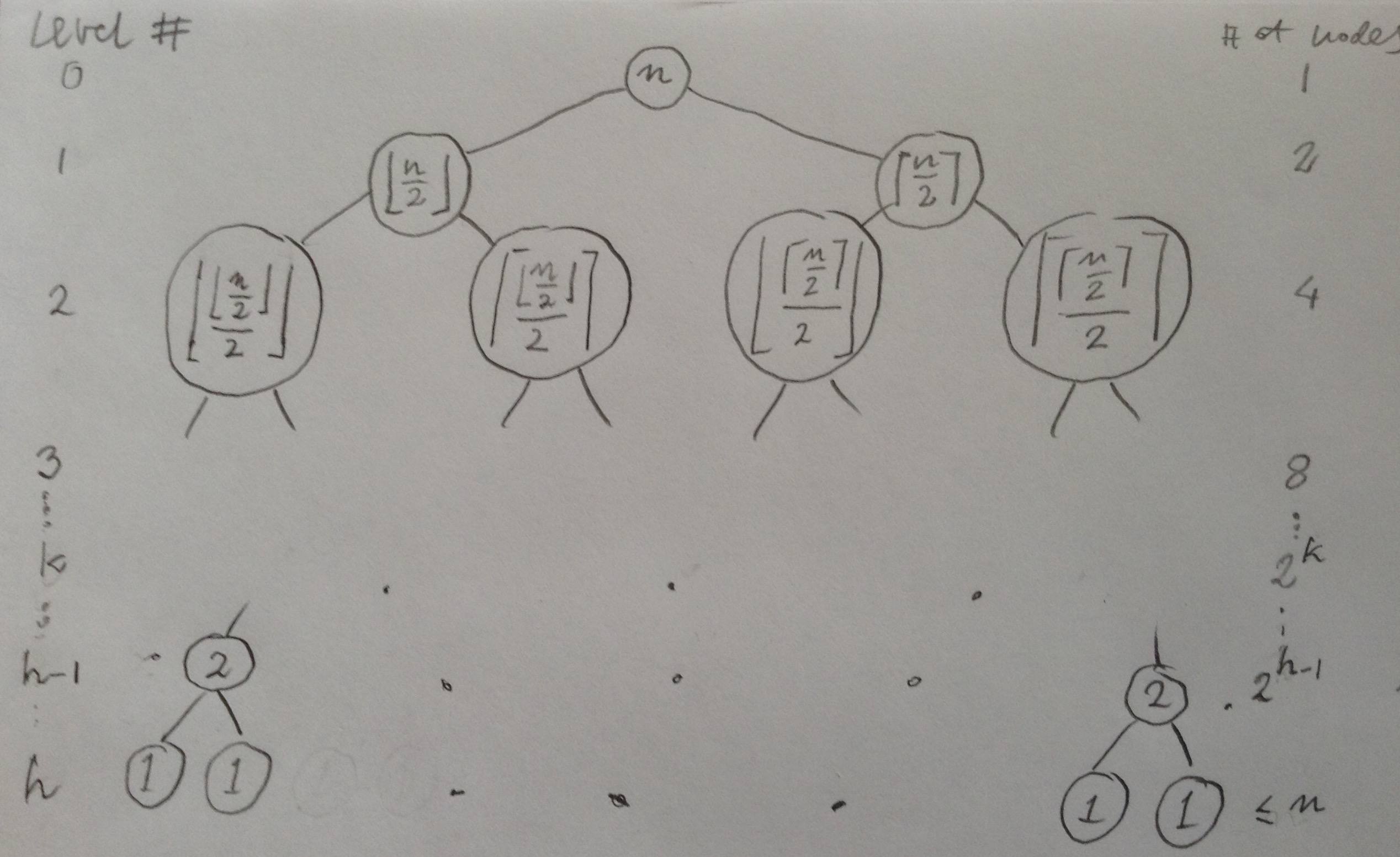} 
\caption{\label{fig:rectre} A sketch of the recursion 2-tree $ T $ for $ {\tt MergeSort} $ for a sufficiently large $ n $, with levels shown on the left and the numbers of nodes shown on the right. The nodes correspond to calls to $ {\tt MergeSort} $ and show sizes of (sub)arrays passed to those calls. The last level is $ h $; it only contains nodes with value 1. The root corresponds to the original call to $ {\tt MergeSort} $. If a call that is represented by a node $ p $ executes further recursive call to $ {\tt MergeSort} $ then these calls are represented by the children of $ p $; otherwise $ p $ is a leaf.}
\end{figure}
A recursive application of the equality\footnote{It can be verified separately for odd and even values of $ n $.}
\begin{equation} \label{eq:n/2=n+1/2}
 \lceil \frac{n}{2}\rceil = \lfloor \frac{n+1}{2}\rfloor 
\end{equation}
allows for rewriting of that tree onto one whose first four levels are shown on Figure~\ref{fig:rectre1}.

\begin{figure} [h]
$ \,\,\,\,\,\,\,\, \,\,\,\,\,\,\,\, \,\,\,\,\,\,\,\, $\includegraphics[width=.75\linewidth]{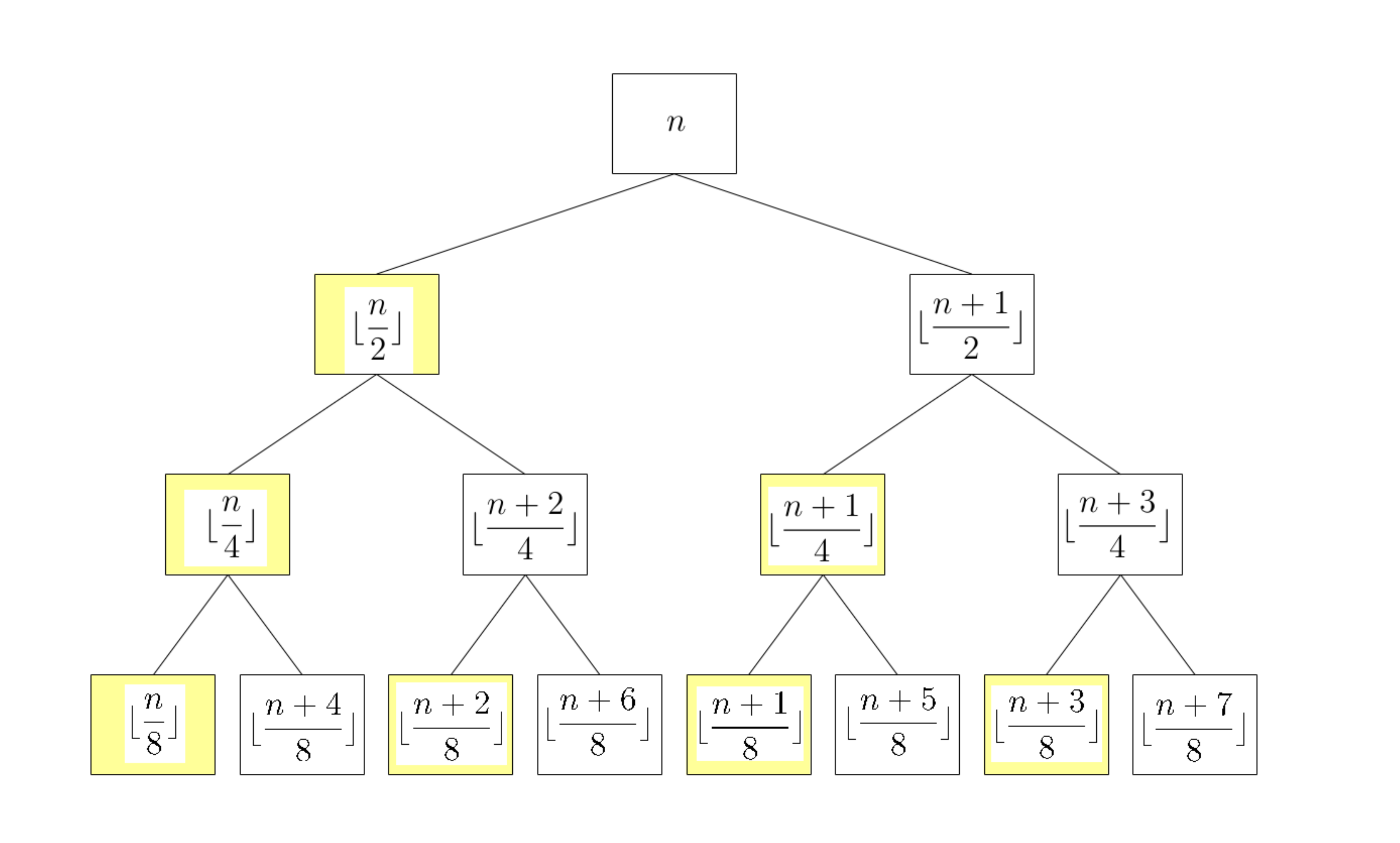}
\caption{\label{fig:rectre1} The first four levels of the recursion 2-tree $ T $ from Figure~\ref{fig:rectre}, with the equality (\ref{eq:n/2=n+1/2}) applied, recursively. The number of comparisons of keys performed in the best case by $ {\tt Merge} $ invoked in step~\ref{alg:mersor:item2:it2} of Algorithm~\ref{alg:mersor} as a result to a call to $ {\tt MergeSort} $ corresponding to a node of $ T $ is equal to the number that is shown in its left child, highlighted yellow. All the right children at level $ k \geq 1 $ of $ T $ show numbers of the form $ \lfloor \frac{n+i}{2^{k+1}} \rfloor $, where $ i \geq 2^k $. Thus all the left children (highlighted yellow) at level $ k \geq 1$ show numbers of the form $ \lfloor \frac{n+i}{2^{k+1}} \rfloor $, where $ i < 2^k $.}
\end{figure}

\subsection{Best-case and its characterization $ B(n) $} \label{subsec:bestchar}

The best-case arrays of sizes $ \lfloor \frac{n}{2} \rfloor $ and $ \lceil \frac{n}{2} \rceil $ for $ {\tt Merge} $, where $ n \geq 2 $, are those in which every element of the first array is less than all elements of the second one. In such a case, $ {\tt MergeSort} $ performs $ \lfloor \frac{n}{2} \rfloor $ of comps.

\medskip


\medskip

Thus the following recurrence relation for the least number $ B(n) $ of comparisons of keys that $ {\tt MergeSort} $ performs on any $ n $-element array is straightforward to derive from its description given by Algorithm~\ref{alg:mersor}.
\begin{equation} \label{eq:rec_rel1}
B(1) = 0,
\end{equation}
and, for $ n \geq 2 $,
\begin{equation} \label{eq:rec_rel2}
B(n) = \lfloor \frac{n}{2}\rfloor + B( \lfloor \frac{n}{2}\rfloor)  + B( \lceil \frac{n}{2}\rceil).
\end{equation}
Using the equality (\ref{eq:n/2=n+1/2}), the recurrence relation (\ref{eq:rec_rel2}) is equivalent to:
\begin{equation} \label{eq:rec_rel3}
B(n) = \lfloor \frac{n}{2}\rfloor + B( \lfloor \frac{n}{2}\rfloor)  + B( \lfloor \frac{n+1}{2}\rfloor)  .
\end{equation}
A graph of $ B(n) $ is shown on Figure~\ref{fig:Best-case_rec_solution}.

\begin{figure} [h]
\centering
\includegraphics[width=0.7\linewidth]{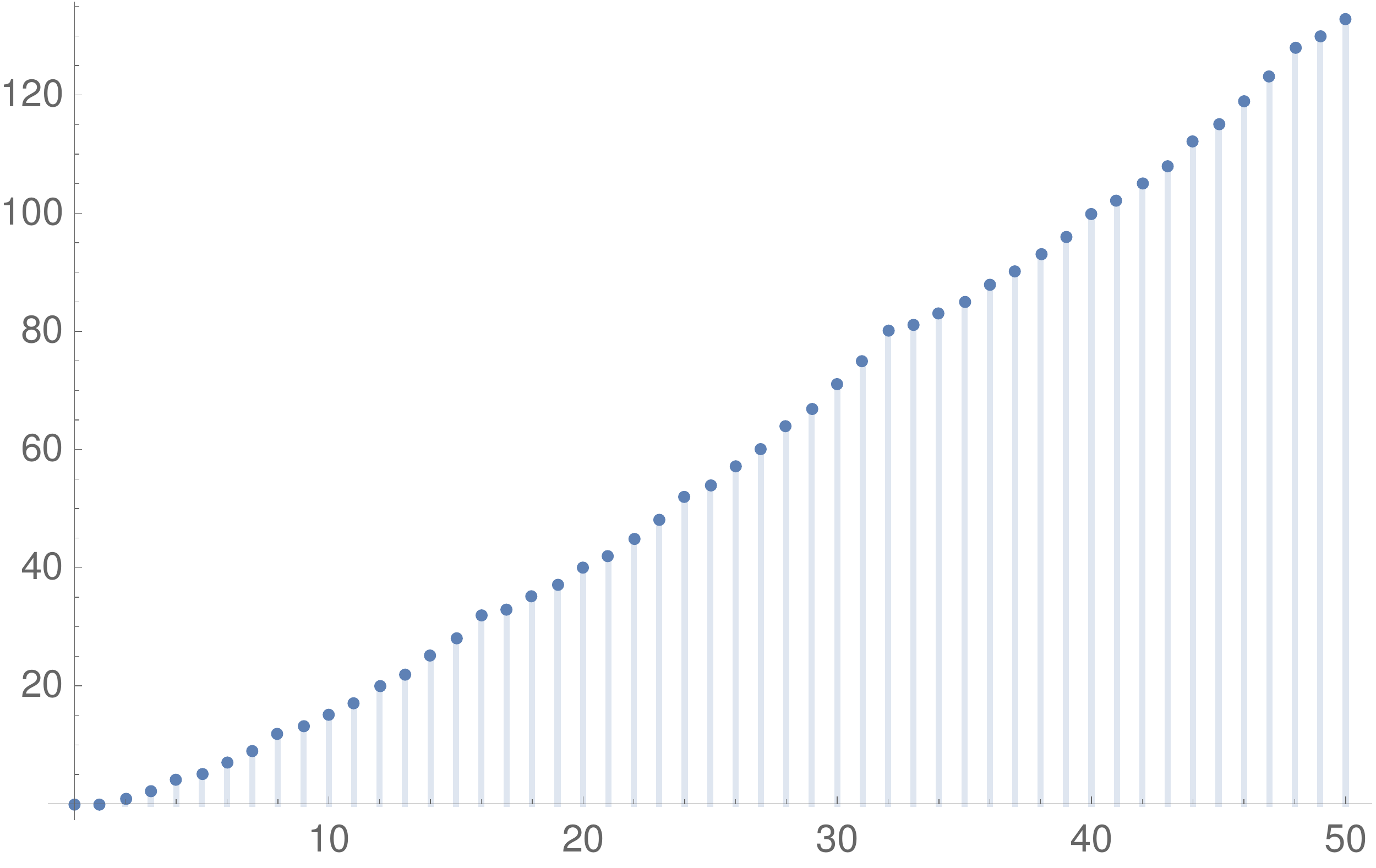}
\caption{Graph of the solution $ B(n) $ of the recurrence \eqref{eq:rec_rel1} and \eqref{eq:rec_rel2}.}
\label{fig:Best-case_rec_solution}
\end{figure}

\medskip

Unfolding the recurrence (\ref{eq:rec_rel3}) allows for noticing that the minimum number $ B(n) $ of comps performed by all calls to $ {\tt Merge} $ is equal to the sum of all values shown at nodes highlighted yellow in the recursion tree $ T $ of Figure~\ref{fig:rectre1}. They may be summed-up level-by-level.
One can notice from Figure~\ref{fig:rectre1} that the number of comps performed at any level $ k $ with the maximal number $ 2^k $ of nodes is given by this formula:
\begin{equation} \label{eq:formula_level_i}
\sum _{i=0} ^{2^k-1} \lfloor \frac{n+i}{2^{k+1}} \rfloor .
\end{equation}

What is not clear is whether all levels of the recursion tree $ T $ are maximal. Fortunately, the answer to this question does not depend on whether given instance of $ {\tt MergeSort} $ is running on a best-case array or on any other case of array. It has been known form a classic analysis of the worst-case running time of $ {\tt MergeSort} $ that every level of its recursion tree $ T $ that contains at least one non-leaf, or - in other words - a node that shows value $ p \geq 2 $, is maximal. \ref{sec:mergesort} page~\pageref{sec:mergesort} contains a detailed derivation of that fact. Thus all levels 0 through $ h-1 $ of $ T $ are maximal.
Therefore, the formula (\ref{eq:formula_level_i}) gives the number of comps for every level $ 0 \leq k \leq h-1 $.

\medskip

 The last level $ h $ of $ T $ may be not maximal because the level $ h-1 $ may contain leaves, or - in other words - nodes that show value $ p = 1 $, where $ p = \lfloor \frac{n+i}{2^{h-1}} \rfloor $ for some $ 0 \leq i \leq 2^{h-1} - 1 $, and as such do not have any children in level $ h $. However, for each such node the value of $ \lfloor \frac{p}{2} \rfloor = \lfloor \frac{n+i}{2^{h}} \rfloor $ is 0, so it can be included in summation (\ref{eq:formula_level_i}) without affecting its value even though the said value does not correspond to any node in level $ h $.
 Therefore, the formula (\ref{eq:formula_level_i}) gives the number of comps for level $  k = h $.

\medskip

   Also, the depth of $ T $ is $ \lfloor \lg n \rfloor $, as the Theorem~\ref{thm:depthRectree} page \pageref{thm:depthRectree} in \ref{sec:excerpts} states.
Thus the minimum number of comps performed by $ {\tt MergeSort} $ is given by this formula:

\begin{equation} \label{eq:formula1}
\sum _{k=0} ^{\lfloor \lg n \rfloor} \sum _{i=0} ^{2^k-1} \lfloor \frac{n+i}{2^{k+1}} \rfloor .
\end{equation}
Unfortunately, the summation~(\ref{eq:formula1}) contains $ n - 1 $ non-zero terms, so it cannot be evaluated quickly in its present form. Fortunately, its inner summation (\ref{eq:formula_level_i}) can be reduced to a closed-form formula. 

\subsection{\textit{Zigzag} function} \label{subsec:zig}

In order to reduce (\ref{eq:formula_level_i}) to a closed form, I am going to use function \textit{Zigzag} defined by:
\begin{equation} \label{eq:def_zigzag}
 \mbox{\textit{Zigzag}}\,(n) = \min (x - \lfloor x \rfloor, \lceil x \rceil - x) .
\end{equation}
 The following fact is instrumental for that purpose.

\medskip

\begin{thm} \label{thm:formula_level_Zigzag}
For every natural number n and every positive natural number m,
\begin{equation} \label{eq:formula_level_Zigzag}
\sum _{i=m} ^{2m-1} \lfloor \frac{n+i}{2m} \rfloor - \sum _{i=0} ^{m-1} \lfloor \frac{n+i}{2m} \rfloor  = 2m \times \mbox{\textit{Zigzag}}\,(\frac{n}{2m}),
\end{equation}
where \textit{Zigzag} is a function defined by \eqref{eq:def_zigzag} and visualized on Figure~\ref{fig:zigzag}.
\end{thm}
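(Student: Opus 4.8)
The plan is to reduce both sides to functions of the residue $r = n \bmod 2m$ and then reconcile them by a short case analysis. First I would write $n = 2mq + r$ with $q = \lfloor \frac{n}{2m}\rfloor$ and $0 \le r < 2m$. Since $\lfloor \frac{n+i}{2m}\rfloor = q + \lfloor \frac{r+i}{2m}\rfloor$, the constant $q$ contributes $mq$ to each of the two summations on the left-hand side and therefore cancels in their difference. Thus the left-hand side equals $\sum_{i=m}^{2m-1}\lfloor \frac{r+i}{2m}\rfloor - \sum_{i=0}^{m-1}\lfloor \frac{r+i}{2m}\rfloor$, which depends on $n$ only through $r$. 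This is reassuring, since the right-hand side also depends only on $r$: the fractional part of $\frac{n}{2m}$ is exactly $\frac{r}{2m}$, and $\mbox{\textit{Zigzag}}$ is determined by that fractional part.

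Next I would evaluate the two reduced sums by a direct lattice-point count. For $0 \le i \le 2m-1$ and $0 \le r < 2m$ we have $0 \le r+i < 4m$, so each term $\lfloor \frac{r+i}{2m}\rfloor$ lies in $\{0,1\}$ and equals $1$ precisely when $r+i \ge 2m$, i.e.\ when $i \ge 2m-r$. Counting the admissible indices in each range then gives $\sum_{i=m}^{2m-1}\lfloor \frac{r+i}{2m}\rfloor = \min(r,m)$ and $\sum_{i=0}^{m-1}\lfloor \frac{r+i}{2m}\rfloor = \max(0,\,r-m)$, so the left-hand side equals $\min(r,m) - \max(0,\,r-m)$.

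It then remains to compute the right-hand side and match. Because the fractional part $\frac{r}{2m}$ is at most $\frac12$ exactly when $r \le m$, one gets $2m\times\mbox{\textit{Zigzag}}\,(\frac{n}{2m}) = r$ for $0 \le r \le m$ (where the minimizing branch is $x-\lfloor x\rfloor$) and $2m\times\mbox{\textit{Zigzag}}\,(\frac{n}{2m}) = 2m - r$ for $m \le r < 2m$ (where the branch is $\lceil x\rceil - x$). On the other hand, the tent-shaped expression $\min(r,m) - \max(0,\,r-m)$ equals $r$ for $r \le m$ and $2m-r$ for $r \ge m$, so the two sides agree in every case.

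The computation is elementary, and I expect the only real care to be in the bookkeeping at the case boundaries: the value $r=m$, where both branches must return $m$; the endpoints $r=0$ and $r$ near $2m$, where the counts degenerate; and the verification that the integer part $q$ genuinely cancels. As an independent sanity check I would note that Hermite's identity $\sum_{i=0}^{2m-1}\lfloor \frac{n}{2m}+\frac{i}{2m}\rfloor = n$ forces the two half-sums to add to $n$, consistent with $\min(r,m)+\max(0,\,r-m) = r$; I would use this as a check rather than as the core of the argument.
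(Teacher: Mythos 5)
Your proof is correct and follows essentially the same route as the paper's: both reduce $n$ modulo $2m$ (writing $n = 2mq + r$) so that the integer part cancels, observe that the surviving floor terms are $0$--$1$ indicators, and identify the resulting count with the tent function $\min(r,\,2m-r) = 2m \times \mbox{\textit{Zigzag}}\,(\frac{n}{2m})$. The only cosmetic difference is that the paper first pairs the indices $i$ and $i+m$ to rewrite the left-hand side as $\sum_{i=0}^{m-1}\bigl(\lfloor \frac{n+i}{2m}+\frac{1}{2}\rfloor - \lfloor \frac{n+i}{2m}\rfloor\bigr)$ and counts the indices where that difference equals $1$, whereas you evaluate the two half-sums separately as $\min(r,m)$ and $\max(0,\,r-m)$; the concluding case analysis at $r \le m$ versus $r \ge m$ is the same in both.
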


\begin{figure} [h]
\centering
\includegraphics[width=0.7\linewidth]{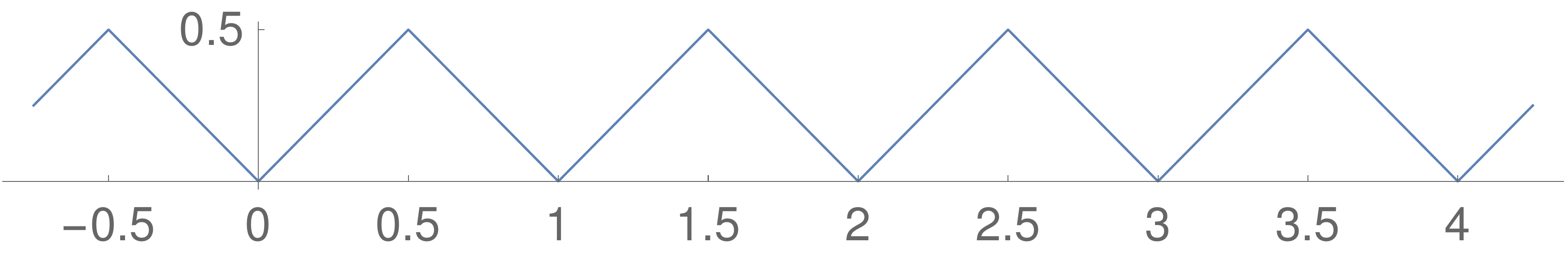}
\caption{Graph of function    $ \mbox{\textit{Zigzag}} (x) $ $ = $  $ \min (x - \lfloor x \rfloor, \lceil x \rceil - x) $.}
\label{fig:zigzag}
\end{figure}

\begin{proof} The equality (\ref{eq:formula_level_Zigzag}) can be verified experimentally, for instance, with a help of software for symbolic computation\footnote{\label{foot:Mat} I used Wolfram Mathematica for that purpose.}. The analytic proof is deferred to Section~\ref{sec:proof_formula_level_Zigzag}.
\end{proof}

\medskip

\begin{corollary} \label{cor:formula_level_Zigzag_1}
For every natural number n and every positive natural number m,
\begin{equation} \label{eq:formula_level_Zigzag_1}
\sum _{i=0} ^{m-1} \lfloor \frac{n+i}{2m} \rfloor  = \frac{n}{2}   -  m \times \mbox{\textit{Zigzag}}\,(\frac{n}{2m}),
\end{equation}
where \textit{Zigzag} is a function defined by \eqref{eq:def_zigzag} and visualized on Figure~\ref{fig:zigzag}.
\end{corollary}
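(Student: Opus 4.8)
The plan is to obtain the single sum on the left-hand side of \eqref{eq:formula_level_Zigzag_1} by combining Theorem~\ref{thm:formula_level_Zigzag} with a second, independent relation between the two half-sums appearing there. Write $S_1 = \sum_{i=0}^{m-1} \lfloor \frac{n+i}{2m}\rfloor$ and $S_2 = \sum_{i=m}^{2m-1} \lfloor \frac{n+i}{2m}\rfloor$. Theorem~\ref{thm:formula_level_Zigzag} already supplies the difference $S_2 - S_1 = 2m \times \mbox{\textit{Zigzag}}\,(\frac{n}{2m})$. If I can also evaluate the total $S_1 + S_2 = \sum_{i=0}^{2m-1}\lfloor \frac{n+i}{2m}\rfloor$ in closed form, then subtracting the two relations and dividing by $2$ will isolate $S_1$ and finish the proof.

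First I would compute the full sum $S_1 + S_2$. Setting $d = 2m$, this is $\sum_{i=0}^{d-1}\lfloor \frac{n}{d} + \frac{i}{d}\rfloor$, which by Hermite's identity $\sum_{i=0}^{d-1}\lfloor x + \frac{i}{d}\rfloor = \lfloor dx\rfloor$ evaluates to $\lfloor d \cdot \frac{n}{d}\rfloor = \lfloor n\rfloor = n$, since $n$ is a natural number. Should I wish to avoid citing Hermite by name, the same value follows from a direct counting argument: the numerators $n, n+1, \ldots, n+2m-1$ are $2m$ consecutive integers, and tracking how the floor $\lfloor \frac{n+i}{2m}\rfloor$ increments exactly once as $i$ runs through the block (namely at the multiple of $2m$ in that range) shows the floors sum telescopically to $n$.

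Finally, I would solve the two-by-two linear system
\[
S_1 + S_2 = n, \qquad S_2 - S_1 = 2m \times \mbox{\textit{Zigzag}}\,\Big(\frac{n}{2m}\Big).
\]
Subtracting the second equation from the first and dividing by $2$ yields $S_1 = \frac{n}{2} - m \times \mbox{\textit{Zigzag}}\,(\frac{n}{2m})$, which is precisely \eqref{eq:formula_level_Zigzag_1}. The only nontrivial step is the evaluation of the total sum as $n$; everything else is the elementary algebra of solving the linear system. I therefore expect the Hermite-identity computation — equivalently, the verification that $\sum_{i=0}^{2m-1}\lfloor \frac{n+i}{2m}\rfloor = n$ — to be the main obstacle, though it is a standard identity and the bulk of the analytic work has already been absorbed into Theorem~\ref{thm:formula_level_Zigzag}.
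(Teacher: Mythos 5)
Your proposal matches the paper's own proof essentially step for step: the paper likewise combines the difference supplied by Theorem~\ref{thm:formula_level_Zigzag} with the identity $\sum_{i=0}^{2m-1}\lfloor \frac{n+i}{2m}\rfloor = n$ (proved in its appendix by the same kind of direct counting argument you sketch, writing $n = k\cdot 2m + r$) and then solves the resulting two-equation system for the half-sum. Your invocation of Hermite's identity is just a named shortcut for that same appendix lemma, so there is no substantive difference.
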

\begin{proof}
First, let's note\footnote{Analytic proof of that fact is a straightforward exercise; see Appendix~\ref{sec:proof1} page \pageref{sec:proof1}.} that
\begin{equation} \label{eq:100}
\sum _{i=0} ^{2m-1} \lfloor \frac{n+i}{2m} \rfloor = n.
\end{equation}
From (\ref{eq:100}) I conclude
\begin{equation} \label{eq:110}
\sum _{i=0} ^{m-1} \lfloor \frac{n+i}{2m}  \rfloor +  \sum _{i=m} ^{2m-1} \lfloor \frac{n+i}{2m}  \rfloor = n.
\end{equation}
Solving equations (\ref{eq:formula_level_Zigzag}) and (\ref{eq:110}) for $ \sum _{i=0} ^{m-1} \lfloor \frac{n+i}{2m}  \rfloor $ yields (\ref{eq:formula_level_Zigzag_1}).
\end{proof}
Here is the closed-form of the summation (\ref{eq:formula_level_i}).
\begin{corollary} \label{cor:formula_level_Zigzag_2}
For every natural number n and every natural number k,
\begin{equation} \label{eq:formula_level_Zigzag_2}
\sum _{i=0} ^{2^k-1} \lfloor \frac{n+i}{2^{k+1}} \rfloor  = \frac{n}{2} -  2^k  \mbox{\textit{Zigzag}}\,(\frac{n}{2^{k+1}}),
\end{equation}
where \textit{Zigzag} is a function defined by \eqref{eq:def_zigzag} and visualized on Figure~\ref{fig:zigzag}.
\end{corollary}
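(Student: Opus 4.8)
The plan is to derive this as an immediate specialization of Corollary~\ref{cor:formula_level_Zigzag_1}. That corollary already establishes
\[
\sum_{i=0}^{m-1} \Bigl\lfloor \tfrac{n+i}{2m} \Bigr\rfloor = \frac{n}{2} - m \times \mbox{\textit{Zigzag}}\,\Bigl(\tfrac{n}{2m}\Bigr)
\]
for every natural number $n$ and every \emph{positive} natural number $m$, so all the analytic content is already available; the remaining work is merely to choose the correct value of $m$ and check that it is admissible.

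First I would observe that for every natural number $k$ the quantity $m = 2^k$ is a positive natural number, since $2^k \geq 1$ even when $k = 0$. Hence the hypotheses of Corollary~\ref{cor:formula_level_Zigzag_1} are satisfied with this choice of $m$. Substituting $m = 2^k$ then converts $2m$ into $2^{k+1}$, the upper summation index $m-1$ into $2^k - 1$, and the argument $\tfrac{n}{2m}$ of \textit{Zigzag} into $\tfrac{n}{2^{k+1}}$. The left-hand side therefore becomes $\sum_{i=0}^{2^k-1} \lfloor \tfrac{n+i}{2^{k+1}} \rfloor$ and the right-hand side becomes $\tfrac{n}{2} - 2^k\,\mbox{\textit{Zigzag}}\,(\tfrac{n}{2^{k+1}})$, which is precisely the asserted identity~\eqref{eq:formula_level_Zigzag_2}.

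There is no genuine obstacle in this argument: the single point meriting a moment's care is verifying that $m = 2^k$ meets the positivity requirement on $m$ across all admissible $k$ (in particular for $k = 0$), which it does. The heavy lifting resides entirely in Theorem~\ref{thm:formula_level_Zigzag} and the algebra of Corollary~\ref{cor:formula_level_Zigzag_1}; the present statement simply repackages the inner summation~\eqref{eq:formula_level_i} of the level-by-level comp count into the closed form required for the outer sum~\eqref{eq:formula1}.
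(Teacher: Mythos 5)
Your proof is correct and matches the paper's own argument exactly: the paper likewise proves this corollary by substituting $m = 2^k$ into Corollary~\ref{cor:formula_level_Zigzag_1}. Your added check that $2^k \geq 1$ is a positive natural number for all natural $k$ is a reasonable (if minor) point of rigor that the paper leaves implicit.
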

\begin{proof}
Substitute $ m = 2^k $ in (\ref{eq:formula_level_Zigzag_1}).
\end{proof}
The following theorem yields the formula (\ref{eq:formula_Zigzag*}) for the minimum number $ B(n) $ of comps performed by  $ {\tt MergeSort} $.
\begin{thm} \label{thm:formula_Zigzag}
For every natural number n,
\begin{equation} \label{eq:formula_Zigzag*}
\sum _{k=0} ^{\lfloor \lg n \rfloor} \sum _{i=0} ^{2^k-1} \lfloor \frac{n+i}{2^{k+1}} \rfloor  = \frac{n}{2}(\lfloor \lg n \rfloor + 1) -  \sum _{k=0} ^{\lfloor \lg n \rfloor} 2^k  \mbox{\textit{Zigzag}}\,(\frac{n}{2^{k+1}}),
\end{equation}
where \textit{Zigzag} is a function defined by \eqref{eq:def_zigzag} and visualized on Figure~\ref{fig:zigzag}.
\end{thm}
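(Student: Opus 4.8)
The plan is to reduce the double summation to the claimed closed form by a direct term-by-term substitution using Corollary~\ref{cor:formula_level_Zigzag_2}, which already supplies the closed form of the inner sum. Since all the analytic work has been front-loaded into Theorem~\ref{thm:formula_level_Zigzag} and its two corollaries, the present theorem is essentially a bookkeeping step: distribute the outer summation across the two terms produced by the corollary, then count how many times the constant term appears.

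Concretely, I would first apply Corollary~\ref{cor:formula_level_Zigzag_2} to each index $k$ in the range $0 \leq k \leq \lfloor \lg n \rfloor$, replacing the inner summation by
\begin{equation*}
\sum _{i=0} ^{2^k-1} \lfloor \tfrac{n+i}{2^{k+1}} \rfloor = \frac{n}{2} - 2^k\,\mbox{\textit{Zigzag}}\,(\tfrac{n}{2^{k+1}}).
\end{equation*}
Substituting this into the left-hand side of \eqref{eq:formula_Zigzag*} and invoking the linearity of the finite outer sum then splits it into a sum of the constant terms and a sum of the Zigzag terms.

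Next I would evaluate each piece. The outer index $k$ ranges over the $\lfloor \lg n \rfloor + 1$ integer values $0, 1, \dots, \lfloor \lg n \rfloor$, so the constant $\tfrac{n}{2}$ is added exactly that many times, yielding $\tfrac{n}{2}(\lfloor \lg n \rfloor + 1)$; this is precisely the source of the leading factor in the claimed formula. The remaining terms reassemble verbatim into $\sum _{k=0} ^{\lfloor \lg n \rfloor} 2^k\,\mbox{\textit{Zigzag}}\,(\tfrac{n}{2^{k+1}})$, carrying the minus sign out front. Combining the two pieces gives the right-hand side of \eqref{eq:formula_Zigzag*}.

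I expect no genuine obstacle here, since the analytic content lives entirely in Theorem~\ref{thm:formula_level_Zigzag}; the only point demanding care is the off-by-one count of summands, namely that an inclusive range from $0$ to $\lfloor \lg n \rfloor$ contributes $\lfloor \lg n \rfloor + 1$ copies of the constant, which is exactly what matches the factor $(\lfloor \lg n \rfloor + 1)$ in the statement. Having verified that count, the identity follows immediately.
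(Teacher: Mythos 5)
Your proposal is correct and follows exactly the paper's own proof: substitute the closed form from Corollary~\ref{cor:formula_level_Zigzag_2} for the inner sum, split the outer sum by linearity, and count the $\lfloor \lg n \rfloor + 1$ copies of $\frac{n}{2}$. Nothing is missing.
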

\begin{proof} 
\[\sum _{k=0} ^{\lfloor \lg n \rfloor} \sum _{i=0} ^{2^k-1} \lfloor \frac{n+i}{2^{k+1}} \rfloor  = \sum _{k=0} ^{\lfloor \lg n \rfloor}( \frac{n}{2}  -  2^k  \mbox{\textit{Zigzag}}\,(\frac{n}{2^{k+1}})) = \]
\[ = \frac{n}{2}(\lfloor \lg n \rfloor + 1) -  \sum _{k=0} ^{\lfloor \lg n \rfloor} 2^k  \mbox{\textit{Zigzag}}\,(\frac{n}{2^{k+1}}) .\]
\end{proof}

\medskip

\begin{figure} [h]
\centering
\includegraphics[width=0.7\linewidth]{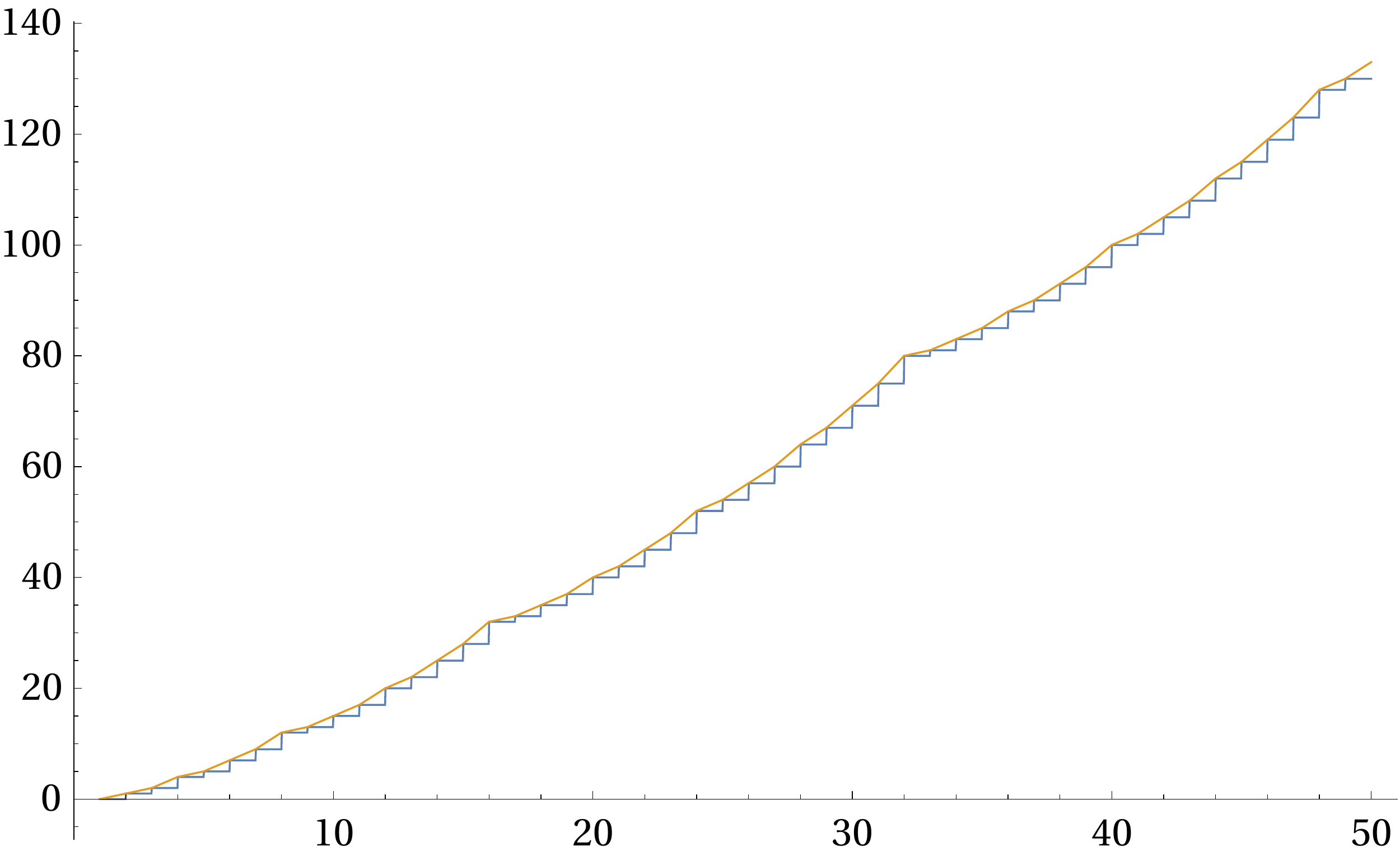}
\caption{Graphs of functions $ \sum _{k=0} ^{\lfloor \lg n \rfloor} \sum _{i=0} ^{2^k-1} \lfloor \frac{n+i}{2^{k+1}} \rfloor  $ (bottom line) and $ \frac{n}{2}(\lfloor \lg n \rfloor + 1) -  \sum _{k=0} ^{\lfloor \lg n \rfloor} 2^k  \mbox{\textit{Zigzag}}\,(\frac{n}{2^{k+1}}) $ (top line) of equality \eqref{eq:formula_Zigzag*}. They coincide with each other for all natural numbers $ n $.}
\label{fig:Best-case_MergeSort_worksheet_best-case}
\end{figure}

Formula (\ref{eq:formula_Zigzag*}), although not quite closed-form, comprises of summation with only $ \lfloor \lg n \rfloor +1 $ closed-form terms, so it may be evaluated in $ O(\log ^{c}) $ time, where $ c $ is a constant. I will show in Section~\ref{sec:frac} that (\ref{eq:formula_Zigzag*}) does not have a closed form.
Graphs of both sides of equality (\ref{eq:formula_Zigzag*}) are shown on Figure~\ref{fig:Best-case_MergeSort_worksheet_best-case}. Once can see that for natural numbers $ n $ they coincide with the solution $ B(n) $ of recurrences \eqref{eq:rec_rel1} and \eqref{eq:rec_rel2} visualized on Figure~\ref{fig:Best-case_rec_solution}.

\begin{cor}
 \label{thm:formula_Zigzag_B}
For every natural number n, the minimum number $ B(n) $ of comps that $ {\tt MergeSort} $ performs while sorting an $ n $-element array is:
\begin{equation} \label{eq:formula_Zigzag_B}
B(n)  = \frac{n}{2}(\lfloor \lg n \rfloor + 1) -  \sum _{k=0} ^{\lfloor \lg n \rfloor} 2^k  \mbox{\textit{Zigzag}}\,(\frac{n}{2^{k+1}}),
\end{equation}
where \textit{Zigzag} is a function defined by \eqref{eq:def_zigzag} and visualized on Figure~\ref{fig:zigzag}.
\end{cor}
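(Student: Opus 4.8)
The plan is to observe that the statement to be proved is just the combination of two facts that are essentially already in hand: first, that the minimum comp count $B(n)$ equals the double summation (\ref{eq:formula1}); and second, that (\ref{eq:formula1}) equals the \textit{Zigzag} expression on the right-hand side of (\ref{eq:formula_Zigzag_B}). The latter is precisely Theorem~\ref{thm:formula_Zigzag}, which I may assume. So the entire content of the corollary reduces to establishing the identity $B(n) = \sum_{k=0}^{\lfloor \lg n \rfloor} \sum_{i=0}^{2^k-1} \lfloor (n+i)/2^{k+1} \rfloor$ and then quoting that theorem.

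To establish that $B(n)$ equals (\ref{eq:formula1}), I would formalize the recursion-tree argument sketched in Subsection~\ref{subsec:bestchar}. Since no comps occur outside \texttt{Merge}, $B(n)$ is the sum, over all nodes of the tree $T$, of the best-case comp count of the \texttt{Merge} call at that node, which equals the value displayed in the node's left child, i.e. the yellow entries. Summing these level by level, the total over a \emph{maximal} level $k$ (one with all $2^k$ nodes present) is exactly the inner sum (\ref{eq:formula_level_i}). I would then invoke the two structural facts relegated to the appendices: that every level $0$ through $h-1$ is maximal --- a consequence of the classical fact that any level of $T$ containing a node of value $\geq 2$ is full --- and that the depth $h$ equals $\lfloor \lg n \rfloor$ (Theorem~\ref{thm:depthRectree}). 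For the possibly deficient last level $h$, I would note that each missing node would contribute a term $\lfloor (n+i)/2^{h} \rfloor = 0$, so extending the summation to the full range $0 \leq i \leq 2^h - 1$ adds nothing. Collecting the levels $k = 0, \dots, \lfloor \lg n \rfloor$ then yields exactly (\ref{eq:formula1}), and one application of Theorem~\ref{thm:formula_Zigzag} completes the chain.

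A cleaner, self-contained alternative I would keep in reserve is to bypass the tree entirely and verify directly that $F(n) := \sum_{k=0}^{\lfloor \lg n \rfloor} \sum_{i=0}^{2^k-1} \lfloor (n+i)/2^{k+1} \rfloor$ satisfies the defining recurrence (\ref{eq:rec_rel1})--(\ref{eq:rec_rel3}); by uniqueness of the solution this forces $F = B$. The base case $F(1) = 0$ is immediate, and the inductive step amounts to splitting the $k=0$ term off as $\lfloor n/2 \rfloor$ and reindexing the remaining double sum into the two half-size sums matching $B(\lfloor n/2 \rfloor)$ and $B(\lfloor (n+1)/2 \rfloor)$.

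I expect the main obstacle to be the rigorous justification of the two structural claims about $T$ --- maximality of all internal levels and the exact depth $\lfloor \lg n \rfloor$ --- rather than any algebra, since the delicate \textit{Zigzag} reduction is already supplied by Theorem~\ref{thm:formula_Zigzag}. In the inductive alternative, the analogous sticking point is controlling how $\lfloor \lg n \rfloor$ jumps at powers of two when relating $F(n)$ to $F(\lfloor n/2 \rfloor)$ and $F(\lfloor (n+1)/2 \rfloor)$, which is where the floor/ceiling bookkeeping must be handled with care.
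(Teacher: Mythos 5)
Your proposal is correct and takes essentially the same route as the paper: the corollary's proof in the paper is implicit, consisting exactly of the Subsection~\ref{subsec:bestchar} recursion-tree argument (maximal levels, zero terms for leaves, depth bound) that identifies $B(n)$ with the double sum \eqref{eq:formula1}, followed by one application of Theorem~\ref{thm:formula_Zigzag}. The only blemish, inherited verbatim from the paper's own text, is citing Theorem~\ref{thm:depthRectree} as asserting depth $\lfloor \lg n \rfloor$ when it actually states $h = \lceil \lg n \rceil$; this is harmless here because any level beyond $h-1$ contributes only zero terms to \eqref{eq:formula1}.
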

%
%
%

\section{A fractal in $ B(n) $} \label{sec:frac}

A deceitfully simple expression

\begin{equation} \label{eq:sum_zigzag}
 \sum _{k=0} ^{\lfloor \lg x \rfloor} 2^{k+1}  \mbox{\textit{Zigzag}}\,(\frac{x}{2^{k+1}}),
\end{equation}
half of which occurs in formula (\ref{eq:formula_Zigzag_B}) of Corollary~\ref{thm:formula_Zigzag_B}, is a formidable adversary for those who may try  to turn it into a closed form, although the time required for its evaluation for any given $ n $ is $ O (\log ^ c) $ \footnote{So, to all practical purposes, (\ref{eq:formula_Zigzag_B}) is a closed-form formula.}. That does not come as a surprise, taking into account that its graph, shown on Figure~\ref{fig:Best-case_MergeSort_worksheet_sum_zigzag}, bears a resemblance of fractal. This can be easily seen as soon as a sawtooth function $ 2^{\lfloor \lg x \rfloor + 1} - x $ is subtracted from it, yielding the function $ F(x) $ given by

\begin{equation} \label{eq:def_F}
F(x) = \sum _{k=0} ^{\lfloor \lg x \rfloor} 2^{k+1}  \mbox{\textit{Zigzag}}\,(\frac{x}{2^{k+1}}) - 2^{\lfloor \lg x \rfloor + 1} + x.
\end{equation}

\medskip

\begin{figure} [h]
\centering
\includegraphics[width=0.7\linewidth]{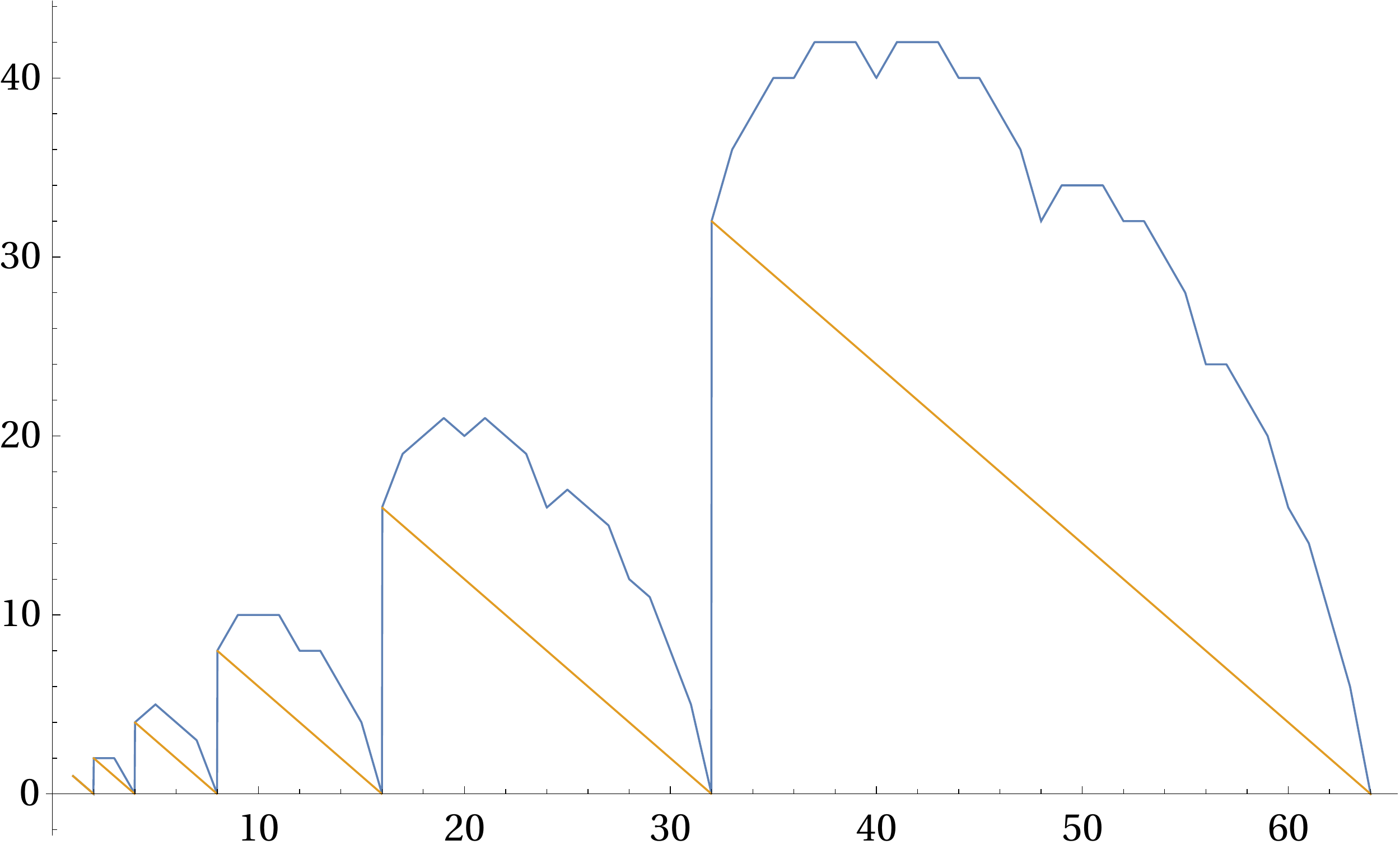}
\caption{A graph of function  $\sum _{k=0} ^{\lfloor \lg x \rfloor} 2^{k+1}  \mbox{\textit{Zigzag}}\,(\frac{x}{2^{k+1}}) $ plotted against a sawtooth function $ 2^{\lfloor \lg x \rfloor + 1} - x $.}
\label{fig:Best-case_MergeSort_worksheet_sum_zigzag}
\end{figure}

\noindent Since $ \frac{1}{2} \leq \frac{x}{2^{\lfloor \lg x \rfloor + 1}} < 1 $, equality \eqref{eq:def_zigzag} implies
\begin{equation} \nonumber
\mbox{\textit{Zigzag}}\,(\frac{x}{2^{\lfloor \lg x \rfloor+1}}) = 1 - \frac{x}{2^{\lfloor \lg x \rfloor+1}},
\end{equation} 
or
\begin{equation} \label{eq:zigprop1}
 2^{\lfloor \lg x \rfloor+1}  \mbox{\textit{Zigzag}}\,(\frac{x}{2^{\lfloor \lg x \rfloor+1}})
= 2^{\lfloor \lg x \rfloor+1} - x.
\end{equation}
The equality \eqref{eq:zigprop1} simplifies definition \eqref{eq:def_F} of function $ F $ to
\begin{equation} \label{eq:fractal}
F(x) = \sum _{k=1} ^{\lfloor \lg x \rfloor} 2^{k}  \mbox{\textit{Zigzag}}\,(\frac{x}{2^{k}}),
\end{equation}
visualized on Figure~\ref{fig:Best-case_MergeSort_worksheet_sum_zigzag_minus_chainsaw}.
\begin{figure} [h]
\centering
\includegraphics[width=0.7\linewidth]{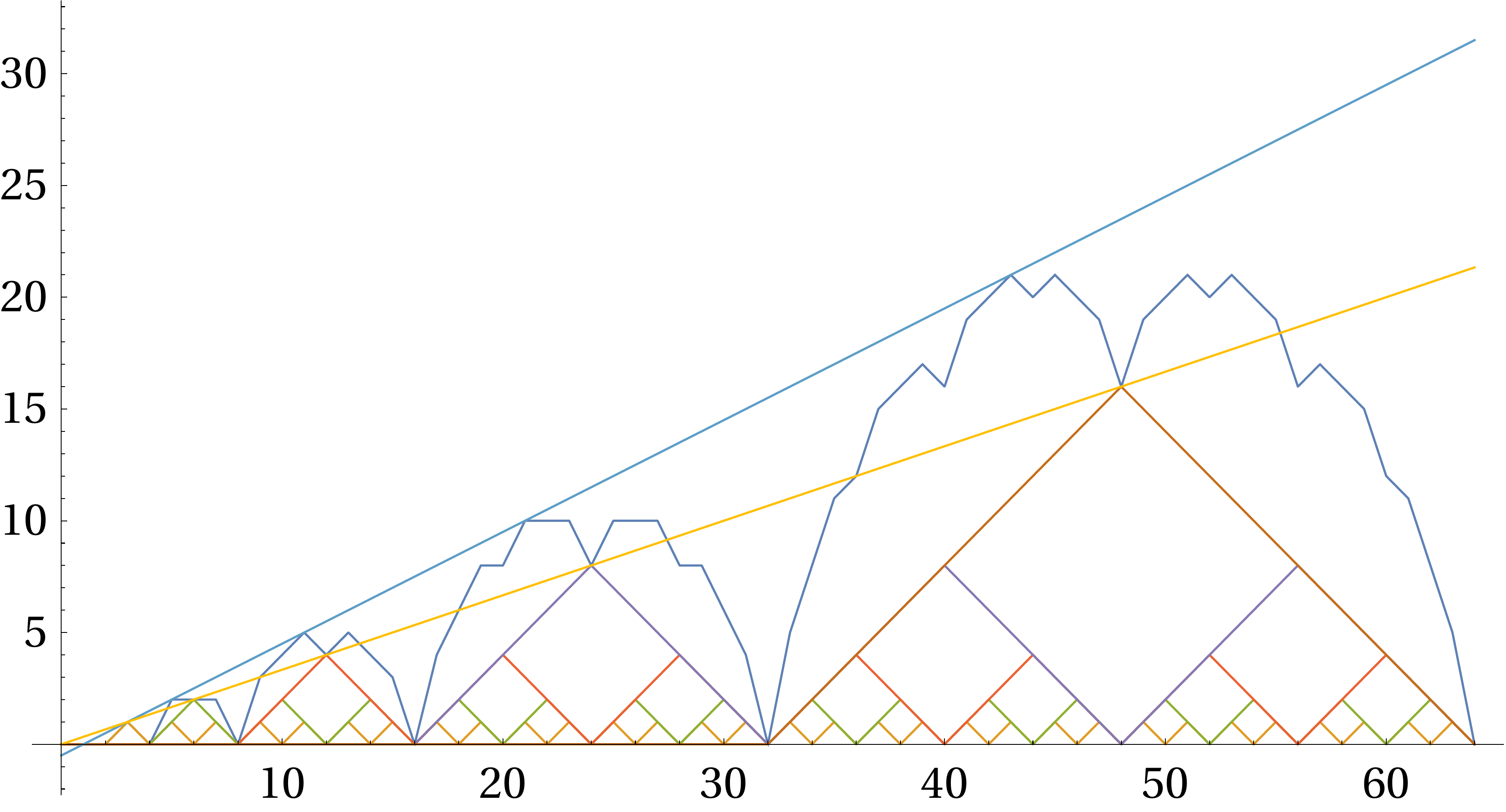}
\caption{A graph of function $  F(x) =  \sum _{k=1} ^{\lfloor \lg x \rfloor} 2^{k}  \mbox{\textit{Zigzag}}\,(\frac{x}{2^{k}})$ plotted below its tight linear upper bound $ y = \frac{x-1}{2} $ (if can be shown that $ F(x) = \frac{x-1}{2} $ whenever  $ x $ $ = $
$ \frac{1}{3}(2^{k+1} + (-1)^k) $ for some integer $ k \geq 0 $); also shown below $  F(x) $ are the terms $ 2^{k} \mbox{\textit{Zigzag}}\,(\frac{x}{2^{k}}) $ of the summation and their tight linear upper bound $ y = \frac{x}{3} $.}
\label{fig:Best-case_MergeSort_worksheet_sum_zigzag_minus_chainsaw}
\end{figure}

\medskip

The function $ F $ is a fractal with quasi similarity that repeats at intervals of exponentially growing length. It is a union 
\begin{equation} \label{eq:F_union_def}
F = \bigcup _{k=0} ^{\infty}  f_k 
\end{equation}
of functions $ f_k $, each having an interval $ [ 2^k, 2^{k+1}) $ as its domain. In other words, for every integer $ k \geq 0 $,
\begin{equation} \label{eq:def_f_k}
f_k = F \restriction [ 2^k, 2^{k+1}),
\end{equation}
which, of course, yields \eqref{eq:F_union_def}.
\begin{figure} [h]
\centering
\includegraphics[width=0.7\linewidth]{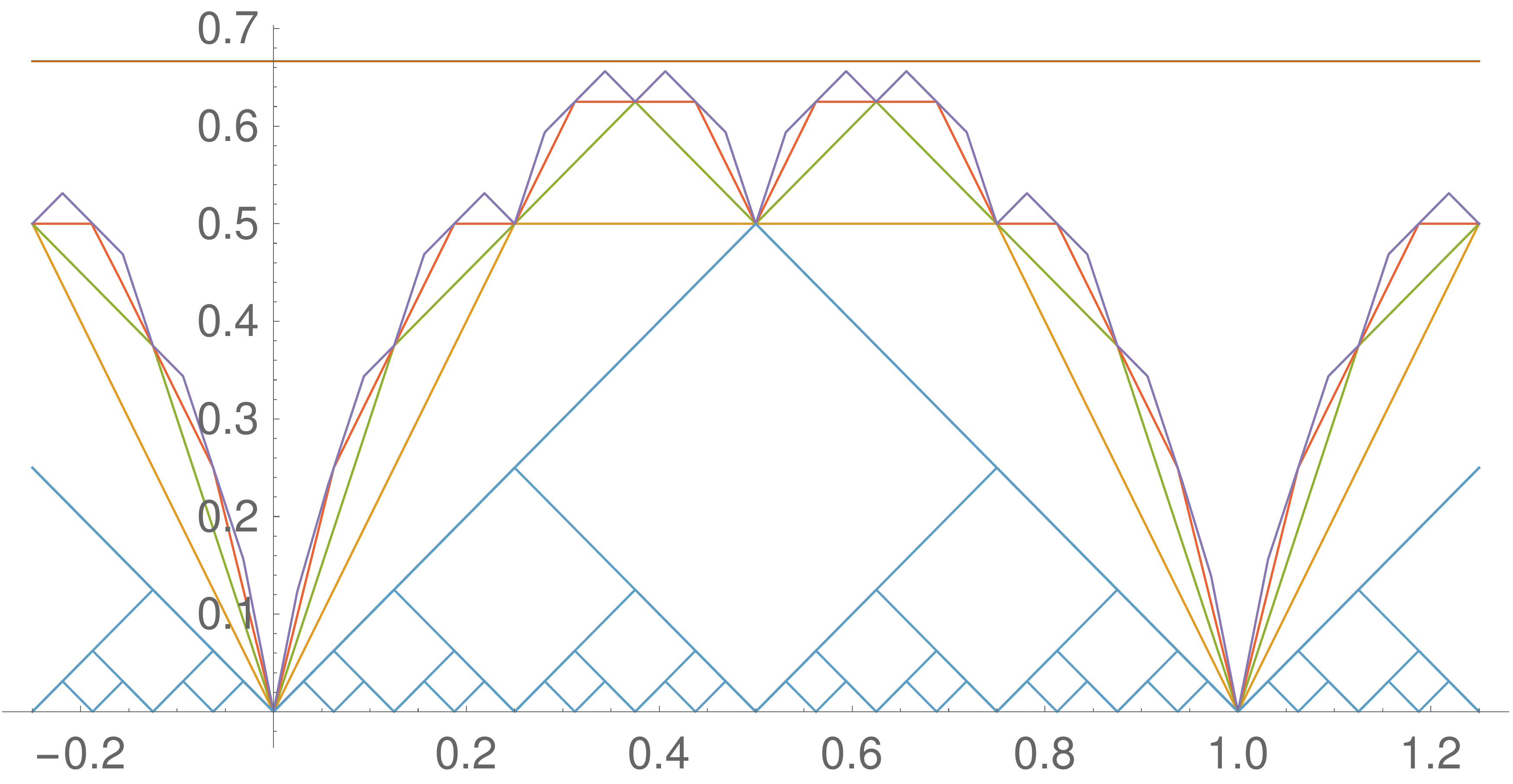}
\caption{A graph of the first six (the first one is $ 0 $) normalized parts of function of Figure~\ref{fig:Best-case_MergeSort_worksheet_sum_zigzag_minus_chainsaw} plotted against the line $ y = \sum _{i=0} ^{\infty} \frac{1}{2^{2i+1}} = \frac{2}{3} $. Also shown (in blue) are the first five terms $ \frac{1}{2^i} \mbox{\textit{Zigzag}} \, (2^i x) $, $ i = 0,...,4 $, of sums that occur in the formula \eqref{eq:f_k_simplified} for $ \tilde{f}_k(x) $; for each integer $ n $ and all $ x \in [n, n+1) $, 
their parts above the $ X $-axis restricted to $ [n, n+1) $ visualize a fragment of an infinite binary search \textit{trie} $ T $ defined as the set of shortest binary expansions of $ x - \lfloor x \rfloor $ with the last digit $ 1 $ (if the said binary expansion is finite) being interpreted as the sequence terminator; in particular, the root of $ T $ is $ .1 $, and if $ {\bf a} $ is a finite binary sequence then the children of binary expansion $ .{\bf a}1 $ are $ .{\bf a}01 $ and $ .{\bf a}11 $ .} 
\label{fig:Best-case_MergeSort_worksheet_sum_zigzag_minus_chainsaw_progression}
\end{figure}

Let $ \hat{f}_k $ be the normalized $ f_k $ on interval $ [0,1) $, defined by:
\begin{equation} \label{eq:f_k_normalized}
\hat{f}_k (x) = \frac{1}{2^k} f_k (2^k (x + 1)) ,
\end{equation} 
and $ \tilde{f}_k $ be the periodized $ \hat{f}_k $ by composing it with a sawtooth function $ x- \lfloor x \rfloor $,\footnote{The fractional part of $ x $.} defined by:
\begin{equation} \label{eq:f_k_periodized}
 \tilde{f}_k(x) = \hat{f}_k(x- \lfloor x \rfloor).
\end{equation}
Contracting definitions \eqref{eq:def_f_k}, \eqref{eq:f_k_normalized}, and \eqref{eq:f_k_periodized}, yields
\begin{equation} \label{eq:f_k_simplified_a}
 \tilde{f}_k(x) = \frac{1}{2^k} F (2^k (x- \lfloor x \rfloor + 1)).
\end{equation}
One can compute\footnote{An elementary geometric argument based on the graph visualized on Figure~\ref{fig:Best-case_MergeSort_worksheet_sum_zigzag_minus_chainsaw_progression} will do.} from \eqref{eq:f_k_simplified_a} the following alternative formula for $  \tilde{f}_k(x) $:
\begin{equation} \label{eq:f_k_simplified}
 \tilde{f}_k(x) = \sum _{i=0} ^{k-1} \frac{1}{2^i} \mbox{\textit{Zigzag}} \, (2^i x).
\end{equation}
Figure~\ref{fig:Best-case_MergeSort_worksheet_sum_zigzag_minus_chainsaw_progression} shows functions $ \tilde{f}_0,...,\tilde{f}_6 $ drawn on the same graph.

\medskip

Since each function $ f_k $, and - therefore - each function $ \hat{f}_k $, and - therefore - each function $  \tilde{f}_k $, are a result of smaller and smaller triangles piled, originating in function \textit{Zigzag} of definition (\ref{eq:fractal}) of function $ F $, on one another as shown on Figure~\ref{fig:Best-case_MergeSort_worksheet_sum_zigzag_minus_chainsaw_progression}, for any integers $ 0 \leq i < j $, $  \tilde{f}_i $ linearly interpolates $  \tilde{f}_j $. Because of that, each $  \tilde{f}_i $ linearly interpolates the limit $ \tilde{F} $ of all $  \tilde{f}_k $s defined by:
\begin{equation} \label{eq:def_lim}
\tilde{F}(x) = \lim _{k \rightarrow \infty}  \tilde{f}_k (x),
\end{equation}
as Figure~\ref{fig:Takagi_function_progression_5} illustrates.
An application of \eqref{eq:f_k_simplified} to \eqref{eq:def_lim} yields:
\begin{equation} \label{eq:def_lim2}
\tilde{F}(x) = \sum _{i=0} ^{\infty} \frac{1}{2^i} \mbox{\textit{Zigzag}} \, (2^i x).
\end{equation}

\begin{figure}[h]
\centering
\includegraphics[width=0.7\linewidth]{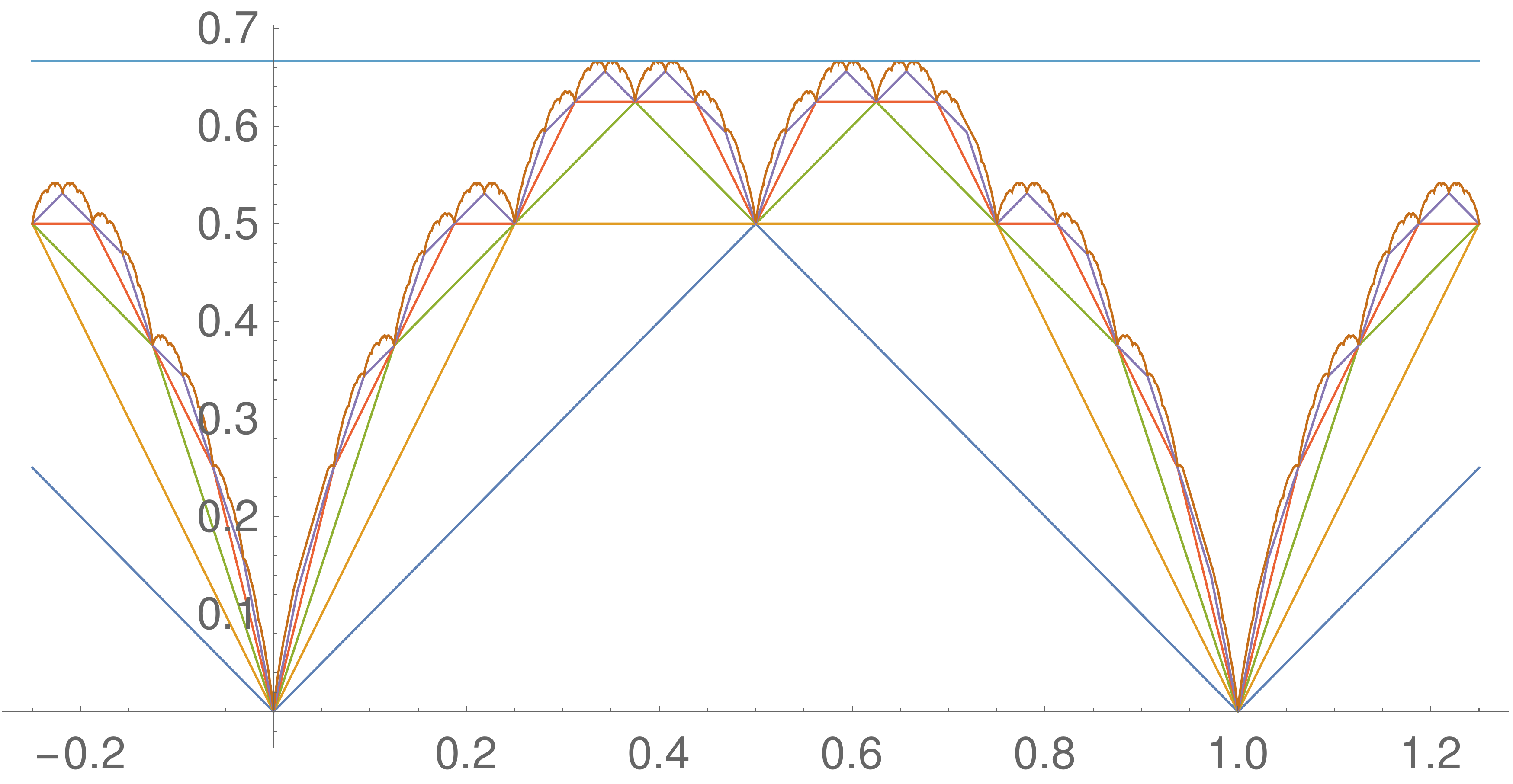}
\caption{Functions functions $ \tilde{f}_0 (x),\tilde{f}_1 (x), ... $ and their limit (the topmost curve) $ \tilde{F} (x) $ given by (\ref{eq:def_lim2}). Collapsing the \textit{Zigzag}$ (x) $ would yield the same, albeit scaled-down (by the factor of 2) pattern $ \frac{1}{2}\tilde{F}(2x) $, as (\ref{eq:def_lim2}) does imply.}
\label{fig:Takagi_function_progression_5}
\end{figure}


Since for every integer $ n $ and $ i \geq k $, $ 2^i \frac{n}{2^k} $ is integer, $ \mbox{\textit{Zigzag}} \, (2^i \frac{n}{2^k}) = 0 $. Therefore, by virtue of \eqref{eq:f_k_simplified} and \eqref{eq:def_lim2}, for every non-negative integer $ k $ and  $ n $,
\begin{equation} \label{eq:Ftilde=ftilde_k}
\tilde{F}(\frac{n}{2^k}) = \tilde{f}_k(\frac{n}{2^k}).
\end{equation}
This and \eqref{eq:f_k_simplified} eliminate the need for infinite summation\footnote{As it appears in \eqref{eq:def_lim2}} while computing $ \tilde{F}(\frac{n}{2^k}) $.

\medskip

It can be shown that although a continuous function, $ \tilde{F} $ is nowhere-differentiable. As such, it does not have a closed-form formula as any closed-form formula  on a real interval must define a function have a derivative at every point of that interval, except for a non-dense set of its points. Since  $ \tilde{F} $ can be expressed in function, described by a closed-form formula, of the right-hand side of formula \eqref{eq:formula_Zigzag*}, the latter does not have a closed-form formula, either.

\begin{thm} \label{thm:no_closed_form}
There is no closed-form formula $ \varphi(n) $ the values of which coincide with $   \sum _{k=0} ^{\lfloor \lg n \rfloor} 2^k  \mbox{\textit{Zigzag}}\,(\frac{n}{2^{k+1}}), $ for all positive integers $ n $, that is, for every closed-form formula $ \varphi(n) $ on function \textit{Zigzag} there is a positive $ n $ such that
\begin{equation} \label{eq:no_closed_form}
  \sum _{k=0} ^{\lfloor \lg n \rfloor} 2^k  \mbox{\textit{Zigzag}}\,(\frac{n}{2^{k+1}}) \neq \varphi(n),
\end{equation}
where \textit{Zigzag} is a function defined by \eqref{eq:def_zigzag} and visualized on Figure~\ref{fig:zigzag}.
\end{thm}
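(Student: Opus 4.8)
The plan is to reduce the statement to the nowhere-differentiability of the limit function $\tilde F$ of \eqref{eq:def_lim2}, exploiting the principle recalled in the text just before the theorem: any function given by a closed-form formula must be differentiable at every point of its domain outside a nowhere-dense exceptional set. That principle already shows $\tilde F$ admits no closed form, so it will suffice to exhibit a closed-form transformation carrying the summation $S(n):=\sum_{k=0}^{\lfloor \lg n\rfloor}2^{k}\,\mbox{\textit{Zigzag}}(\frac{n}{2^{k+1}})$ onto $\tilde F$; a closed form $\varphi$ for $S$ would then furnish one for $\tilde F$, a contradiction.

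First I would pin down the regularity of $\tilde F$. I would verify that $\tilde F$ of \eqref{eq:def_lim2} is, under the identification $\mbox{\textit{Zigzag}}(x)=\mathrm{dist}(x,\mathbb Z)$ on each unit interval, the classical Takagi (blancmange) function, and invoke the standard oscillation estimate: at every $x$ the slopes of the dyadic secants of the partial sums fail to converge, so no two-sided derivative exists anywhere, while uniform convergence of \eqref{eq:def_lim2} gives continuity. Hence $\tilde F$ is continuous and nowhere differentiable, and by the closed-form/differentiability principle it possesses no closed form.

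Next I would construct the bridge identity linking $S$ to $\tilde F$ at dyadic points. Re-indexing by $j=k+1$ gives $2S(n)=\sum_{j=1}^{\lfloor \lg n\rfloor+1}2^{j}\,\mbox{\textit{Zigzag}}(\frac{n}{2^{j}})$. Comparing this with the finite formula \eqref{eq:f_k_simplified} for $\tilde f_k$ and with the dyadic collapse \eqref{eq:Ftilde=ftilde_k} (valid because $\mbox{\textit{Zigzag}}$ annihilates integers), a short index reversal yields, for every dyadic $y\in[\frac12,1)$ and every integer $M$ exceeding the dyadic level of $y$, the clean relation $\tilde F(y)=\frac{S(2^{M}y)}{2^{M-1}}$; in particular $\tilde F(\frac{n}{2^{m+1}})=\frac{S(n)}{2^{m}}$ with $m=\lfloor \lg n\rfloor$. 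As $n$ ranges over the positive integers the arguments $\frac{n}{2^{m+1}}$ exhaust the dyadic rationals of $[\frac12,1)$, which are dense there.

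Finally I would close by contradiction. Assuming a closed-form $\varphi$ with $\varphi(n)=S(n)$ for all positive integers $n$, the bridge identity expresses $\tilde F$ at each dyadic $y\in[\frac12,1)$ as $\varphi(2^{M}y)/2^{M-1}$, a fixed closed-form combination of $\varphi$ and the elementary function $x\mapsto 2^{\lfloor\lg x\rfloor}$. Since $\tilde F$ is continuous and the dyadics are dense, this forces a closed-form representation of $\tilde F$ on the interval $[\frac12,1)$, where $\tilde F$ is nonetheless nowhere differentiable — contradicting the closed-form/differentiability principle. As $S$ is, through \eqref{eq:formula_Zigzag*}, merely a closed-form shift of the full right-hand side, the theorem follows. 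The hard part will be making this transfer airtight: the real function $x\mapsto S(x)$ is itself only piecewise linear, hence differentiable off a nowhere-dense set, so the argument cannot merely invoke non-differentiability of $S$. It must instead show that the normalization–periodization passage $\tilde f_k\to\tilde F$ is a genuinely closed-form operation on real arguments — carried out through the exact dyadic self-similarity rather than the limit in \eqref{eq:def_lim2}, and absorbing the level-dependence of $M$ into the floor/logarithm structure — so that closed-formness truly propagates from $\varphi$ to $\tilde F$.
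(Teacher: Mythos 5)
Your proposal is correct and takes essentially the same route as the paper's own proof: a dyadic bridge identity linking the integer-indexed sum to the Blancmange function $\tilde{F}$, then density of dyadic rationals plus continuity to transfer a hypothetical closed form to $\tilde{F}$ as a limit (in the spirit of the $2^x$ example), and finally the nowhere-differentiability of $\tilde{F}$ set against the principle that closed-form functions are differentiable off a non-dense set. The only cosmetic difference is that you bridge $S(n)=\sum_{k=0}^{\lfloor\lg n\rfloor}2^{k}\mbox{\textit{Zigzag}}\,(\frac{n}{2^{k+1}})$ to $\tilde{F}$ directly on $[\frac{1}{2},1)$ via $\tilde{F}(\frac{n}{2^{\lfloor\lg n\rfloor+1}})=\frac{S(n)}{2^{\lfloor\lg n\rfloor}}$, whereas the paper detours through $B(n)$ and works on $[1,2)$ via Lemma~\ref{lem:cff_F_tilde} and Lemma~\ref{lem:main_no_cff}.
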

\begin{proof}
Follows from the above discussion. A more detailed proof is deferred to Section~\ref{sec:proof_no_closed_form}. 
\end{proof}

\medskip

This way I arrived at the following conclusion.

\medskip

\begin{cor} \label{cor:no_closed_form}
There is no closed-form formula for $ B(n) $.
\end{cor}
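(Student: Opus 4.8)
The plan is to obtain the corollary as a one-line consequence of Theorem~\ref{thm:no_closed_form}, exploiting the fact that $B(n)$ and the \textit{Zigzag} sum differ only by a term that is itself manifestly closed-form. Recall from Corollary~\ref{thm:formula_Zigzag_B} that for every positive integer $n$,
\[ B(n) = \frac{n}{2}(\lfloor \lg n \rfloor + 1) - \sum_{k=0}^{\lfloor \lg n \rfloor} 2^k \mbox{\textit{Zigzag}}\,(\frac{n}{2^{k+1}}). \]
The leading summand $\frac{n}{2}(\lfloor \lg n \rfloor + 1)$ is assembled from $n$, the constant $\frac{1}{2}$, multiplication, addition, and the operation $\lfloor \lg(\cdot) \rfloor$, so it qualifies as a closed-form formula under the convention adopted for Theorem~\ref{thm:no_closed_form}.

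First I would argue by contradiction. Suppose there were a closed-form formula $\psi(n)$ with $\psi(n) = B(n)$ for all positive integers $n$. I would then set $\varphi(n) = \frac{n}{2}(\lfloor \lg n \rfloor + 1) - \psi(n)$. Since the class of closed-form formulas is closed under subtraction and contains $\frac{n}{2}(\lfloor \lg n \rfloor + 1)$, the function $\varphi$ is again a closed-form formula on \textit{Zigzag}. Solving the identity of Corollary~\ref{thm:formula_Zigzag_B} for the \textit{Zigzag} sum then gives, for every positive integer $n$,
\[ \sum_{k=0}^{\lfloor \lg n \rfloor} 2^k \mbox{\textit{Zigzag}}\,(\frac{n}{2^{k+1}}) = \frac{n}{2}(\lfloor \lg n \rfloor + 1) - B(n) = \frac{n}{2}(\lfloor \lg n \rfloor + 1) - \psi(n) = \varphi(n), \]
so $\varphi$ coincides with the sum at every positive integer. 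This contradicts Theorem~\ref{thm:no_closed_form}, and hence no such $\psi$ can exist.

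All the mathematical content sits in Theorem~\ref{thm:no_closed_form} (ultimately in the nowhere-differentiability of $\tilde{F}$); the corollary is a pure transfer of that result, so I expect no genuine computation. The one point I would be careful about is purely definitional: I must confirm that the notion of closed-form formula intended in the corollary is the same as in Theorem~\ref{thm:no_closed_form}, and in particular that this class is closed under subtraction and admits the mixed term $\frac{n}{2}(\lfloor \lg n \rfloor + 1)$ involving $\lfloor \lg n \rfloor$. Once that is granted---as the paper's usage of closed-form formulas ``on \textit{Zigzag}'' does---the reduction above is immediate.
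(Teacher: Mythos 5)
Your proof is correct and follows essentially the same route as the paper's: the paper likewise argues that a closed-form formula for $B(n)$ would, via the identity \eqref{eq:formula_Zigzag_B}, yield a closed-form formula for $\sum_{k=0}^{\lfloor \lg n \rfloor} 2^k \mbox{\textit{Zigzag}}\,(\frac{n}{2^{k+1}})$, contradicting Theorem~\ref{thm:no_closed_form}. Your version merely makes explicit the subtraction step and the closure properties that the paper leaves implicit.
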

\begin{proof}
A closed-form formula for $ B(n) $ would, by virtue of \eqref{eq:formula_Zigzag_B} page~\pageref{eq:formula_Zigzag_B}, yield a closed-form formula for $ \sum _{k=0} ^{\lfloor \lg n \rfloor} 2^k  \mbox{\textit{Zigzag}}\,(\frac{n}{2^{k+1}}) $, which  by Theorem~\ref{thm:no_closed_form}  does not exist.
\end{proof}
\medskip

\textbf{\textit{Note}}. One can apply the reverse transformations to those used in Section~\ref{sec:frac} on function $ \tilde{F} $  and construct a fractal function $ \breve{F} $, shown on Figure~\ref{fig:Best-case_MergeSort_worksheet_sum_zig_minus_chainsaw_w_upper}, 
given by the equation
\begin{figure}[h]
\centering
\includegraphics[width=0.7\linewidth]{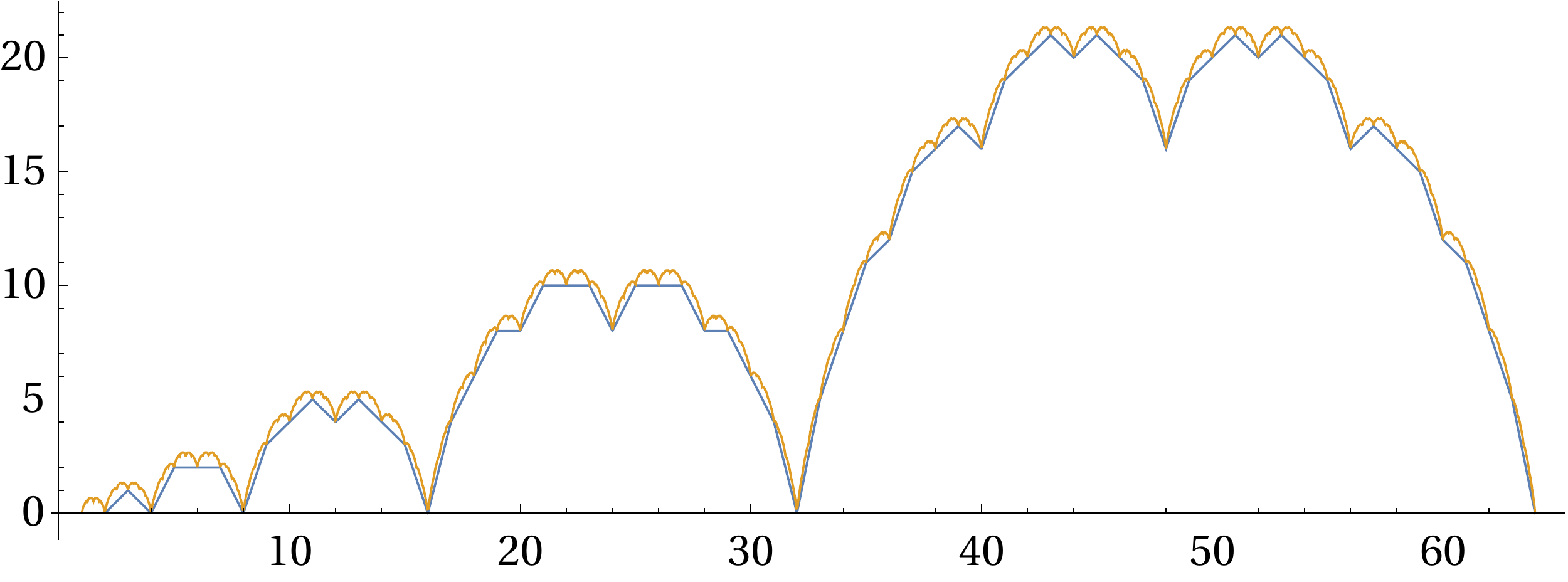}
\caption{A graph of function $ \breve{F} (x) = 2^{\lfloor \lg x \rfloor }  \tilde{F}(\frac{x}{2^{\lfloor \lg x \rfloor }}) $ plotted above a graph of the function
$ F(x) = \sum _{k=1} ^{\lfloor \lg x \rfloor} 2^{k}  \mbox{\textit{Zigzag}}\,(\frac{x}{2^{k}}) $. }
\label{fig:Best-case_MergeSort_worksheet_sum_zig_minus_chainsaw_w_upper}
\end{figure}
\begin{equation} \label{eq:def_F_u}
\breve{F} (x) = 2^{\lfloor \lg x \rfloor }  \tilde{F}(\frac{x}{2^{\lfloor \lg x \rfloor}}),
\end{equation}
that for every positive integer $ n $ satisfies
\begin{equation} \label{eq:F_u=F}
\breve{F} (n) = F(n),
\end{equation}
where $ F $ is given by \eqref{eq:fractal}. 

\section{Computing  $ \tilde{F} (x) $ and $ B(n) $ from one another } \label{sec:compfromfrac}

Computing values of function $ \tilde{F} (x) $ does not have to be as complex as (or more complex than) the definition \eqref{eq:def_lim2} implies. Of course, for every integer $ n $,  $ \tilde{F} (n) $ $ = $ $ 0 $. One can apply some elementary arguments based on a structure visualized on Figure~\ref{fig:Takagi_function_progression_5} to conclude that
\begin{equation}
\tilde{F} (\frac{2}{3}) = \tilde{F} (\frac{1}{3}) = \frac{2}{3} ,
\end{equation}
(the latter being the maximum of $ \tilde{F} (x) $) or that for every positive integer $ k $,
\begin{equation}
\tilde{F} (\frac{1}{2^k}) = \frac{ k }{2^k} .
\end{equation}
It takes a bit more work to compute
\begin{equation}
\tilde{F} (\frac{3}{2^k}) = \frac{3 k - 4 }{2^k} .
\end{equation} 

\medskip

It turns out that computing values of function $ \tilde{F} (x) $ for every $ x $ that has a finite binary expansion can be done easily if an oracle for computing the values of the function $ B(n) $ defined by \eqref{eq:rec_rel1} and \eqref{eq:rec_rel2} is given\footnote{Which is not that surprising after a glance at Figure~\ref{fig:Best-case_MergeSort_worksheet_sum_zig_minus_chainsaw_w_upper}.}. Once that is accomplished, since $ \tilde{F} (x) $ is a continuous function and the set of numbers with finite binary expansions is dense in the set $ \mathfrak{R} $ of reals, it allows for fast approximations of  $ \tilde{F} (x) $ for every real $ x $. \footnote{It helps to remember that $ \tilde{F} $ is a periodic function with $ \tilde{F} (x) = \tilde{F} (x - \lfloor x \rfloor) $.}

\medskip

\begin{thm} \label{thm:mainB}
For every positive integer $ n $ \footnote{Of course, one if free to assume that $ n $ is odd here.} and integer $ k $ with $ n \leq 2^k $,
\begin{equation} \label{eq:mainB}
\tilde{F} (\frac{n}{2^k}) = \frac{n \times k - 2 B(n)}{2^k} .
\end{equation} 
\end{thm}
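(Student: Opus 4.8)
The plan is to convert the left-hand side into a finite sum, recognize that sum as a linear combination of $\mathrm{Zigzag}$-values that already appears in the closed form of $B(n)$, and then bridge the two by a short induction on $k$.

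First I would turn $\tilde F(n/2^k)$ into a finite sum. Since $n$ and $k$ are integers with $n\le 2^k$, equation \eqref{eq:Ftilde=ftilde_k} gives $\tilde F(n/2^k)=\tilde f_k(n/2^k)$, and substituting $x=n/2^k$ into \eqref{eq:f_k_simplified} yields
\[
\tilde F\!\left(\tfrac{n}{2^k}\right)=\sum_{i=0}^{k-1}\frac{1}{2^i}\,\mathrm{Zigzag}\!\left(\frac{n}{2^{k-i}}\right).
\]
Re-indexing by $j=k-i$, so that $1/2^i=2^j/2^k$, rewrites this as
\[
\tilde F\!\left(\tfrac{n}{2^k}\right)=\frac{1}{2^k}\sum_{j=1}^{k}2^j\,\mathrm{Zigzag}\!\left(\frac{n}{2^j}\right).
\]
Hence the theorem is equivalent to the purely arithmetic identity $\sum_{j=1}^{k}2^j\,\mathrm{Zigzag}(n/2^j)=nk-2B(n)$, and everything now reduces to elementary bookkeeping.

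Next I would establish that identity at the smallest admissible $k$. Writing $m=\lfloor\lg n\rfloor$ and re-indexing the sum in the closed form \eqref{eq:formula_Zigzag_B} of $B(n)$ by $j=k+1$ gives
\[
\sum_{j=1}^{m+1}2^j\,\mathrm{Zigzag}\!\left(\frac{n}{2^j}\right)=n(m+1)-2B(n),
\]
which is exactly the target identity at $k=m+1$, the least $k$ with $n\le 2^k$ whenever $n$ is not a power of $2$. This is where the footnote's remark pays off: it suffices to treat odd $n$, since for even $n=2n'$ one has $n/2^k=n'/2^{k-1}$ while \eqref{eq:rec_rel2} gives $B(2n')=n'+2B(n')$, so the case $(n,k)$ descends to $(n',k-1)$; and an odd $n>1$ is never a power of $2$, so the clean base case above always applies (the only odd power of two, $n=1$, is immediate).

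Finally I would extend to all larger $k$ by induction. The key observation is that once $2^k\ge n$ we have $n/2^{k+1}\le\tfrac12$, so by definition \eqref{eq:def_zigzag} either $\mathrm{Zigzag}(n/2^{k+1})=n/2^{k+1}$ or (at the boundary $n=2^k$) $\mathrm{Zigzag}(\tfrac12)=\tfrac12$; in both cases $2^{k+1}\mathrm{Zigzag}(n/2^{k+1})=n$. Thus passing from $k$ to $k+1$ adds exactly $n$ to the sum, matching the increase of $nk-2B(n)$ to $n(k+1)-2B(n)$, and the induction closes. The only place that demands care is the alignment of summation ranges and the boundary term at $j=m+1$, where $n/2^{m+1}\in[\tfrac12,1)$ and $\mathrm{Zigzag}$ equals $1-n/2^{m+1}$ rather than $n/2^{m+1}$; I deliberately keep that term on the $B(n)$ side (via the closed form) instead of in the geometric tail, which avoids a case split and isolates the power-of-two subtlety as the main, though minor, obstacle.
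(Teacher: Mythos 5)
Your proof is correct and follows essentially the same route as the paper's: you truncate the Blancmange series to the finite sum $\frac{1}{2^k}\sum_{j=1}^{k}2^j\,\mbox{\textit{Zigzag}}\,(\frac{n}{2^j})$ (the paper's Lemma~\ref{lem:300}), anchor the identity at $k=\lfloor\lg n\rfloor+1$ via the closed form \eqref{eq:formula_Zigzag_B}, and absorb each further term as an increment of exactly $n$ (the content of the paper's Lemmas~\ref{lem:100} and~\ref{lem:200}, proved there by the same induction on $k$). The only substantive difference is that your parity-descent treatment of the boundary case $n=2^k$ with $k=\lfloor\lg n\rfloor$ is actually more careful than the paper's argument, whose sum-splitting step tacitly assumes $k\ge\lfloor\lg n\rfloor+1$.
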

\begin{proof} The equality \eqref{eq:mainB} can be verified experimentally, for instance, with a help of software for symbolic computation$ ^{\ref{foot:Mat}} $. The analytic proof is deferred to Section~\ref{sec:comment}.
\end{proof}

Theorem~\ref{thm:mainB} allows for easy computing of $ B(n) $ if $ \tilde{F} (\frac{n}{2^k}) $ is given for some $ k \geq \lg n $ using this form of \eqref{eq:mainB}:

\begin{cor} \label{cor:mainB2}
For every positive integer $ n $ and integer $ k $ with $ n \leq 2^k $,
\begin{equation} \label{eq:mainB2}
B(n) =  \frac{n \times k }{2} - 2^{k-1} \tilde{F} (\frac{n}{2^k}).
\end{equation}
\end{cor}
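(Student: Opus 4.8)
The plan is to obtain \eqref{eq:mainB2} by algebraically solving the identity \eqref{eq:mainB} of Theorem~\ref{thm:mainB} for $ B(n) $. Since both statements carry the identical hypothesis (a positive integer $ n $ and an integer $ k $ with $ n \leq 2^k $), no additional case analysis or range checking is required: the corollary is a pure rearrangement of an equality that has already been established, so I may invoke Theorem~\ref{thm:mainB} directly without reproving anything about $ \tilde{F} $ or $ B $.

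First I would write down the conclusion of Theorem~\ref{thm:mainB}, which under the stated hypothesis reads
\[
\tilde{F}\!\left(\frac{n}{2^k}\right) = \frac{n \times k - 2 B(n)}{2^k}.
\]
Multiplying both sides by $ 2^k $ clears the denominator and gives $ 2^k \tilde{F}(n/2^k) = n k - 2 B(n) $. Isolating the term that contains $ B(n) $ yields $ 2 B(n) = n k - 2^k \tilde{F}(n/2^k) $, and dividing through by $ 2 $ produces
\[
B(n) = \frac{n \times k}{2} - 2^{k-1}\, \tilde{F}\!\left(\frac{n}{2^k}\right),
\]
which is exactly \eqref{eq:mainB2}.

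There is essentially no obstacle to overcome here, as the entire mathematical content resides in Theorem~\ref{thm:mainB}, whose analytic proof is deferred to Section~\ref{sec:comment}. The only point worth a remark is the bookkeeping on the power of two, namely the identity $ 2^k / 2 = 2^{k-1} $, which is legitimate for every integer $ k $ with $ n \leq 2^k $ (in particular $ k \geq 0 $, since $ n \geq 1 $). Thus the corollary follows immediately once the identity \eqref{eq:mainB} is in hand, and its role is simply to recast that identity into the form most convenient for recovering $ B(n) $ from a known value of $ \tilde{F}(n/2^k) $.
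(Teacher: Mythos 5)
Your proposal is correct and matches the paper exactly: the paper's entire proof of Corollary~\ref{cor:mainB2} is ``An obvious conclusion from \eqref{eq:mainB},'' and your algebraic rearrangement (multiply by $2^k$, isolate $2B(n)$, divide by $2$) is precisely that obvious conclusion spelled out. Nothing is missing.
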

\begin{proof} An obvious conclusion from \eqref{eq:mainB}.
\end{proof}

For instance, putting $ k = \lfloor \lg n \rfloor + 1 $ in \eqref{eq:mainB2} easily yields \eqref{eq:formula_Zigzag_B}. For $ k = \lceil \lg n \rceil $ we obtain
\[  
B(n) =  \frac{n \lceil \lg n \rceil }{2} - 2^{\lceil \lg n \rceil-1} \tilde{F} (\frac{n}{2^{\lceil \lg n \rceil}}) =
  \]
  [by \eqref{eq:def_lim2}]
\[ = \frac{n \lceil \lg n \rceil }{2} - 2^{\lceil \lg n \rceil-1} \sum _{i=0} ^{\infty} \frac{1}{2^i} \mbox{\textit{Zigzag}} \, (2^i \frac{n}{2^{\lceil \lg n \rceil}})  = \]
[since for $ i \geq \lceil \lg n \rceil $, $ 2^i \frac{n}{2^{\lceil \lg n \rceil}} $ is integer and $ \mbox{\textit{Zigzag}} \, (2^i \frac{n}{2^{\lceil \lg n \rceil}})  = 0 $]
\[ = \frac{n \lceil \lg n \rceil }{2} - 2^{\lceil \lg n \rceil-1} \sum _{i=0} ^{\lceil \lg n \rceil - 1} \frac{1}{2^i} \mbox{\textit{Zigzag}} \, (2^i \frac{n}{2^{\lceil \lg n \rceil}})  = \]
\[ =
\frac{n \lceil \lg n \rceil }{2} - \frac{1}{2}  \sum _{i=0} ^{\lceil \lg n \rceil - 1} 2^{\lceil \lg n \rceil - i} \mbox{\textit{Zigzag}} \, ( \frac{n}{2^{\lceil \lg n \rceil - i}})  .\]
Substituting $ k  $ for $ \lceil \lg n \rceil - i $ we conclude
\begin{equation} \label{eq:mainAlt}
B(n) =  \frac{n \lceil \lg n \rceil }{2} - \frac{1}{2}  \sum _{k=1} ^{\lceil \lg n \rceil} 2^{k} \mbox{\textit{Zigzag}} \, ( \frac{n}{2^{k}}),
\end{equation}
a similar to \eqref{eq:formula_Zigzag_B} characterization of $ B(n) $.

%
%
%
%
%
%

\section{Relationship between the best case and the worst case}

A casual student of $ {\tt MergeSort} $ tends to believe that its worst-case behavior is about twice as bad as its best-case behavior. This, of course, is only approximately true. In this Section, I will derive the exact difference between $ 2 B(n) $ and $ W(n) $  using function $ F $ defined by \eqref{eq:def_F} page \pageref{eq:def_F}.

\medskip

An exact formula for the number $ W(n) $  of comparisons of keys performed by $ {\tt MergeSort} $ in the worst case is known\footnote{See \eqref{eq:recmergesort900} in the \ref{sec:excerpts}.} and is given for any positive integer $ n  $ by the following equality:
\begin{equation} \label{eq_worst}
W(n) = \sum _{i = 1} ^{n} \lceil \lg i \rceil .
\end{equation}
From \eqref{eq:formula_Zigzag_B} and \eqref{eq:def_F}, one can derive
\[ 2B(n) =  n  \lfloor \lg n  \rfloor 
- 2^{\lfloor \lg n  \rfloor + 1} + 2n - F(n) = \]
[by $ \sum _{i = 1} ^{n} \lceil \lg i \rceil = n  \lfloor \lg n  \rfloor 
- 2^{\lfloor \lg n  \rfloor + 1} + n + 1 $ from \cite{knu:art}]
\[ = \sum _{i = 1} ^{n} \lceil \lg i \rceil - 1 + n - F(n) =  \]
[by \eqref{eq_worst}]
\[ = W(n) - 1 + n - F(n) . \]
The above yield the following characterization.
\begin{thm} \label{thm:dif2BW}
For every positive integer $ n $, the difference between twice the number $ B(n) $ of comparison of keys performed in the best case and the number $ W(n) $ of comparison of keys performed in the worst case by $ {\tt MergeSort} $ while sorting an $ n $-element array is:
\begin{equation} \label{eq:dif2BW}
2B(n) - W(n) = n - 1 - F(n) ,
\end{equation}
where $ F(n) $, visualized on Figure~\ref{fig:Best-case_MergeSort_worksheet_sum_zigzag_minus_chainsaw}, is given by \eqref{eq:fractal}.
\end{thm}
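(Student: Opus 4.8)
The plan is to derive the identity by a purely algebraic combination of three closed-form ingredients already in hand: the exact formula for $B(n)$ from Corollary~\ref{thm:formula_Zigzag_B}, the definition \eqref{eq:def_F} of $F$, and the known closed form for the worst-case count $W(n)$ supplied by \eqref{eq_worst}. No new analysis is needed; the whole argument is a chain of substitutions that eliminates every \textit{Zigzag} term in favour of $F(n)$ and then cancels the surviving floor and power-of-two terms against $W(n)$.

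First I would double the formula \eqref{eq:formula_Zigzag_B}, obtaining
\[ 2B(n) = n(\lfloor \lg n \rfloor + 1) - \sum_{k=0}^{\lfloor \lg n \rfloor} 2^{k+1}\,\mbox{\textit{Zigzag}}\,(\frac{n}{2^{k+1}}). \]
The key observation is that the summation appearing here is \emph{exactly} the weight-$2^{k+1}$ sum that opens the definition \eqref{eq:def_F} of $F$; that is, $\sum_{k=0}^{\lfloor \lg n \rfloor} 2^{k+1}\,\mbox{\textit{Zigzag}}\,(\frac{n}{2^{k+1}}) = F(n) + 2^{\lfloor \lg n \rfloor + 1} - n$. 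Substituting this back gives $2B(n) = n\lfloor \lg n \rfloor - 2^{\lfloor \lg n \rfloor + 1} + 2n - F(n)$, which expresses twice the best case entirely in terms of $F(n)$ together with elementary floor and power-of-two terms.

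Next I would import the standard closed form $\sum_{i=1}^n \lceil \lg i \rceil = n\lfloor \lg n \rfloor - 2^{\lfloor \lg n \rfloor + 1} + n + 1$ (Knuth, \cite{knu:art}), which by \eqref{eq_worst} equals $W(n)$. Solving this for the mixed term $n\lfloor \lg n \rfloor - 2^{\lfloor \lg n \rfloor + 1}$ and substituting into the expression for $2B(n)$ collapses precisely those terms, leaving $2B(n) = W(n) + n - 1 - F(n)$, which rearranges to the claimed identity \eqref{eq:dif2BW}.

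Since every step is a substitution, there is no genuine analytic obstacle, and the only points demanding care are bookkeeping ones. The first is recognizing the factor-of-two match between the weight-$2^k$ summation occurring in $B(n)$ and the weight-$2^{k+1}$ summation opening the definition of $F$, so that one is literally twice the other. The second is tracking the constant and sawtooth contributions — the $2^{\lfloor \lg n \rfloor + 1}$ and the $+1$ — through both substitutions so that they cancel cleanly and only $n - 1 - F(n)$ remains. The single ingredient I would cite rather than reprove is Knuth's evaluation of $\sum_{i=1}^n \lceil \lg i \rceil$, as it is standard from the worst-case analysis referenced in the appendix.
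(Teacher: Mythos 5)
Your proposal is correct and follows essentially the same route as the paper: double the formula \eqref{eq:formula_Zigzag_B}, rewrite the \textit{Zigzag} sum as $F(n) + 2^{\lfloor \lg n \rfloor + 1} - n$ via \eqref{eq:def_F}, and then use Knuth's closed form for $\sum_{i=1}^{n} \lceil \lg i \rceil = W(n)$ to cancel the remaining terms. The paper's own derivation (the displayed chain of equalities preceding the theorem, which the proof cites as ``the above discussion'') performs exactly these substitutions in the same order.
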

\begin{proof}
Follows from the above discussion.
\end{proof}
In particular, since for every positive integer $ n $,
\begin{equation} \label{eq:F_ub}
0 \leq F(n) \leq \frac{n-1}{2}
\end{equation}
(see Figure~\ref{fig:Best-case_MergeSort_worksheet_sum_zigzag_minus_chainsaw} for explanation), I conclude with the following tight linear bounds on $ 2B(n) - W(n) $.
\begin{cor}
For every positive integer $ n $, the difference between twice the minimum number $ B(n) $  and the maximum number $ W(n) $ of comparison of keys performed in the worst case by $ {\tt MergeSort} $ while sorting an $ n $-element array satisfies this inequality:
\begin{equation}
\frac{n-1}{2}  \leq 2B(n) - W(n) \leq  n-1 .
\end{equation}
\end{cor}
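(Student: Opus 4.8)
The plan is to treat the corollary as an immediate consequence of the exact identity $2B(n) - W(n) = n - 1 - F(n)$ from Theorem~\ref{thm:dif2BW}, combined with the two-sided bound $0 \leq F(n) \leq \frac{n-1}{2}$ recorded in \eqref{eq:F_ub}. Since both of these are already in hand, deriving the stated inequality is purely a matter of a single monotone substitution, and no fresh computation about $B$ or $W$ is needed.

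First I would recall the identity $2B(n) - W(n) = n - 1 - F(n)$. The affine map $t \mapsto (n-1) - t$ is strictly decreasing, so it reverses the direction of the chain $0 \leq F(n) \leq \frac{n-1}{2}$. Applying it to each part gives
\[
(n-1) - \tfrac{n-1}{2} \;\leq\; (n-1) - F(n) \;\leq\; (n-1) - 0,
\]
that is,
\[
\frac{n-1}{2} \;\leq\; 2B(n) - W(n) \;\leq\; n-1,
\]
which is exactly the claim. Note the crossing of roles: the lower bound on $2B(n)-W(n)$ is produced by the \emph{upper} bound on $F(n)$, while the upper bound on $2B(n)-W(n)$ comes from the \emph{lower} bound on $F(n)$.

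The only substantive content hides in \eqref{eq:F_ub}, so the main obstacle is establishing that bound rather than the corollary itself. The lower inequality $F(n) \geq 0$ is immediate, since every summand $2^{k}\,\mbox{\textit{Zigzag}}(n/2^{k})$ is nonnegative (being $2^k$ times a minimum of two nonnegative quantities). The upper inequality $F(n) \leq \frac{n-1}{2}$ is the delicate one: the naive term-by-term estimate $\mbox{\textit{Zigzag}} \leq \tfrac12$ yields only
\[
F(n) \;\leq\; \sum_{k=1}^{\lfloor \lg n \rfloor} 2^{k-1} \;=\; 2^{\lfloor \lg n \rfloor} - 1 \;\leq\; n-1,
\]
which is too weak by roughly a factor of two. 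To obtain the tight line $y = \frac{x-1}{2}$ one must instead exploit the self-similar structure of $F$ developed in Section~\ref{sec:frac}: the normalized, periodized pieces $\tilde f_i$ linearly interpolate one another, so $F$ stays below the chord joining its values at consecutive dyadic endpoints, with equality attained precisely at the points $x = \frac{1}{3}(2^{k+1} + (-1)^{k})$ flagged in the caption of Figure~\ref{fig:Best-case_MergeSort_worksheet_sum_zigzag_minus_chainsaw}. Once \eqref{eq:F_ub} is secured by this fractal argument, the corollary drops out instantly via the decreasing substitution above.
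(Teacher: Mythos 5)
Your proposal matches the paper's own proof exactly: the corollary is obtained, just as you do, by substituting the bound $0 \leq F(n) \leq \frac{n-1}{2}$ of \eqref{eq:F_ub} into the identity $2B(n) - W(n) = n - 1 - F(n)$ of Theorem~\ref{thm:dif2BW}, with the order-reversing affine map handling the crossing of bounds. Your additional remarks on why \eqref{eq:F_ub} itself requires the self-similar/interpolation structure of $F$ (rather than a term-by-term estimate) go beyond the paper, which simply points to Figure~\ref{fig:Best-case_MergeSort_worksheet_sum_zigzag_minus_chainsaw} for that bound.
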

\begin{proof}
Follows from \eqref{eq:dif2BW} and \eqref{eq:F_ub}.
\end{proof}
\begin{figure}[h]
\centering
\includegraphics[width=0.7\linewidth]{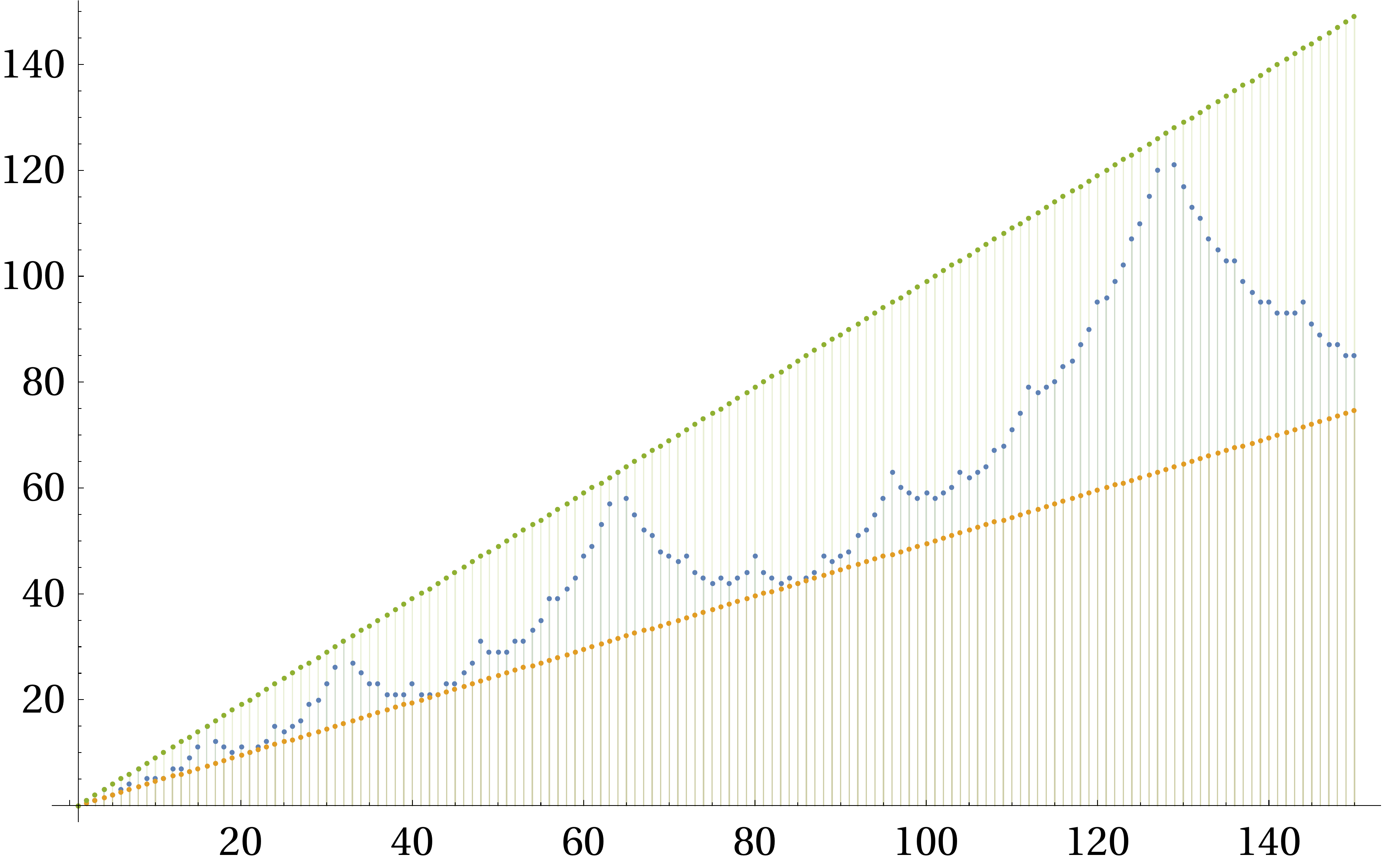}
\caption{A graph of $  2B(n) - W(n)  $ shown between graphs of its tight linear bounds $  n - 1 $ and $ \frac{n - 1}{2} $.}
\label{fig:2B-W_with_bounds}
\end{figure}
Obviously, $  2B(n) - W(n)  = n - 1$ whenever $ F(n) = 0 $, that is, whenever $ n = 2^{\lfloor \lg n \rfloor} $. It can be shown that $  2B(n) - W(n) = \frac{n-1}{2} $ whenever $ n $ $ = $
$ \frac{1}{3}(2^{k+1} + (-1)^k) $ for some integer $ k \geq 0 $.

\medskip

A graph of $ 2B(n) - W(n) $ and its tight bounds are shown on Figure~\ref{fig:2B-W_with_bounds}.

\section{The sum of digits problem} \label{sec:sumdig}

A known explicit formula, published in \cite{tro:digsum}, for the total number 
of bits in all integers between 0 and $ n $ (not including 0 and $ n $) is expressed in terms of function \textit{Zigzag} (referred to as $ 2 g $ in \cite{tro:digsum}) and is given by:\footnote{The following are screen shots and an excerpt from \cite{tro:digsum}.}

{\centering
\includegraphics[width=1\linewidth]{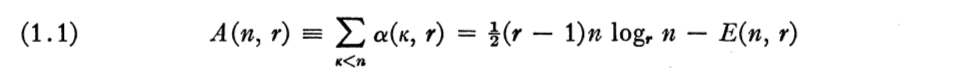}}

\noindent $ \,\, $ Let $ g(x) $ be periodic of period 1 and defined on $ [0,1] $ by

{\centering
\includegraphics[width=1\linewidth]{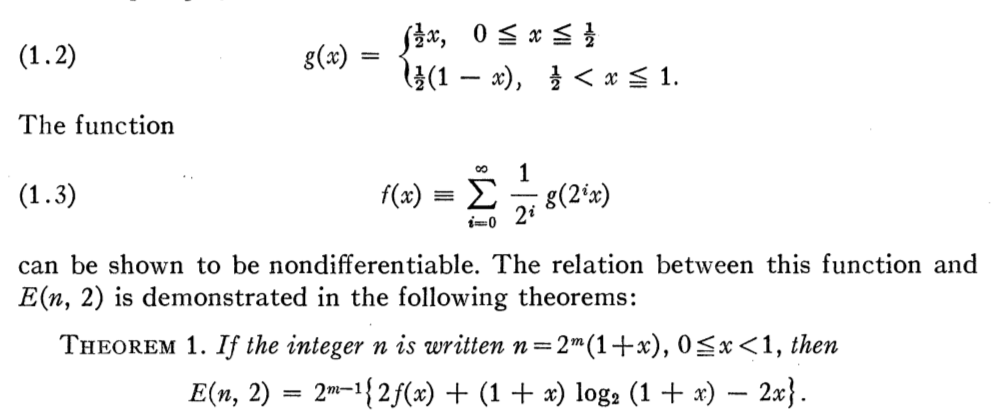}}

\medskip

It has been shown in \cite{mci:numones} that the recurrence relation for $ A(n,2) $  is the same as the recurrence relation for $ B(n) $ given by (\ref{eq:rec_rel1}) and (\ref{eq:rec_rel2}). Therefore, the formula (\ref{eq:formula_Zigzag*}) derived in this paper is equivalent to $ A(n,2) $ given above by the considerably more complicated definition. 
Interestingly, the above definition can be simplified to \eqref{eq:formula_Zigzag*} along the lines of the elementary derivation of the alternative formula \eqref{eq:mainAlt} for $ B(n) $ on page~\pageref{eq:mainAlt}\footnote{Even more interestingly, if someone did bother to simplify Trollope's formula of \cite{tro:digsum} then I am not aware of it.}.

\section{Proof of Theorem~\ref{thm:formula_level_Zigzag} page~\pageref{thm:formula_level_Zigzag}, Subsection~\ref{subsec:zig}} \label{sec:proof_formula_level_Zigzag}

In this Section, I provide an analytic proof of the experimentally-derived Theorem~\ref{thm:formula_level_Zigzag} page~\pageref{thm:formula_level_Zigzag}, Subsection~\ref{subsec:zig} that was instrumental for the derivation of a logarithmic-length formula\footnote{$ B(n) = \frac{n}{2}(\lfloor \lg n \rfloor + 1) -  \sum _{k=0} ^{\lfloor \lg n \rfloor} 2^k  \mbox{\textit{Zigzag}}\,(\frac{n}{2^{k+1}}) $, where $ \mbox{\textit{Zigzag}}\,(x) = \min (x - \lfloor x \rfloor, \lceil x \rceil - x) $.} for $ B(n) $. The result
and its proof have a flavor of Concrete Mathematics. Although they are interesting in their own right, they cannot be found in \cite{knu:concrete}.

%
%
%
%

\begin{thm} \label{thm:formula_level_Zigzag_App} {\rm (Same as Theorem~\ref{thm:formula_level_Zigzag}.)}
For every natural number n and every positive natural number m,
\begin{equation} \label{eq:formula_level_Zigzag_App}
\sum _{i=m} ^{2m-1} \lfloor \frac{n+i}{2m} \rfloor - \sum _{i=0} ^{m-1} \lfloor \frac{n+i}{2m} \rfloor  = 2m \times \mbox{\textit{Zigzag}}\,(\frac{n}{2m}),
\end{equation}
where \textit{Zigzag} is a function defined by \eqref{eq:def_zigzag} and visualized on Figure~\ref{fig:zigzag} page~\pageref{fig:zigzag}.
\end{thm}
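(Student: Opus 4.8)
The plan is to reduce the entire identity to the remainder of $n$ modulo $2m$ and then count how many of the floor terms ``round up.'' First I would write $n = 2mq + r$ with $q = \lfloor n/(2m) \rfloor$ and $0 \le r \le 2m-1$, so that $r = n \bmod 2m$. The splitting $\lfloor \frac{n+i}{2m} \rfloor = q + \lfloor \frac{r+i}{2m} \rfloor$ strips the common quotient $q$ off every term; since each of the two sums in \eqref{eq:formula_level_Zigzag_App} has exactly $m$ terms, the two contributions of $mq$ cancel in the difference, and the left-hand side collapses to
\[
\sum_{i=m}^{2m-1} \left\lfloor \frac{r+i}{2m} \right\rfloor - \sum_{i=0}^{m-1} \left\lfloor \frac{r+i}{2m} \right\rfloor.
\]

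Next I would observe that across both index ranges one has $0 \le r + i \le 4m-2 < 4m$, so every remaining floor is either $0$ or $1$; more precisely $\lfloor \frac{r+i}{2m} \rfloor = 1$ exactly when $i \ge 2m - r$, and $0$ otherwise. Replacing each floor by the indicator $[\,i \ge 2m - r\,]$ turns both sums into counts of integers clearing the threshold $2m - r$: the first sum counts the $i$ with $\max(m,\,2m-r) \le i \le 2m-1$, and the second counts the $i$ with $2m - r \le i \le m-1$.

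In parallel I would simplify the right-hand side. With $x = n/(2m) = q + r/(2m)$ the fractional part of $x$ is $r/(2m)$, so by \eqref{eq:def_zigzag} we get $\mbox{\textit{Zigzag}}\,(x) = \min\bigl(\tfrac{r}{2m},\, \tfrac{2m-r}{2m}\bigr)$ and hence $2m \times \mbox{\textit{Zigzag}}\,(\tfrac{n}{2m}) = \min(r,\, 2m-r)$. The proof then finishes with a short case split on the position of $r$ relative to $m$. When $r \le m$ the first sum equals $r$, the second sum is empty, and $\min(r, 2m-r) = r$; when $r \ge m$ the first sum equals $m$, the second equals $r - m$, their difference is $2m - r$, and $\min(r, 2m-r) = 2m-r$. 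Both descriptions agree at the shared value $r = m$, so the two halves glue together and match the right-hand side in every case.

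I expect the only delicate point to be the boundary bookkeeping rather than any conceptual difficulty: one must check that the two truncated counting formulas hold uniformly, including the empty-range situation and the overlap at $r = m$, and confirm that $r + i$ never forces a floor value of $2$ or more. Once the floors are rewritten as the indicator $[\,i \ge 2m - r\,]$, the identity is purely a counting argument and the verification is routine.
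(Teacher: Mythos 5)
Your proposal is correct, and its skeleton matches the paper's own proof: both perform the Euclidean division $n = 2mq + r$ (the paper writes $n = k \times 2m + r$), strip the common quotient from every floor, turn the identity into a counting problem, and finish with the same case split ($r \le m$ versus $r \ge m$) that identifies the count with $\min(r,\, 2m-r) = 2m \times \mbox{\textit{Zigzag}}\,(\frac{n}{2m})$, using periodicity of \textit{Zigzag} to pass from $r$ back to $n$. The one organizational difference is in the middle step: the paper first shifts the index of the first sum ($i \mapsto i+m$), so that the left-hand side becomes a single sum $\sum_{i=0}^{m-1}\bigl(\lfloor \frac{n+i}{2m} + \frac{1}{2}\rfloor - \lfloor \frac{n+i}{2m}\rfloor\bigr)$ of $0/1$ differences, and then counts the one set $\{\,i : \frac{1}{2} \le \frac{r+i}{2m} < 1\,\}$; you instead keep the two sums separate after reduction and evaluate each as a threshold count of indices with $i \ge 2m - r$ over its own range. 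The two bookkeeping schemes are equivalent (your difference of two counts equals the paper's single count), and neither buys much over the other: the paper's half-shift pairing localizes the boundary analysis to one interval, while yours makes the $0/1$ nature of the raw floors (rather than of their pairwise differences) explicit, at the cost of checking two ranges, including the empty one. The delicate points you flag --- empty ranges, agreement at $r = m$, and $\lfloor \frac{r+i}{2m}\rfloor \le 1$ --- are genuine but handled correctly in your case analysis, so the proof goes through.
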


\begin{proof} First, let's note that
\[ \sum _{i=m} ^{2m-1} \lfloor \frac{n+i}{2m} \rfloor - \sum _{i=0} ^{m-1} \lfloor \frac{n+i}{2m} \rfloor  =  \sum _{i=0} ^{m-1} \lfloor \frac{n+i+m}{2m} \rfloor - \sum _{i=0} ^{m-1} \lfloor \frac{n+i}{2m} \rfloor  =  \sum _{i=0} ^{m-1} (\lfloor \frac{n+i}{2m} + \frac{1}{2} \rfloor - \lfloor \frac{n+i}{2m} \rfloor)  ,   \]
that is,
\begin{equation} \label{eq:formula_level_Zigzag_App_var}
\sum _{i=m} ^{2m-1} \lfloor \frac{n+i}{2m} \rfloor - \sum _{i=0} ^{m-1} \lfloor \frac{n+i}{2m} \rfloor  =  \sum _{i=0} ^{m-1} (\lfloor \frac{n+i}{2m} + \frac{1}{2} \rfloor - \lfloor \frac{n+i}{2m} \rfloor).
\end{equation}
Let 
\begin{equation} \label{eq:def_r}
 n = k \times 2 m + r , 
\end{equation}where $ 0 \leq r <  2m $, and let $ 0 \leq i <  m $.
We have
\[ \lfloor \frac{n+i}{2m} + \frac{1}{2} \rfloor = \lfloor \frac{k \times 2 m + r+i}{2m} + \frac{1}{2} \rfloor = k + \lfloor \frac{r+i}{2m} + \frac{1}{2} \rfloor \]
and
\[ \lfloor \frac{n+i}{2m}  \rfloor = \lfloor \frac{k \times 2 m + r+i}{2m}  \rfloor = k + \lfloor \frac{r+i}{2m}  \rfloor .\]
Thus, by virtue of \eqref{eq:formula_level_Zigzag_App_var},
\begin{equation} \label{eq:formula_level_Zigzag_App_var2}
\sum _{i=m} ^{2m-1} \lfloor \frac{n+i}{2m} \rfloor - \sum _{i=0} ^{m-1} \lfloor \frac{n+i}{2m} \rfloor 
= \sum _{i=0} ^{m-1} (\lfloor \frac{r+i}{2m} + \frac{1}{2} \rfloor - \lfloor \frac{r+i}{2m} \rfloor) .
\end{equation}
We have
\begin{equation} \label{eq:dif=0_or_1}
\lfloor \frac{r+i}{2m} + \frac{1}{2} \rfloor - \lfloor \frac{r+i}{2m} \rfloor = 
\left\{ \begin{array}{ll}
1 \mbox{ if } \;  \frac{1}{2} \leq \frac{r+i}{2m} < 1  \\ \\
0 \mbox{ otherwise,}
\end{array} \right.
\end{equation}
because  $ \frac{r+i}{2m} + \frac{1}{2} <  \frac{3m}{2m} + \frac{1}{2} = 2 $ so that $ \lfloor \frac{r+i}{2m} + \frac{1}{2} \rfloor \leq 1 $ and, therefore, $ \lfloor \frac{r+i}{2m} + \frac{1}{2} \rfloor - \lfloor \frac{r+i}{2m} \rfloor \leq 1 $.

\medskip
  
Let $ I $ be defined as
\begin{equation} \label{eq:defI}
I = \{ i \in  \mathbb{N} \mid  \frac{1}{2} \leq \frac{r+i}{2m} < 1  \} =
\{ i \in  \mathbb{N} \mid  m-r \leq i < 2m-r  \} .
\end{equation}
By virtue of \eqref{eq:dif=0_or_1}, we have
\begin{equation} \label{eq:formula_level_Zigzag_App_var3}
 \sum _{i=0} ^{m-1} (\lfloor \frac{r+i}{2m} + \frac{1}{2} \rfloor - \lfloor \frac{r+i}{2m} \rfloor) = \sum _{i \in I}  (\lfloor \frac{r+i}{2m} + \frac{1}{2} \rfloor - \lfloor \frac{r+i}{2m} \rfloor)
 = \sum _{i \in I}  1 = \#{I},
\end{equation}
where $ \# (I) $ denotes the cardinality of $ I $.

\medskip

If $ r \leq m $ then, by \eqref{eq:defI}, $ \# I = m - (m-r) = r $. If $ r > m $ then, by \eqref{eq:defI}, $ \# I = 2 m -  r $. In any case,
\begin{equation} \nonumber
\# I = \min (r, 2m - r) = 2m \min (\frac{r}{2m}, 1- \frac{r}{2m}) =
\end{equation}
[since $ 0 \leq \frac{r}{2m} < 1 $ so that $ \lfloor \frac{r+i}{2m} \rfloor = 0 $ and $ \lceil \frac{r+i}{2m} \rceil = 1 $]
\[ =  2m \min (\frac{r}{2m} - \lfloor \frac{r}{2m} \rfloor, \lceil \frac{r}{2m} \rceil- \frac{r}{2m}) =   \]
[by the definition \eqref{eq:def_zigzag} of function  {\em Zigzag}]
\[ = 2m \times \mbox{\em Zigzag} \, ( \frac{r}{2m}) =\]
[since {\em Zigzag} is a periodic function with period 1]
\[=  2m \times \mbox{\em Zigzag} \, ( k+ \frac{r}{2m}) =
  2m \times \mbox{\em Zigzag} \, (  \frac{k \times 2m + r}{2m}) =\]
  [by \eqref{eq:def_r}]
\[=   2m \times \mbox{\em Zigzag} \, (  \frac{n}{2m}) .\] 
Thus
\begin{equation} \label{eq:cardI=2mZ}
\# I = 2m \times \mbox{\em Zigzag} \, (  \frac{n}{2m}) .
\end{equation} 
From \eqref{eq:formula_level_Zigzag_App_var}, \eqref{eq:formula_level_Zigzag_App_var2}, \eqref{eq:formula_level_Zigzag_App_var3}, and \eqref{eq:cardI=2mZ}, I conclude~\eqref{eq:formula_level_Zigzag_App}.
\end{proof}

\section{Proof of Theorem~\ref{thm:no_closed_form} page~\pageref{thm:no_closed_form}, Section~\ref{sec:frac} \label{sec:proof_no_closed_form}}

In this Section, I present a brief discussion/motivation of what can be generally considered a \textit{closed-form formula} for a function from the set of real numbers into a set of real numbers. I provide an analytic proof of Theorem~\ref{thm:no_closed_form} page~\pageref{thm:no_closed_form}, Section~\ref{sec:frac} that implies the non-existence of closed-form formula for the minimum number $ B(n) $ of comparisons of keys by $ {\tt MergeSort} $ while sorting an $ n $-element array. I am going to use the acronym $ cf\!f $ as an abbreviation for \textit{closed-form formula}. 

\medskip

\medskip

For reader's convenience, the Theorem~\ref{thm:no_closed_form} is quoted below as Theorem~\ref{thm:no_closed_form_add}.

\begin{thm} \label{thm:no_closed_form_add} {\rm (Same as Theorem~\ref{thm:no_closed_form}.)}
There is no $ cf\!f $ $ \varphi(n) $ the values of which coincide with 
$   \sum _{k=0} ^{\lfloor \lg n \rfloor} 2^k  \mbox{\textit{Zigzag}}\,(\frac{n}{2^{k+1}}), $ for all positive integers $ n $, that is, for every $ cf\!f $ $ \varphi(n) $ on function \textit{Zigzag} there is a positive $ n $ such that
\begin{equation} \label{eq:no_closed_form_add}
  \sum _{k=0} ^{\lfloor \lg n \rfloor} 2^k  \mbox{\textit{Zigzag}}\,(\frac{n}{2^{k+1}}) \neq \varphi(n),
\end{equation}
where \textit{Zigzag} is a function defined by \eqref{eq:def_zigzag} and visualized on Figure~\ref{fig:zigzag}.
\end{thm}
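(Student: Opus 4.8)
The plan is to argue by contradiction, exploiting the fact that the limit function $\tilde{F}$ of \eqref{eq:def_lim2} is continuous but nowhere differentiable, whereas any genuine closed-form formula is differentiable off a topologically small set. First I would fix a precise notion of $cf\!f$: a function built by finitely many applications of $+,-,\times,\div$, composition, and the admissible primitives (polynomials, $\exp$, $\log$, $\lfloor\cdot\rfloor$, $\lg$, and \textit{Zigzag}). The pivotal structural lemma I would prove about this class is that every $cf\!f$, viewed as a function of a real variable on an interval, is differentiable at every point outside a nowhere-dense set. Each primitive has this property (\textit{Zigzag} and $\lfloor\cdot\rfloor$ are differentiable off the discrete, hence nowhere-dense, sets $\tfrac12\mathbb{Z}$ and $\mathbb{Z}$), and I would show the property is preserved under the arithmetic operations and composition; this is where I must be careful that composition does not manufacture a dense set of singularities.

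Next I would record the nowhere-differentiability of $\tilde{F}$. Here I would use the standard blancmange-type argument suggested by Figure~\ref{fig:Takagi_function_progression_5}: fix $x$ and take the dyadic intervals $[a_N,b_N]$ of length $2^{-N}$ bracketing it; on such an interval every term $\tfrac{1}{2^i}\mbox{\textit{Zigzag}}\,(2^i x)$ with $i<N$ is affine, so the slopes of the secants $\frac{\tilde F(b_N)-\tilde F(a_N)}{b_N-a_N}$ change by $\pm 1$ at each step $N\mapsto N+1$ and therefore cannot converge, ruling out $\tilde F'(x)$. Combined with the structural lemma, this yields that $\tilde F$ does not coincide on any interval with a $cf\!f$.

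The core of the proof is the bridge from the integer identity to the continuous function. Suppose, for contradiction, that a $cf\!f$ $\varphi$ coincides with $\sum_{k=0}^{\lfloor \lg n\rfloor}2^k\mbox{\textit{Zigzag}}\,(\tfrac{n}{2^{k+1}})$ at all positive integers $n$. By \eqref{eq:formula_Zigzag_B} this makes $B$ agree on integers with the $cf\!f$ $\tfrac n2(\lfloor\lg n\rfloor+1)-\varphi(n)$, and then Theorem~\ref{thm:mainB} (equivalently \eqref{eq:mainB2}) expresses the values $\tilde F(\tfrac{n}{2^k})$, for $n\le 2^k$, through $B(n)$. Using the scaling identities \eqref{eq:def_F_u}--\eqref{eq:F_u=F}, namely $\breve F(n)=F(n)$ and $\breve F(x)=2^{\lfloor\lg x\rfloor}\tilde F(x/2^{\lfloor\lg x\rfloor})$, I would reconstruct the values of $\tilde F$ on the dense set of dyadic rationals in $[1,2)$ from $\varphi$ alone via $\tilde F(n/2^{\lfloor\lg n\rfloor})=\psi(n)/2^{\lfloor\lg n\rfloor}$, where $\psi(x)=2\varphi(x)-2^{\lfloor\lg x\rfloor+1}+x$ is the $cf\!f$ matching $F$ on integers, and then promote this dense agreement, together with continuity of $\tilde F$, to the conclusion that $\tilde F$ coincides on an interval with a single $cf\!f$ --- contradicting the previous paragraph.

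The hard part will be exactly this promotion step. The obstacle is that $\varphi$ is only hypothesized to match on the integers, while the reconstruction carries an explicit dependence on the scaling exponent $\lfloor\lg n\rfloor$; naively eliminating it produces a limit $\lim_{m\to\infty}\psi(2^m y)/2^m$ rather than a finite closed expression. I expect to resolve this by exhibiting, from $\varphi$, a single $cf\!f$ in the real variable that agrees with $\tilde F$ on the dyadics --- for instance by matching it against the finite closed forms $\tilde f_k$ of \eqref{eq:f_k_simplified} through the identity \eqref{eq:Ftilde=ftilde_k} --- and by invoking a rigidity consequence of the structural lemma: two $cf\!f$s that are continuous on an interval and agree on a dense subset of it must agree throughout. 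Making this rigidity precise for the non-analytic class at hand, and ensuring the reconstructed expression is genuinely closed-form rather than a disguised infinite process, is the crux of the argument.
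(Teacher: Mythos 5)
Your skeleton is the same as the paper's: a structural lemma that every $cf\!f$ is differentiable off a nowhere-dense set, nowhere-differentiability of $\tilde F$, and a bridge that converts a hypothetical $\varphi$ agreeing with the sum at integers into a $cf\!f$ for $B$, hence into values of $\tilde F$ on the dense set of dyadic rationals in $[1,2)$ (your $\psi$-reconstruction is exactly the paper's Lemma~\ref{lem:cff_F_tilde}, which states $\tilde F\bigl(\frac{n}{2^{\lfloor\lg n\rfloor}}\bigr)=\bigl(n(\lfloor\lg n\rfloor+2)-2B(n)\bigr)/2^{\lfloor\lg n\rfloor}-2$; the paper derives it directly from \eqref{eq:formula_Zigzag_B} and \eqref{eq:fractal} rather than from Theorem~\ref{thm:mainB}, but the content is the same). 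However, the step you yourself call the crux is a genuine gap, and it sits exactly where your route and the paper's diverge. Under your strict definition of $cf\!f$ (finitely many applications of the primitives), the reconstruction of $\tilde F$ from $\varphi$ cannot be completed: eliminating the level-dependent exponent $\lfloor\lg n\rfloor$ forces the limit $\lim_{i\to\infty}\bigl(i\lfloor 2^i x\rfloor-2\beta(\lfloor 2^i x\rfloor)\bigr)/2^i+2x-2$, which is not a finite composition. Your fallback via \eqref{eq:Ftilde=ftilde_k} does not repair this, because each $\tilde f_k$ agrees with $\tilde F$ only on dyadics with denominator dividing $2^k$, which is not a dense set, so no single member of that family achieves dense agreement. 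And your rigidity lemma (two continuous $cf\!f$s agreeing on a dense set agree everywhere --- which is anyway just the standard fact about continuous functions) never becomes applicable, since the unresolved difficulty is precisely that you never produce even one $cf\!f$ agreeing with $\tilde F$ on a dense set.

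The paper closes this hole by a definitional choice rather than by an argument of the kind you are seeking: its (informal) notion of $cf\!f$ explicitly admits pointwise limits of $cf\!f$ expressions --- that is the point of its discussion of $2^x=\lim_{i\to\infty}\sqrt[i]{2^{\lfloor ix\rfloor}}$ --- so the displayed limit above \emph{is} a $cf\!f$ in the paper's sense, and Lemma~\ref{lem:main_no_cff} goes through by definition. Note that this choice has a cost your framework avoids: once limits are admitted, the inductive claim that every $cf\!f$ is differentiable off a nowhere-dense set becomes delicate, since pointwise limits notoriously destroy differentiability ($\tilde F$ itself is a pointwise limit of piecewise-linear $cf\!f$s), whereas with your finite-composition definition the structural lemma is sound but the bridge collapses. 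The two halves of the argument thus pull in opposite directions on what a $cf\!f$ is allowed to be. To finish your proof you must either adopt the paper's liberal definition and then justify the differentiability lemma for the enlarged class (at least for the expressions actually produced), or keep your definition and invent a genuinely different bridge from agreement at integers to agreement on an interval. As written, the proposal does not prove the theorem.
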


The rest of this Section constitutes the proof of Theorem~\ref{thm:no_closed_form_add}. 


\medskip

First, let me use an example of function $ 2^x : \mathbb{R} \longrightarrow  \mathbb{R} $ as an insight of what may be accepted as a $cf\!f$ for a \textit{continuous} function - like, say, $ \tilde{F}(x) $ - on the set $ \mathbb{R} $ of reals or on an interval thereof. One picks a dense\footnote{In the metric topology of $ \mathbb{R} $.} subset $ \mathbb{Q} $ of $ \mathbb{R} $,  with  a collection of mappings $ \rho_x(i) : \mathbb{N}  \longrightarrow \mathbb{Q} $, where $ x \in \mathbb{R} $, given by $ \rho_x(i) = \frac{\lfloor i \times x \rfloor}{i} $ so that $ \lim _{i \rightarrow \infty}\rho_x(i) = x $. Since for any $ x \in \mathbb{R} \setminus  \mathbb{Q} $, $ 2^x $ has been defined as 
\begin{equation}
2^x = \lim _{i \rightarrow \infty}  2^{\rho_x (i)} =  \lim _{i \rightarrow \infty}  \sqrt[i]{2^{\lfloor i \times x \rfloor}} ,
\end{equation}
 $ \lim _{i \rightarrow \infty}  \sqrt[i]{2^{\lfloor i \times x \rfloor}} $ is considered a $cf\!f$ $ \alpha : \mathbb{R}  \longrightarrow \mathbb{R}  $ for $ 2^x $.
 
 \medskip
 
%

\medskip

\begin{lem} \label{lem:cff_F_tilde}
For every positive integer $ n $,
\begin{equation}  \label{eq:proof5}
\tilde{F} (\frac{n}{2^{\lfloor \lg n \rfloor}}) = \frac{n ( \lfloor \lg n \rfloor + 2) - 2 B(n)}{2^{\lfloor \lg n \rfloor}}  -2,
\end{equation}
where the function $F$ has been defined by the equality~\eqref{eq:def_F} page~\pageref{eq:def_F}, 
the function $\tilde{F}  $, visualized on Figure~\ref{fig:Tak} \footnote{Also, together with its partial sums, on Figure~\ref{fig:Takagi_function_progression_5}, page~\pageref{fig:Takagi_function_progression_5}.}, has been defined  by the equality~\eqref{eq:def_lim2} page~\pageref{eq:def_lim2}, and the function $ B(n) $ has been defined by the equations \eqref{eq:rec_rel1} and \eqref{eq:rec_rel3} page~\pageref{eq:rec_rel1}.

\end{lem}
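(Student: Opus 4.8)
The plan is to evaluate $\tilde{F}$ at the point $\frac{n}{2^{\lfloor \lg n \rfloor}}$ directly from its series definition \eqref{eq:def_lim2} and then convert the resulting finite sum into $F(n)$, which is in turn expressible through $B(n)$. Writing $h = \lfloor \lg n \rfloor$, so that $\frac{n}{2^h} \in [1,2)$, I would substitute $x = \frac{n}{2^h}$ into \eqref{eq:def_lim2}. The argument of the $i$-th term is $2^i \cdot \frac{n}{2^h} = n\,2^{i-h}$, which is an integer as soon as $i \geq h$; since $\mbox{\textit{Zigzag}}$ vanishes at every integer, all terms with $i \geq h$ drop out, exactly as already recorded by \eqref{eq:Ftilde=ftilde_k} and \eqref{eq:f_k_simplified}. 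This truncates the infinite series to the finite sum $\sum_{i=0}^{h-1} 2^{-i}\,\mbox{\textit{Zigzag}}(\frac{n}{2^{h-i}})$.

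The next step is a re-indexing. Setting $k = h - i$ turns this sum into $\frac{1}{2^h}\sum_{k=1}^{h} 2^k\,\mbox{\textit{Zigzag}}(\frac{n}{2^k})$, and by the definition \eqref{eq:fractal} of $F$ the inner sum is precisely $F(n)$. Hence $\tilde{F}(\frac{n}{2^h}) = \frac{F(n)}{2^h}$, which is just the value at $x=n$ of the identity $\breve F(n)=F(n)$ from \eqref{eq:F_u=F}, read through \eqref{eq:def_F_u}. It then remains to rewrite $F(n)$ in terms of $B(n)$. For this I would use the other expression \eqref{eq:def_F} for $F$, in which $\sum_{k=0}^{h} 2^{k+1}\mbox{\textit{Zigzag}}(\frac{n}{2^{k+1}}) = 2\sum_{k=0}^{h} 2^{k}\mbox{\textit{Zigzag}}(\frac{n}{2^{k+1}})$, and substitute the closed form \eqref{eq:formula_Zigzag_B}, namely $\sum_{k=0}^{h} 2^{k}\mbox{\textit{Zigzag}}(\frac{n}{2^{k+1}}) = \frac{n}{2}(h+1) - B(n)$. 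This yields $F(n) = n(h+2) - 2B(n) - 2^{h+1}$, and dividing by $2^h$ produces $\frac{n(h+2)-2B(n)}{2^h} - 2$, which is exactly \eqref{eq:proof5}.

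I do not expect a genuine obstacle here; the argument is essentially careful bookkeeping anchored on two already-established identities. The one point deserving emphasis is that one cannot simply quote Theorem~\ref{thm:mainB} with $k = \lfloor \lg n \rfloor$, because its hypothesis $n \leq 2^k$ holds at $k=\lfloor\lg n\rfloor$ only when $n$ is an exact power of two; the present lemma is precisely the complementary boundary case $2^{k} \leq n < 2^{k+1}$ and so must be derived on its own. The only steps requiring real care are therefore the justification of the truncation (the vanishing of $\mbox{\textit{Zigzag}}$ at integers, already available in the excerpt) and the index substitution $k = h-i$, after which the claim reduces to the two proved identities \eqref{eq:fractal} and \eqref{eq:formula_Zigzag_B} together with elementary algebra.
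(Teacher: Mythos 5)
Your proposal is correct and takes essentially the same route as the paper's own proof: both rest on the two identities $\tilde{F}(\frac{n}{2^{\lfloor \lg n \rfloor}}) = \frac{1}{2^{\lfloor \lg n \rfloor}}F(n)$ (obtained by truncating the series \eqref{eq:def_lim2} at integer arguments of \textit{Zigzag} and re-indexing against \eqref{eq:fractal}) and $F(n) = n(\lfloor \lg n \rfloor + 2) - 2B(n) - 2^{\lfloor \lg n \rfloor + 1}$ (obtained by combining \eqref{eq:def_F} with \eqref{eq:formula_Zigzag_B}), which are then combined by elementary algebra. The only differences are the order of the steps and the direction in which the first identity is computed (you start from $\tilde{F}$ and truncate, the paper starts from $F(n)$ and extends to the infinite series), which is immaterial.
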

\begin{proof}

\begin{figure}
\centering
\includegraphics[width=0.7\linewidth]{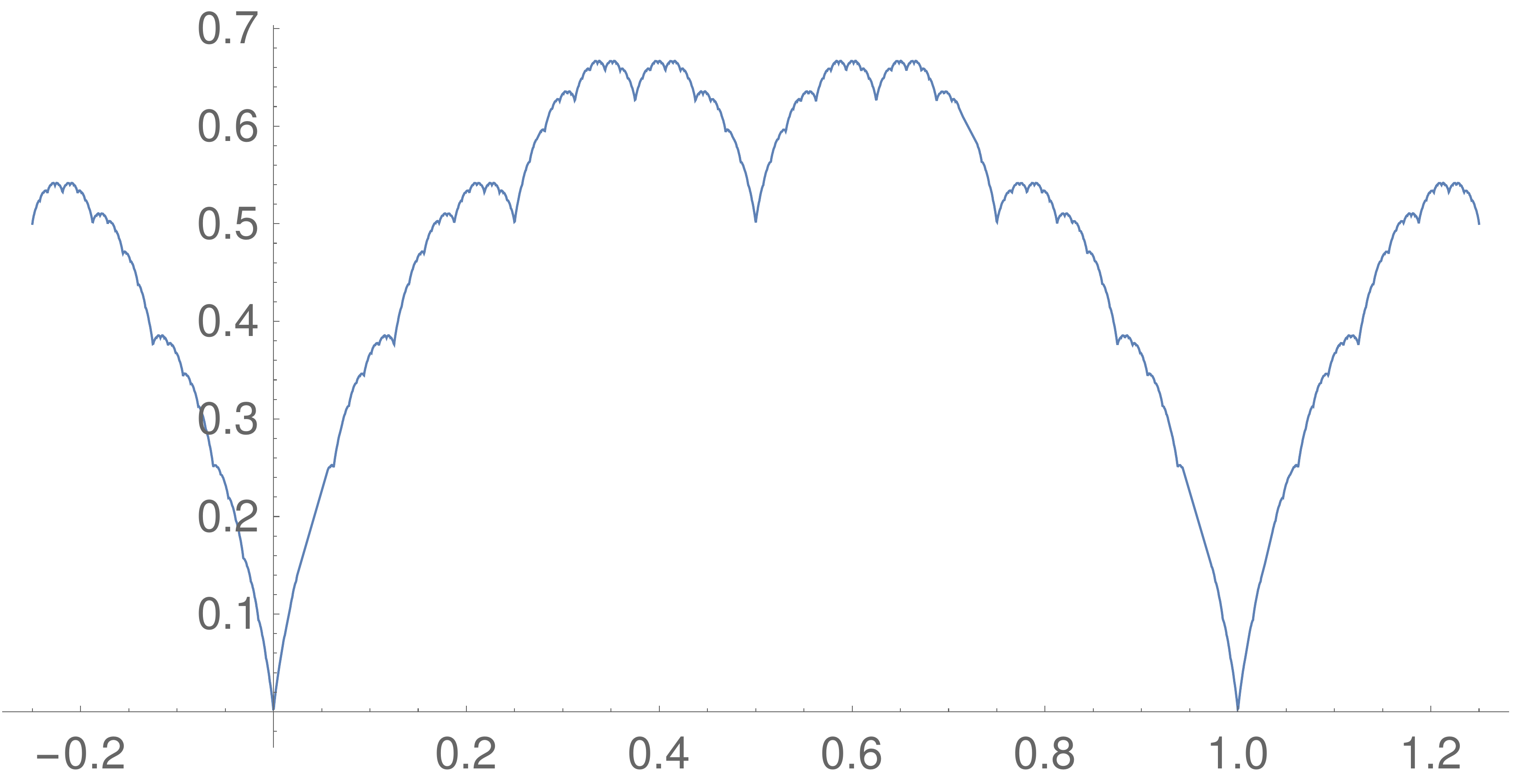} \caption{A graph of the Blancmange function $ \tilde{F}(x) = \sum _{i=0} ^{\infty} \frac{1}{2^i} Zigzag (2^i x) $. \label{fig:Tak}}
\end{figure}
%
%
%
From \eqref{eq:def_F} page~\pageref{eq:def_F}, I compute
\[ \sum _{i=0} ^{\lfloor \lg n \rfloor} 2^{i+1}  \mbox{\textit{Zigzag}}\,(\frac{n}{2^{i+1}}) =
 F(n) + 2^{\lfloor \lg n \rfloor + 1} - n ,\]
that is,
\begin{equation} \label{eq:proof1}
 \sum _{i=0} ^{\lfloor \lg n \rfloor} 2^{i}  \mbox{\textit{Zigzag}}\,(\frac{n}{2^{i+1}}) =
 \frac{1}{2} F(n) + 2^{\lfloor \lg n \rfloor} - \frac{n}{2} .
\end{equation}
Applying \eqref{eq:proof1} to the equality \eqref{eq:formula_Zigzag_B} page~\pageref{eq:formula_Zigzag_B}, I conclude
\[  B(n)  = \frac{n}{2}(\lfloor \lg n \rfloor + 1) - (\frac{1}{2} F(n) + 2^{\lfloor \lg n \rfloor} - \frac{n}{2}) , \]
or
\[  B(n)  = \frac{n}{2}(\lfloor \lg n \rfloor + 1) - \frac{1}{2} F(n) - 2^{\lfloor \lg n \rfloor} + \frac{n}{2} , \]
that is,
\[ \frac{1}{2} F(n)  = \frac{n}{2}(\lfloor \lg n \rfloor + 1) -  B(n) - 2^{\lfloor \lg n \rfloor} + \frac{n}{2} , \]
or
\begin{equation} \label{eq:proof2}
 F(n)  = n (\lfloor \lg n \rfloor + 2) -  2 B(n) - 2^{\lfloor \lg n \rfloor+1}  .
\end{equation}
On the other hand, by virtue of \eqref{eq:fractal} page~\pageref{eq:fractal},
\[  F(n) = \sum _{i=1} ^{\lfloor \lg n \rfloor} 2^{i}  \mbox{\textit{Zigzag}}\,(\frac{n}{2^{i}}) = \]
[putting $ j = \lfloor \lg n \rfloor - i $]
\[ = \sum _{j=0} ^{\lfloor \lg n \rfloor-1} 2^{\lfloor \lg n \rfloor-j}  \mbox{\textit{Zigzag}}\,(\frac{n}{2^{\lfloor \lg n \rfloor-j}}) = 
2^{\lfloor \lg n \rfloor} \sum _{j=0} ^{\lfloor \lg n \rfloor-1} \frac{1}{2^j}  \mbox{\textit{Zigzag}}\,(\frac{2^j n}{2^{\lfloor \lg n \rfloor}}) =\]
[since for $ j \geq \lfloor \lg n \rfloor $, $ \frac{2^j n}{2^{\lfloor \lg n \rfloor}} \in \mathbb{N} $ so that $ \mbox{\textit{Zigzag}}\,(\frac{2^j n}{2^{\lfloor \lg n \rfloor}}) = 0 $]
\[    = 
2^{\lfloor \lg n \rfloor} \sum _{j=0} ^{\infty} \frac{1}{2^j}  \mbox{\textit{Zigzag}}\,(\frac{2^j n}{2^{\lfloor \lg n \rfloor}}) =             \]
[by \eqref{eq:def_lim2} page~\pageref{eq:def_lim2}]
\[  2^{\lfloor \lg n \rfloor} \tilde{F} (\frac{n}{2^{\lfloor \lg n \rfloor}}) . \]
Thus,
\begin{equation} \label{eq:proof3}
F(n) =  2^{\lfloor \lg n \rfloor} \tilde{F} (\frac{n}{2^{\lfloor \lg n \rfloor}})
\end{equation}
or
\begin{equation} \label{eq:proof4}
\tilde{F} (\frac{n}{2^{\lfloor \lg n \rfloor}}) = \frac{1}{2^{\lfloor \lg n \rfloor}} F(n) .
\end{equation}
Combining equalities \eqref{eq:proof2} and \eqref{eq:proof4} yields
\[  \tilde{F} (\frac{n}{2^{\lfloor \lg n \rfloor}}) = \frac{1}{2^{\lfloor \lg n \rfloor}} ( n (\lfloor \lg n \rfloor + 2) -  2 B(n) - 2^{\lfloor \lg n \rfloor+1}) ,  \]
or \eqref{eq:proof5}.
\end{proof}

\begin{lem} \label{lem:main_no_cff}
If the function $ B(n) $ defined by the equations \eqref{eq:rec_rel1} and \eqref{eq:rec_rel3}  has a $ cf\!f $ $ \beta : \mathbb{N}  \longrightarrow \mathbb{N}  $ then the function  $ \tilde{F}(x)  $ defined by the equation \eqref{eq:def_lim2} page~\pageref{eq:def_lim2} has a $ cf\!f $ $ \varphi : [1,2)  \longrightarrow [0,\frac{2}{3}]  $.
\end{lem}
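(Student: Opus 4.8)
The plan is to leverage Lemma~\ref{lem:cff_F_tilde}, which already expresses the value of $\tilde{F}$ at each dyadic point $\frac{n}{2^{\lfloor \lg n \rfloor}}$ as an elementary combination of $n$, $\lfloor \lg n \rfloor$, and $B(n)$. Since $\tilde{F}$ is periodic with period $1$ and the points $\frac{n}{2^{\lfloor \lg n \rfloor}}$, as $n$ ranges over the positive integers, are exactly the dyadic rationals in $[1,2)$ (a dense subset of $[1,2)$), the strategy mirrors the treatment of $2^x$ given earlier in this section: approximate an arbitrary $x \in [1,2)$ by such dyadic points and pass to the limit, transporting the hypothesised $cf\!f$ for $B$ through the identity of Lemma~\ref{lem:cff_F_tilde}.

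Concretely, first I would fix $x \in [1,2)$ and, for each positive integer $k$, set $n_k = \lfloor 2^k x \rfloor$. Since $x \in [1,2)$ forces $2^k \leq 2^k x < 2^{k+1}$, one gets $2^k \leq n_k < 2^{k+1}$, hence $\lfloor \lg n_k \rfloor = k$, and moreover $\frac{n_k}{2^k} = \frac{\lfloor 2^k x \rfloor}{2^k} \to x$ as $k \to \infty$. Substituting $n = n_k$ into the identity \eqref{eq:proof5} of Lemma~\ref{lem:cff_F_tilde} gives
\[ \tilde{F}\!\left(\frac{n_k}{2^k}\right) = \frac{n_k (k+2) - 2 B(n_k)}{2^k} - 2. \]

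Next I would invoke the hypothesis that $B$ has a $cf\!f$ $\beta$. Replacing $B(n_k)$ by $\beta(n_k)$ and using the continuity of $\tilde{F}$ (it is the uniformly convergent Blancmange series \eqref{eq:def_lim2}, whose terms are bounded by $\frac{1}{2^{i+1}}$ via the Weierstrass $M$-test, hence continuous), I would let $k \to \infty$ to obtain
\[ \tilde{F}(x) = \lim_{k \to \infty} \left( \frac{\lfloor 2^k x \rfloor (k+2) - 2\,\beta(\lfloor 2^k x \rfloor)}{2^k} - 2 \right). \]
The right-hand side is assembled from $\beta$, the floor function, arithmetic operations, and a limit of the rational approximations $\frac{\lfloor 2^k x \rfloor}{2^k} \to x$, which are precisely the ingredients sanctioned as a $cf\!f$ in the $2^x$ example above. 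Defining $\varphi(x)$ to be this limit therefore yields a $cf\!f$ for $\tilde{F}$ on $[1,2)$. Its codomain is $[0,\frac{2}{3}]$ because $\tilde{F} \geq 0$ and, by periodicity, $\tilde{F}$ attains its maximum value $\frac{2}{3}$ at $\frac{4}{3}\in[1,2)$ (the image of $\frac{1}{3}$), as recorded in Section~\ref{sec:compfromfrac}.

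The main obstacle I anticipate is not the algebra but the bookkeeping around what legitimately counts as a $cf\!f$: one must check that the limit presentation above genuinely falls under the liberal notion of closed form established by the $2^x$ discussion (a limit of an elementary, $\beta$-expressible sequence indexed by rational approximations to $x$), and that the single choice $n_k = \lfloor 2^k x \rfloor$ simultaneously pins down $\lfloor \lg n_k \rfloor = k$ and produces a sequence converging to $x$. Continuity of $\tilde{F}$ is what licenses interchanging the limit with evaluation, so I would state it explicitly; everything else reduces to the identity already proved in Lemma~\ref{lem:cff_F_tilde}.
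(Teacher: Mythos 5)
Your proposal is correct and follows essentially the same route as the paper's own proof: both approximate $x \in [1,2)$ by the dyadic points $\frac{\lfloor 2^k x\rfloor}{2^k}$, observe that $\lfloor \lg \lfloor 2^k x\rfloor \rfloor = k$, substitute into the identity of Lemma~\ref{lem:cff_F_tilde}, and use the continuity of $\tilde{F}$ to pass to the limit, thereby converting the hypothesised $cf\!f$ $\beta$ for $B$ into the $cf\!f$ $\varphi(x) = \lim_{k\to\infty}\bigl(\frac{\lfloor 2^k x\rfloor(k+2) - 2\beta(\lfloor 2^k x\rfloor)}{2^k} - 2\bigr)$ for $\tilde{F}$ on $[1,2)$. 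The only differences are cosmetic: the paper separates out the term $2x-2$ from the limit, whereas you keep it inside, and you additionally spell out the Weierstrass $M$-test justification for continuity and the codomain bound, which the paper leaves implicit.
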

\begin{proof}
Let
\begin{equation} \label{eq:def_D}
D = \{ \frac{n}{2^{\lfloor \lg n \rfloor}} \mid n \in \mathbb{N} \} 
\end{equation}
be the set of rationals in the interval $ [1,2) $ with finite binary expansions\footnote{It is a trivial exercise to show that every real number with finite binary expansion in the interval $ [1,2) $ is of the form $ \frac{n}{2^{\lfloor \lg n \rfloor}} $ for some $ n \in \mathbb{N} $, and it is obvious that every real number of that form has a finite binary expansion and falls into that interval.}, enumerated by $ \nu (n) : \mathbb{N}  \longrightarrow D $ given by $ \nu (n) = \frac{n}{2^{\lfloor \lg n \rfloor}} $ and visualized on Figure~\ref{fig:mapping}.

\begin{figure} [h]
\centering
\includegraphics[width=0.7\linewidth]{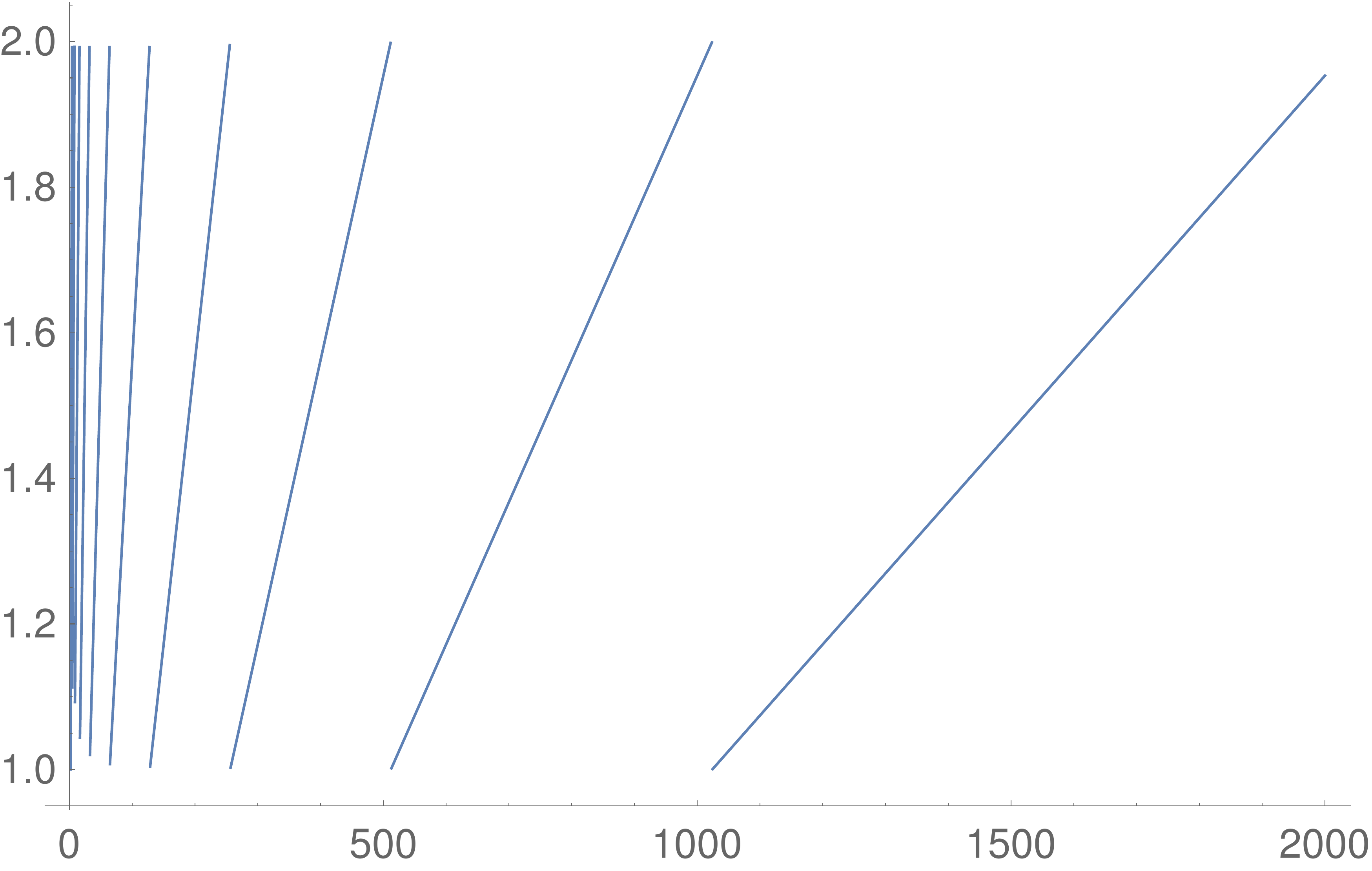}
\caption{A graph of enumeration $ \nu (n) = \frac{n}{2^{\lfloor \lg n \rfloor}} $ of the set $ D $.}
\label{fig:mapping}
\end{figure}

\medskip

$ D $ is a dense subset of the interval $ [1,2) $ of reals. Indeed, if $ x \in [1,2) $ then for every $ n \in \mathbb{N} $, $ \frac{\lfloor 2^n x \rfloor}{2^n} \in D $ and
\begin{equation} \label{eq:convD}
\lim _{n \rightarrow \infty} \frac{\lfloor 2^n x \rfloor}{2^n} = x.
\end{equation}

Hence, for any $ x \in [1,2) $, putting 
\begin{equation} \label{eq:n=floor}
 n = \lfloor 2^i x \rfloor   , 
\end{equation}
so that
\[ {\lfloor \lg n \rfloor}  =   {\lfloor \lg \lfloor 2^i x \rfloor \rfloor}  =
  {\lfloor \lg  2^i x  \rfloor}  =  { \lfloor i+ \lg   x  \rfloor}  ={i+ \lfloor  \lg   x  \rfloor}  = i \]
  [the last equality holds because $ 1 \leq x < 2 $ so that $ 0 \leq \lg   x < 1 $ and $ \lfloor  \lg   x  \rfloor = 0 $],
or
\begin{equation} \label{eq:lg=i}
{\lfloor \lg n \rfloor} = i ,
\end{equation}
we conclude, by virtue of \eqref{eq:convD},
\begin{equation} \nonumber 
\tilde{F}(x) = \tilde{F}(\lim _{i \rightarrow \infty} \frac{\lfloor 2^i x \rfloor}{2^i}) =
\end{equation}
[by the continuity of $ \tilde{F}(x) $]
\[ = \lim _{i \rightarrow \infty} \tilde{F}( \frac{\lfloor 2^i x \rfloor}{2^i}) = \]
[by the equality \eqref{eq:proof5} of Lemma~\ref{lem:cff_F_tilde}]
\[ = \lim _{i \rightarrow \infty} \frac{ \lfloor 2^i x \rfloor (i + 2) -  2 B(\lfloor 2^i x \rfloor)}{2^{i}} - 2 = 
 \lim _{i \rightarrow \infty} \frac{i \lfloor 2^i x \rfloor  -  2 B(\lfloor 2^i x \rfloor)}{2^{i}} +  \lim _{i \rightarrow \infty} 2\frac{ \lfloor 2^i x \rfloor  }{2^{i}} - 2 =\]
 [by the equality \eqref{eq:convD}]
 \[ \lim _{i \rightarrow \infty} \frac{i \lfloor 2^i x \rfloor  -  2 B(\lfloor 2^i x \rfloor)}{2^{i}} +  2x - 2.  \]
 Thus, for any $ x \in [1,2) $,
 \begin{equation} \label{eq:proof6}
\tilde{F}(x) =  \lim _{i \rightarrow \infty} \frac{i \lfloor 2^i x \rfloor  -  2 B(\lfloor 2^i x \rfloor)}{2^{i}}  +  2x - 2  .
 \end{equation}
The equality \eqref{eq:proof6} shows that if there is a $ cf\!f $ $ \beta : \mathbb{N} \longrightarrow  \mathbb{N} $ for function $ B $ defined by the equations \eqref{eq:rec_rel1} and \eqref{eq:rec_rel3} page~\pageref{eq:rec_rel1} then there is a $ cf\!f $ $ \varphi : [1,2) \longrightarrow [0,\frac{2}{3}] $  given by 
\[  \lim _{i \rightarrow \infty} \frac{i \lfloor 2^i x \rfloor  -  2 \beta(\lfloor 2^i x \rfloor)}{2^{i}}   +  2x - 2  \]
for   $  \tilde{F}(x)  $.
 This completes the proof of Lemma~\ref{lem:main_no_cff}.
\end{proof}

\medskip

Should the nowhere-differentiable function $  \tilde{F}  $ have a $ cf\!f $, it would be differentiable everywhere except, perhaps, on a non-dense subset of $ \mathbb{R} $. The following inductive argument demonstrates that. All atomic $ cf\!f $s are differentiable except, perhaps, on a non-dense subset of $ \mathbb{R} $. If a finite number of $ cf\!f $s are differentiable except, perhaps, on non-dense subsets of $ \mathbb{R} $ then their composition is differentiable except, perhaps, on non-dense subsets of $ \mathbb{R} $. \footnote{For instance, function $ \sqrt{x}Zigzag(\frac{1}{x}) $ is differentiable on $ [0,1] $, except for the non-dense set $ \{0\} \cup \{ \frac{1}{n} \mid n \in \mathbb{N} \} $.} Thus $  \tilde{F}  $ has no $ cf\!f $.

\medskip

The above observation, together with Lemma~\ref{lem:main_no_cff}, complete the proof of Theorem~\ref{thm:no_closed_form_add}.


\section{Proof of Theorem~\ref{thm:mainB} page~\pageref{thm:mainB}, Section~\ref{sec:compfromfrac} \label{sec:comment}}

In this Section, I provide an analytic proof of experimentally-derived Theorem~\ref{thm:mainB} page~\pageref{thm:mainB}, Section~\ref{sec:compfromfrac}. This result, re-stated by Theorem~\ref{thm:mainB_add} below, allows for practically efficient computations of values of the continuous Blackmange function  for reals with finite binary floating-point representations. I also provide some properties (Lammas~\ref{lem:100}, \ref{lem:200}, and \ref{lem:300}) of the \textit{Zigzag} function, given by the equality \eqref{eq:def_zigzag} page~\pageref{eq:def_zigzag} and visualized on Figure~\ref{fig:zigzag} page~\pageref{fig:zigzag}, that are useful for a neat derivation of a formula for the Blancmange function as (the limit of) a finite sum of some values of the \textit{Zigzag} function. 

\medskip

Let the function\footnote{Known as the Blancmange function.} 
$\tilde{F}  $, visualized on Figure~\ref{fig:Tak} page~\pageref{fig:Tak}, be defined by
\eqref{eq:def_lim2} page~\pageref{eq:def_lim2}, and $ B(n) $, given by \eqref{eq:rec_rel1} and \eqref{eq:rec_rel3} page~\pageref{eq:rec_rel1}, be the least number of comparisons of keys that $ {\tt MergeSort} $ performs while sorting an $ n $-element array.

\begin{thm} \label{thm:mainB_add} {\rm (Same as Theorem~\ref{thm:mainB}.)}
For every positive integer $ n $ \footnote{Of course, one if free to assume that $ n $ is odd here.} and integer $ k $ with $ n \leq 2^k $,
\begin{equation} \label{eq:mainB_add}
\tilde{F} (\frac{n}{2^k}) = \frac{n \times k - 2 B(n)}{2^k} .
\end{equation} 
\end{thm}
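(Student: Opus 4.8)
The plan is to prove the identity \eqref{eq:mainB_add} by reducing the general case $n \le 2^k$ to the already-available case $k = \lfloor \lg n \rfloor$, which is exactly the content of Lemma~\ref{lem:cff_F_tilde}. First I would observe that Lemma~\ref{lem:cff_F_tilde}, namely \eqref{eq:proof5}, can be rewritten so that the term $-2$ is absorbed into the fraction: since $2 = 2 \cdot 2^{\lfloor \lg n \rfloor}/2^{\lfloor \lg n \rfloor}$, the equality \eqref{eq:proof5} becomes $\tilde{F}(n/2^{\lfloor \lg n \rfloor}) = (n(\lfloor \lg n \rfloor + 2) - 2B(n) - 2 \cdot 2^{\lfloor \lg n \rfloor})/2^{\lfloor \lg n \rfloor}$. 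Combined with \eqref{eq:proof3}, which states $F(n) = 2^{\lfloor \lg n \rfloor}\tilde{F}(n/2^{\lfloor \lg n \rfloor})$, this already pins down $\tilde{F}$ at the single scale $k = \lfloor \lg n \rfloor$; the task is to show the formula persists at every coarser scale $k \ge \lfloor \lg n \rfloor$.

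The key step is a self-similarity (scaling) relation for $\tilde{F}$ at dyadic rationals. Because $\tilde{F}$ is periodic of period $1$ and, by \eqref{eq:def_lim2}, satisfies $\tilde{F}(x) = \mbox{\textit{Zigzag}}(x) + \tfrac{1}{2}\tilde{F}(2x)$, I can relate the value at $n/2^k$ to the value at $n/2^{k-1}$. Concretely, for $n < 2^{k}$ one has $n/2^{k} \in [0,1)$, and I would iterate the halving identity, or equivalently use the already-established fact \eqref{eq:Ftilde=ftilde_k} together with \eqref{eq:f_k_simplified}, to express $\tilde{F}(n/2^{k})$ as a finite sum $\sum_{i=0}^{k-1} \tfrac{1}{2^i}\mbox{\textit{Zigzag}}(2^i n/2^{k})$. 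The plan is then a direct induction on $k \ge \lfloor \lg n \rfloor$: the base case $k = \lfloor \lg n \rfloor$ is Lemma~\ref{lem:cff_F_tilde} (in its rewritten form), and the inductive step must verify that passing from $k$ to $k+1$ multiplies the claimed numerator-over-denominator expression $(nk - 2B(n))/2^k$ into $(n(k+1) - 2B(n))/2^{k+1}$ consistently.

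The inductive step is where the arithmetic bookkeeping lives, and it is the step I expect to be the main obstacle. Assuming \eqref{eq:mainB_add} at level $k$, I must show it at level $k+1$; using the halving relation $\tilde{F}(n/2^{k+1}) = \mbox{\textit{Zigzag}}(n/2^{k+1}) + \tfrac{1}{2}\tilde{F}(n/2^{k})$ and the hypothesis, the right-hand side becomes $\mbox{\textit{Zigzag}}(n/2^{k+1}) + \tfrac{1}{2}\cdot(nk - 2B(n))/2^{k}$. Comparing this to the target $(n(k+1) - 2B(n))/2^{k+1} = (nk - 2B(n))/2^{k+1} + n/2^{k+1}$, the required identity collapses to showing $\mbox{\textit{Zigzag}}(n/2^{k+1}) = n/2^{k+1}$ for $k+1 > \lfloor \lg n \rfloor$. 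This is precisely the observation that when $n/2^{k+1} < \tfrac{1}{2}$ the quantity $\mbox{\textit{Zigzag}}(n/2^{k+1}) = n/2^{k+1} - \lfloor n/2^{k+1}\rfloor = n/2^{k+1}$, which holds exactly when $2^{k+1} > 2n$, i.e. $k+1 \ge \lfloor \lg n \rfloor + 2$.

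The remaining delicate case is therefore the single transition at $k+1 = \lfloor \lg n \rfloor + 1$, where $n/2^{k+1} \in [\tfrac{1}{2}, 1)$ and $\mbox{\textit{Zigzag}}(n/2^{k+1}) = 1 - n/2^{k+1}$ rather than $n/2^{k+1}$; here I would instead verify the formula directly from Lemma~\ref{lem:cff_F_tilde} and the definition of $B(n)$ via \eqref{eq:formula_Zigzag_B}, or equivalently invoke the alternative characterization \eqref{eq:mainAlt} which was derived precisely by setting $k = \lceil \lg n \rceil$ in this very identity. An alternative, cleaner route that sidesteps the case split entirely is to substitute the finite-sum form \eqref{eq:f_k_simplified} for $\tilde{F}(n/2^k)$ directly, obtaining $\tilde{F}(n/2^k) = 2^{-k+1}\sum_{j=1}^{k} 2^{j-1}\mbox{\textit{Zigzag}}(n/2^{j})\cdot\ldots$, and then recognize the resulting finite sum of \textit{Zigzag} values as $(nk - 2B(n))/2^k$ by appealing to the closed form \eqref{eq:formula_Zigzag_B} for $B(n)$; the equality then follows by Corollary~\ref{cor:formula_level_Zigzag_2} applied term by term, with no induction needed. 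I would lead with this direct substitution approach and keep the inductive argument as a fallback, since the direct approach reduces everything to the already-proven Theorem~\ref{thm:formula_level_Zigzag} and its corollaries.
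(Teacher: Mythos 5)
Your proposal is correct, and your lead (inductive) route is organized genuinely differently from the paper's, even though it is built from the same two ingredients. The paper proves three small lemmas: Lemma~\ref{lem:100} (that $2^i\,\mbox{\textit{Zigzag}}\,(\frac{n}{2^i})=n$ once $i\geq\lfloor\lg n\rfloor+2$), Lemma~\ref{lem:200} (the tail sum $\sum_{i=\lfloor\lg n\rfloor+2}^{k}2^i\,\mbox{\textit{Zigzag}}\,(\frac{n}{2^i})=nk-n(\lfloor\lg n\rfloor+1)$, by a routine induction on $k$), and Lemma~\ref{lem:300} (that $2^k\tilde{F}(\frac{n}{2^k})=\sum_{i=1}^{k}2^i\,\mbox{\textit{Zigzag}}\,(\frac{n}{2^i})$, by truncating the infinite sum \eqref{eq:def_lim2}); it then splits this finite sum at $\lfloor\lg n\rfloor+1$ and evaluates the head via \eqref{eq:formula_Zigzag_B}. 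Your fallback ``direct substitution'' route is exactly this argument: your use of \eqref{eq:Ftilde=ftilde_k} and \eqref{eq:f_k_simplified} reproduces Lemma~\ref{lem:300}, and what actually closes the computation is the tail evaluation of Lemma~\ref{lem:100} (each term past index $\lfloor\lg n\rfloor+1$ equals $n$) --- not Corollary~\ref{cor:formula_level_Zigzag_2}, whose role ended with the derivation of \eqref{eq:formula_Zigzag_B} itself. Your lead route instead runs the induction on $\tilde{F}$ directly, via the halving relation $\tilde{F}(x)=\mbox{\textit{Zigzag}}\,(x)+\frac{1}{2}\tilde{F}(2x)$, anchored through Lemma~\ref{lem:cff_F_tilde}; the paper's only induction is hidden inside Lemma~\ref{lem:200}, on the tail sum. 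Your version makes the self-similar structure of $\tilde{F}$ do the work and, importantly, isolates the one delicate transition $k=\lfloor\lg n\rfloor\to\lfloor\lg n\rfloor+1$, where the surplus $\frac{2n}{2^{\lfloor\lg n\rfloor}}-2$ in \eqref{eq:proof5} is exactly absorbed by \textit{Zigzag} switching branches (your arithmetic there checks out: the total collapses to $\frac{n(\lfloor\lg n\rfloor+1)-2B(n)}{2^{\lfloor\lg n\rfloor+1}}$). This is the same phenomenon behind the paper's closing Note that the identity fails for $n>2^k$, so your case split is not an artifact --- it is forced.

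One caution on the transition step: do not replace the direct verification by ``invoking the alternative characterization \eqref{eq:mainAlt}.'' In the paper, \eqref{eq:mainAlt} is \emph{derived from} Theorem~\ref{thm:mainB} (by setting $k=\lceil\lg n\rceil$ in \eqref{eq:mainB2}), so that variant of your argument would be circular. Your primary handling --- computing the transition from Lemma~\ref{lem:cff_F_tilde} and the halving relation --- is sound and should be the one you keep.
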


 \medskip 
 
 The reminder of this Section constitutes a proof of Theorem~\ref{thm:mainB_add}. 
 
 \medskip
 
 \textit{Note}. Function \textit{Zigzag}, visualized on Figure~\ref{fig:zigzag} page~\pageref{fig:zigzag}, has been defined by \eqref{eq:def_zigzag} page~\pageref{eq:def_zigzag}.

 \begin{lem} \label{lem:100}
 For every $ k \geq \lfloor \lg n \rfloor + 2 $,
 \begin{equation} \label{eq:100x}
 2^k Zigzag (\frac{n}{2^k}) = n.
 \end{equation}
 \end{lem}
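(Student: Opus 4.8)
The plan is to show that the hypothesis $k \geq \lfloor \lg n \rfloor + 2$ forces the argument $\frac{n}{2^k}$ into the interval $(0,\tfrac12)$, on which the \textit{Zigzag} function is simply the identity. Once that is established, the claimed equality \eqref{eq:100x} is immediate upon multiplying by $2^k$.

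First I would pin down the location of $\frac{n}{2^k}$. The defining property of $\lfloor \lg n \rfloor$ gives the standard bound $n < 2^{\lfloor \lg n \rfloor + 1}$, and the hypothesis gives $2^k \geq 2^{\lfloor \lg n \rfloor + 2}$. Combining these,
\[
0 < \frac{n}{2^k} \leq \frac{n}{2^{\lfloor \lg n \rfloor + 2}} < \frac{2^{\lfloor \lg n \rfloor + 1}}{2^{\lfloor \lg n \rfloor + 2}} = \frac{1}{2},
\]
where the strict lower bound uses that $n$ is a positive integer. Thus $\frac{n}{2^k}$ lies strictly between $0$ and $\tfrac12$.

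Next I would evaluate \textit{Zigzag} on this range. For any $x$ with $0 < x < \tfrac12$ we have $\lfloor x \rfloor = 0$ and $\lceil x \rceil = 1$, so the definition \eqref{eq:def_zigzag} reduces to $\text{\textit{Zigzag}}(x) = \min(x, 1 - x) = x$, the minimum being attained by the first term precisely because $x < \tfrac12$. Applying this with $x = \frac{n}{2^k}$ yields $\text{\textit{Zigzag}}\!\left(\frac{n}{2^k}\right) = \frac{n}{2^k}$, and multiplying through by $2^k$ gives $2^k\,\text{\textit{Zigzag}}\!\left(\frac{n}{2^k}\right) = n$, which is \eqref{eq:100x}.

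There is no real obstacle here; the only step requiring care is the interval bound, and specifically the ``slack'' of $2$ in the exponent: a threshold of $\lfloor \lg n \rfloor + 1$ would only guarantee $\frac{n}{2^k} < 1$ and thus place the argument in $[0,1)$ where \textit{Zigzag} is piecewise linear but not globally the identity, whereas the extra factor of $2$ guarantees we stay below $\tfrac12$ and hence on the ascending branch throughout. I would therefore emphasize that the hypothesis $k \geq \lfloor \lg n \rfloor + 2$ is exactly what confines $\frac{n}{2^k}$ to the linear piece $\text{\textit{Zigzag}}(x) = x$.
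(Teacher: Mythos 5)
Your proposal is correct and follows essentially the same route as the paper's proof: both use the hypothesis to force $2^k > 2n$ (equivalently $\frac{n}{2^k} < \frac{1}{2}$), note that then $\lfloor \frac{n}{2^k} \rfloor = 0$ and $\lceil \frac{n}{2^k} \rceil = 1$, and conclude that the minimum in the definition of \textit{Zigzag} is attained by the ascending branch. The only cosmetic difference is that the paper multiplies by $2^k$ before taking the minimum, obtaining $\min\{n, 2^k - n\} = n$, whereas you evaluate $\min(x, 1-x) = x$ first and then scale.
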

 
 \begin{proof}
 Let $ k \geq \lfloor \lg n \rfloor + 2 $, or $ 2^k \geq 2 \times 2^{\lfloor \lg n \rfloor+1}  > 2n$, that is
 \begin{equation} \label{eq:110x}
 2^k > 2n.
 \end{equation}
  We have:
\[0 \leq \lfloor \frac{n}{2^k} \rfloor \leq \]  
[by \eqref{eq:110x}] 
\[\leq \lfloor \frac{n}{2n} \rfloor = \lfloor \frac{1}{2} \rfloor =0 \]
 or
\begin{equation} \label{eq:120}
 \lfloor \frac{n}{2^k} \rfloor = 0.
 \end{equation}
 Also,
\[1 \leq \lceil \frac{n}{2^k} \rceil \leq \]  
[by \eqref{eq:110x}] 
\[\leq \lceil \frac{n}{2n} \rceil = \lceil \frac{1}{2} \rceil =1 \]
 or
\begin{equation} \label{eq:130}
 \lceil \frac{n}{2^k} \rceil = 1.
 \end{equation}
 Now,
 \[2^k Zigzag (\frac{n}{2^k}) =\]
 [by \eqref{eq:def_zigzag} page~\pageref{eq:def_zigzag}]
 \[ = 2^k \min \{ \frac{n}{2^k} - \lfloor \frac{n}{2^k} \rfloor,\lceil \frac{n}{2^k} \rceil - \frac{n}{2^k} \} = \min \{ n - 2^k \lfloor \frac{n}{2^k} \rfloor,2^k \lceil \frac{n}{2^k} \rceil - n \} = \]
 [by \eqref{eq:120} and \eqref{eq:130}]
 \[ = \min \{ n ,2^k  - n \} = \]
 [since by  \eqref{eq:110x}, $ 2^k  - n \geq n $]
 \[=n.\]
 Hence, \eqref{eq:100x} holds.
 \end{proof}

\begin{lem} \label{lem:200}
For every $ k \geq \lfloor \lg n \rfloor + 1 $,
\begin{equation} \label{eq:200}
\sum_{i=\lfloor \lg n \rfloor + 2}^{k} 2^i Zigzag (\frac{n}{2^i}) = n \times k - n(\lfloor \lg n \rfloor + 1).
\end{equation}
\end{lem}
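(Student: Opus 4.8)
The plan is to read off the claim as an immediate consequence of Lemma~\ref{lem:100}, since every summand on the left-hand side of \eqref{eq:200} carries an index $i$ that lies in the range where that lemma applies. The only subtlety is the boundary: the hypothesis permits $k = \lfloor \lg n \rfloor + 1$, for which the summation is empty, whereas Lemma~\ref{lem:100} requires $i \geq \lfloor \lg n \rfloor + 2$. So I would organize the argument as a short induction on $k$ anchored at that empty base case, which sidesteps any need to invoke Lemma~\ref{lem:100} outside its stated range.

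First I would dispose of the base case $k = \lfloor \lg n \rfloor + 1$. Here the summation index $i$ would run from $\lfloor \lg n \rfloor + 2$ up to $\lfloor \lg n \rfloor + 1$, i.e.\ over the empty set, so the left-hand side of \eqref{eq:200} is $0$; and the right-hand side is $n(\lfloor \lg n \rfloor + 1) - n(\lfloor \lg n \rfloor + 1) = 0$ as well, so the base case holds trivially.

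For the inductive step I would assume \eqref{eq:200} holds for some $k \geq \lfloor \lg n \rfloor + 1$ and prove it for $k+1$. Splitting off the new top term gives
\[
\sum_{i=\lfloor \lg n \rfloor + 2}^{k+1} 2^i Zigzag\Big(\frac{n}{2^i}\Big) = \sum_{i=\lfloor \lg n \rfloor + 2}^{k} 2^i Zigzag\Big(\frac{n}{2^i}\Big) + 2^{k+1} Zigzag\Big(\frac{n}{2^{k+1}}\Big).
\]
Since $k+1 \geq \lfloor \lg n \rfloor + 2$, Lemma~\ref{lem:100} evaluates the last term to exactly $n$, while the induction hypothesis rewrites the remaining sum as $nk - n(\lfloor \lg n \rfloor + 1)$; adding the two produces $n(k+1) - n(\lfloor \lg n \rfloor + 1)$, which is precisely the right-hand side of \eqref{eq:200} with $k$ replaced by $k+1$. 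This closes the induction.

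I expect essentially no hard part here: the entire content of \eqref{eq:200} is that each of the $k - \lfloor \lg n \rfloor - 1$ terms of the sum equals $n$ by Lemma~\ref{lem:100}, so the sum collapses to $n(k - \lfloor \lg n \rfloor - 1) = nk - n(\lfloor \lg n \rfloor + 1)$. The only point demanding any care is keeping the boundary consistent, so that every index appearing in a \emph{nonempty} sum genuinely satisfies $i \geq \lfloor \lg n \rfloor + 2$ and Lemma~\ref{lem:100} is legitimately applicable; the induction handles this automatically by invoking the lemma only in the step, never at the empty base case.
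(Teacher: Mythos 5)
Your proof is correct and takes essentially the same route as the paper's own: induction on $k$ anchored at the empty-sum base case $k = \lfloor \lg n \rfloor + 1$, with Lemma~\ref{lem:100} evaluating the newly split-off term to $n$ in the inductive step. If anything, your write-up is slightly more careful than the paper's, which in its inductive step mistakenly writes the split-off term as $2^k Zigzag(\frac{n}{2^k})$ instead of $2^{k+1} Zigzag(\frac{n}{2^{k+1}})$ -- a typo that your version implicitly corrects.
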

\begin{proof}
By induction on $ k $.

\medskip

\textit{Basis step}: $ k = \lfloor \lg n \rfloor + 1 $.

\medskip

\[ L = \sum_{i=\lfloor \lg n \rfloor + 2}^{\lfloor \lg n \rfloor + 1} 2^i Zigzag (\frac{n}{2^i}) = 0. \]
\[R = n (\lfloor \lg n \rfloor + 1) - n (\lfloor \lg n \rfloor + 1) = 0. \]
Hence, $ L=R $. This completes the \textit{Basis step}.

\bigskip

\textit{Inductive step}: $ k \geq \lfloor \lg n \rfloor + 2 $.

\medskip

\textit{Inductive hypothesis}: \eqref{eq:200}.

\medskip

\[ \sum_{i=\lfloor \lg n \rfloor + 2}^{k+1} 2^i Zigzag (\frac{n}{2^i}) = \sum_{i=\lfloor \lg n \rfloor + 2}^{k} 2^i Zigzag (\frac{n}{2^i}) +  2^k Zigzag (\frac{n}{2^k}) = \]
[by the \textit{Inductive hypothesis} and by the equality~\eqref{eq:100x} of Lemma~\ref{lem:100}]
\[= n \times k - n(\lfloor \lg n \rfloor + 1) +  n = n \times (k+1) - n(\lfloor \lg n \rfloor + 1) .\]
Thus
\[\sum_{i=\lfloor \lg n \rfloor + 2}^{k+1} 2^i Zigzag (\frac{n}{2^i}) = n \times (k+1) - n(\lfloor \lg n \rfloor + 1) .\]
This completes the \textit{Inductive step}.
\end{proof}

\begin{lem}  \label{lem:300}
For every $ k \geq \lfloor \lg n \rfloor + 1 $,
\begin{equation} \label{eq:300}
2^k \tilde{F} (\frac{n}{2^k}) =  \sum_{i=1}^{k} 2^i Zigzag (\frac{n}{2^i}).
\end{equation}
\end{lem}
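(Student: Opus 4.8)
The plan is to expand the left-hand side directly from the series definition \eqref{eq:def_lim2} of the Blancmange function and then reindex the resulting sum. Substituting $x = \frac{n}{2^k}$ into \eqref{eq:def_lim2} and using the identity $2^i \cdot \frac{n}{2^k} = \frac{n}{2^{k-i}}$ gives
\[
\tilde{F}\!\left(\frac{n}{2^k}\right) = \sum_{i=0}^{\infty} \frac{1}{2^i}\, Zigzag\!\left(\frac{n}{2^{k-i}}\right),
\]
so that after multiplying through by $2^k$ I obtain
\[
2^k \tilde{F}\!\left(\frac{n}{2^k}\right) = \sum_{i=0}^{\infty} 2^{k-i}\, Zigzag\!\left(\frac{n}{2^{k-i}}\right).
\]
The substitution $j = k - i$ then turns this into a sum of the terms $2^j\, Zigzag(\frac{n}{2^j})$ ranging over all integers $j \leq k$. (The series converges absolutely since $|Zigzag| \leq \frac{1}{2}$, so this rearrangement is harmless.)

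Next I would argue that every term with $j \leq 0$ vanishes. Indeed, for such $j$ the quantity $\frac{n}{2^j} = n \cdot 2^{-j}$ is a non-negative integer, because $n$ is a positive integer and $-j \geq 0$; and by the definition \eqref{eq:def_zigzag} the function $Zigzag$ is zero at every integer. Hence only the terms with $1 \leq j \leq k$ survive, and these are exactly the right-hand side $\sum_{j=1}^{k} 2^j\, Zigzag(\frac{n}{2^j})$, which establishes \eqref{eq:300}.

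Alternatively, and perhaps more cleanly since it avoids manipulating an infinite series, I could start from the already-established equality \eqref{eq:Ftilde=ftilde_k}, namely $\tilde{F}(\frac{n}{2^k}) = \tilde{f}_k(\frac{n}{2^k})$, and then invoke the finite closed form \eqref{eq:f_k_simplified} for $\tilde{f}_k$. This immediately yields $\tilde{F}(\frac{n}{2^k}) = \sum_{i=0}^{k-1} \frac{1}{2^i}\, Zigzag(\frac{n}{2^{k-i}})$, and the same reindexing $j = k-i$ (now running only over $1 \leq j \leq k$) produces the claim with no tail to discard, since the truncation has already been performed in passing from $\tilde{F}$ to $\tilde{f}_k$.

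I do not expect a genuine obstacle here: the only points requiring care are the bookkeeping of the index substitution and the justification that the tail $j \leq 0$ contributes nothing. Note that the hypothesis $k \geq \lfloor \lg n \rfloor + 1$ is not actually needed for the identity itself, the vanishing of the tail using only that $n$ is a positive integer; so I would prove the statement as stated while observing that it in fact holds for every non-negative integer $k$, retaining the stated bound merely to match the way Lemma~\ref{lem:300} is applied in the proof of Theorem~\ref{thm:mainB_add}.
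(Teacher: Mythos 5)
Your proof is correct and takes essentially the same route as the paper's: both expand $\tilde{F}(\frac{n}{2^k})$ via the series definition \eqref{eq:def_lim2}, discard the tail using the fact that \textit{Zigzag} vanishes at integer arguments, and reindex with $j = k-i$; you merely reindex before truncating, whereas the paper truncates (at $i \geq k$) before reindexing. Your side remark is also accurate: the hypothesis $k \geq \lfloor \lg n \rfloor + 1$ is never used, so the identity holds for every non-negative integer $k$, just as in the paper's own argument.
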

\begin{proof}
By the definition  
\eqref{eq:def_lim2} page~\pageref{eq:def_lim2} of  function $\tilde{F}  $ , we get:
\[ 2^k \tilde{F} (\frac{n}{2^k}) = 2^k \sum _{i=0} ^{\infty} \frac{1}{2^i} Zigzag (2^i \frac{n}{2^k}) = \]
[since for every integer $ x $, $ Zigzag (x) = 0 $, so that for $ i \geq k $, $ Zigzag (2^i \frac{n}{2^k}) = 0 $]
\[ = 2^k \sum _{i=0} ^{k-1} \frac{1}{2^i} Zigzag (2^i \frac{n}{2^k}) = 
 \sum _{i=0} ^{k-1} 2^{k-i} Zigzag ( \frac{n}{2^{k-i}}) =
\]
[putting $ j = k - i $]
\[ = \sum _{j=1} ^{k} 2^{j} Zigzag ( \frac{n}{2^{j}}),\]
which completes the proof of \eqref{eq:300}.
\end{proof}

At this point, we are ready to conclude the proof of Theorem~\ref{thm:mainB}. 

\medskip

By virtue of \eqref{eq:formula_Zigzag_B} page~\pageref{eq:formula_Zigzag_B}, we have:
\[ 2 B(n) = n(\lfloor \lg n \rfloor + 1) -  \sum _{k=0} ^{\lfloor \lg n \rfloor} 2^{k+1}  Zigzag (\frac{n}{2^{k+1}}) = \]
\[ = n(\lfloor \lg n \rfloor + 1) - \sum _{i=1} ^{\lfloor \lg n \rfloor+1} 2^{i}  Zigzag (\frac{n}{2^{i}}) = \] 
\[ = n(\lfloor \lg n \rfloor + 1) - \sum _{i=1} ^{k} 2^{i}  Zigzag (\frac{n}{2^{i}})
+ \sum _{i=\lfloor \lg n \rfloor+2} ^{k} 2^{i}  Zigzag (\frac{n}{2^{i}}) = \] 
[by Lemmas \ref{lem:200} and \ref{lem:300}]
\[ = n(\lfloor \lg n \rfloor + 1) - 2^k \tilde{F} (\frac{n}{2^k})
+  n \times k - n(\lfloor \lg n \rfloor + 1) =  n \times k - 2^k \tilde{F} (\frac{n}{2^k}) ,\]
that is,
\[ 2 B(n) =  n \times k - 2^k \tilde{F} (\frac{n}{2^k}) ,\]
from which \eqref{eq:100x} follows.

\medskip

This completes the proof of Theorem~\ref{thm:mainB}.


\bigskip \bigskip

\textit{Note}. A glance at the proof of Lemma~\ref{lem:200} suffices to notice that it fails if $ n > 2^k $, and so does Theorem~\ref{thm:mainB}. In particular, for $ k = \lfloor \lg n \rfloor $, Lemma~\ref{lem:cff_F_tilde} page~\pageref{lem:cff_F_tilde} yields
\begin{equation}
\tilde{F} (\frac{n}{2^k}) = \frac{n \times k - 2 B(n)}{2^k} +  \frac{2n}{2^k} -2
> \frac{n \times k - 2 B(n)}{2^k} 
\end{equation}
since for $ n > 2^{\lfloor \lg n \rfloor} $, $ \frac{2n}{2^{\lfloor \lg n \rfloor}} -2 >0. $

\bibliographystyle{alpha}
\bibliography{ref}

\section*{APPENDIX} \label{appendix} 

\appendix

\section{\label{sec:mergesort}  {A derivation of the worst-case running time $ W(n) =  \sum_{i=1} ^{n} \lceil \lg i \rceil $ of} $ {\tt MergeSort} $} \label{sec:excerpts}

Let's assume that $ n \geq 2 $ is large enough to spur a cascade of many recursive calls to $ {\tt MergeSort} $ following the \textit{recursion tree} $ T $, a sketch of which 
is shown on Figure~\ref{fig:rectre}. 

\medskip

The nodes in tree $ T $ correspond to calls to $ {\tt MergeSort} $ and show sizes of (sub)arrays passed to those calls. The root corresponds to the original call to $ {\tt MergeSort} $. If a call that is represented by a node $ p $ executes further recursive calls to $ {\tt MergeSort} $ then these calls are represented by the children of $ p $; otherwise $ p $ is a leaf. Thus, $ T $ is a $ 2 $-tree\footnote{A binary tree whose every non-leaf has exactly $ 2 $ children.}.

\medskip

The levels in tree $ T $ are enumerated from $ 0 $ to $ h $, where $ h $ is the number of the last level of the tree, or - in other words - the depth of $ T $. On Figure~\ref{fig:rectre}, they are shown on the left side of the tree. The root is at the level $ 0 $, its children are at level $ 1 $, its grand children are at level $ 2 $, its great grand children (not shown on the sketch) are at level $ 3 $, at so on. 
Clearly, since every call to $ {\tt MergeSort} $ on a sub-array of size $ \geq 2$ executes two further recursive calls to $ {\tt MergeSort} $, only the nodes that show value $ 1 $ are leaves and all other nodes have $ 2 $ children each. Thus, since all nodes in the last level $ h $ are leaves, they all show value $ 1 $. And since the original input array gets split, eventually, onto $ n $ $ 1 $-element sub-arrays, the number of all leaves in $ T $ is $ n $. (This, however, does not mean that the last level $ h $ necessarily contains all the leaves of $ T $.)

\medskip



If a level $ i $ has $ 2^i $ nodes, each of them showing a value $ \geq 2 $, then each such node has $ 2 $ children so that level $ i+1 $ has twice the number of nodes in level $ i $, that is, $ 2^{i+1} $ nodes. Since level $ 0 $ has $ 2^0 $ nodes, it follows (completion of a proof by induction with the basis and inductive steps outlined above is left as an exercise for the reader) that if $ k $ is the level number of any level above which all the nodes show values $ \geq 2 $ then all levels $ i = 0,...k $ contain exactly $ 2^i $ nodes each. 

\medskip

The last level $ h $ may contain  $ 2^h $ nodes or less. We are going to show that each level $ i $ above level $ h $ contains exactly $ 2^i $ nodes. Here is a very insightful property that we are going to use for that purpose. It states that $ {\tt MergeSort} $ is splitting its input array fairly evenly so that at any level of the recursive tree, the difference between the lengths of the longest sub-array and the shortest sub-array is $ \leq 1. $ This fact is the root cause of 
good worst-case performance
of $ {\tt MergeSort} $.

\begin{property1} \label{pro:level} 
The difference between values shown by any two nodes in the same level of the recursion tree for $ {\tt MergeSort} $ is $ \leq 1 $.
\end{property1} 
\begin{proof} The Property clearly holds for level $ 0 $. We will show that if it holds for level $ i $ and $ i $ is not the last level of the recursion tree (that is, $ i < h $) then it also holds for the level $ i+1 $.

\smallskip

Let us assume that the Property holds for some level $ i < h $.
Let $ c \leq d $ be numbers shown by any two (not necessarily distinct) nodes in level $ i+1 $. It suffices to show that 
\begin{equation} \label{eq:prop0} 
d-c\leq 1.
\end{equation}
Let $ a \leq b $ be the numbers shown by the parents of the mentioned above nodes. Those parents, of course, must reside in the level $ i $. By the inductive hypothesis (that holds for level $ i $), $ b - a \leq 1 $, that is,
\begin{equation} \label{eq:prop1} 
a \leq b \leq a+1.
\end{equation} 
The numbers shown by all their four children are  $ \lfloor \frac{a}{2} \rfloor $, $ \lceil \frac{a}{2} \rceil $, $ \lfloor \frac{b}{2} \rfloor $ and $ \lceil \frac{b}{2} \rceil $, respectively, so the largest difference between any of those four numbers is $ \lceil \frac{b}{2} \rceil - \lfloor \frac{a}{2} \rfloor $. In particular, $ d - c $ is not larger than that. 
We have:
 \[ d - c \leq \lceil \frac{b}{2} \rceil - \lfloor \frac{a}{2} \rfloor \leq \]
 [by (\ref{eq:prop1})]
 \[ \leq \lceil \frac{a+1}{2} \rceil - \lfloor \frac{a}{2} \rfloor  \]
 [since for any integer $ c $, $  \lceil \frac{c}{2} \rceil = \lfloor \frac{c+1}{2} \rfloor $]
 \[ = \lfloor \frac{a+2}{2} \rfloor  - \lfloor \frac{a}{2} \rfloor =
 \lfloor \frac{a}{2} + 1 \rfloor  - \lfloor \frac{a}{2} \rfloor = \]
 [since for every $ x $, $ \lfloor x + 1 \rfloor = \lfloor x \rfloor + 1 $]
\[ =  \lfloor \frac{a}{2} \rfloor +1 - \lfloor \frac{a}{2} \rfloor = 1. \] 
Thus (\ref{eq:prop0}) holds.
 This completes the inductive step and completes the proof of the Property.
\end{proof}

\medskip

As we have noted, the values shown at all nodes in the last level $ h$  are all $ 1 $. Thus the values shown at their parents, that reside at level $ h-1 $ are all $ 2 $, and the values shown at their grand parents, that reside at level $ h-2 $ are all $ \geq 3 $. Thus, by Property~\ref{pro:level}, all nodes at level $ h-2 $ show values $ \geq 2 $, and, therefore (as we have proved before), all levels $ i = 0, ... , h-1 $ have $ 2^i $ nodes, each, as it has been visualized on Figure~\ref{fig:rectre}.

\begin{theorem1} \label{thm:depthRectree} 
The depth $ h $ of the recursion tree $ T (n) $ for 
$ {\tt MergeSort} $ run on an array of size $ n $ is
\begin{equation} \label{eq:depthRectree}
h = \lceil \lg n \rceil.
\end{equation}
\end{theorem1}
\begin{proof}
Since every level of $ T $, except, perhaps, for the last level, has the maximal number of nodes, a 2-tree  with $ n $ leaves could not be any shorter than  $ T $. So, $ T $ is a shortest 2-tree with $ n $ leaves. Therefore (by a well known fact), its depth $ h $ is equal to $ \lceil \lg n \rceil $. Thus (\ref{eq:depthRectree}) holds.
\end{proof}

Because each node in any level above $ h-1 $ shows value $ \geq 2 $, it has $ 2 $ children. Thus the value it shows is equal to the sum of values shown by its children, as we have indicated at the beginning of this section. From that we conclude (a proof by induction is left as an exercise for the reader) that the sum of values shown at nodes in any level $ i = 0, ..., h-1 $ is the same for each such level. Thus the said sum is equal to the value showed by the only node at level 0, that is, is equal to $ n $.

\medskip

Let $ a_1,...,a_{2^i} $ be the values shown at the nodes of some level $ i = 0, ..., h-1 $. The number of comps performed by a call to $ {\tt Merge }$ invoked by the call to $ {\tt MergeSort }$ on an array of $ a_j $ elements is either $ 0 $ if $ a_j = 1 $ (no call to $ {\tt Merge }$ is made) or, as we have shown in the previous section, is $ a_j -1 $ if $ a_j \geq 2 $. So, in either case, it is $ a_j -1 $.
Thus the number $  $ of comps $ C_i $ performed at  level $ i  $ is 
\begin{equation} \label{eq:levComp} 
C_i = (a_1 - 1) + ... + (a_{2^i} - 1) = (a_1 + ... + a_{2^i}) - \underbrace{(1+ ... + 1)}_{2^i} = n - 2^i.
\end{equation}
Moreover, since all nodes at the last level $ h $ are 1's
\begin{equation} \label{eq:levComp9} 
C_h = 0.
\end{equation}
Therefore, the total number $ W(n) $ of comps that $ {\tt MergeSort }$ performs in the worst case on an $ n $-element array is equal to
\[W(n) = \sum _{i=0}^{h} C_i =   \]
[by (\ref{eq:levComp9})]
\[ = \sum _{i=0}^{h-1} C_i =  \]
[by (\ref{eq:levComp})]
\[ = \sum _{i=0}^{h-1} ( n - 2^i) = n h - (2^h - 1) = n h - 2^h + 1 = \]
[by (\ref{eq:depthRectree})]
\[ =n \lceil \lg n \rceil - 2^{\lceil \lg n \rceil} + 1 .\]
This way I have proved the following.
\begin{theorem} \label{thm:w-cMS}
The number $ W(n) $ of comparisons of keys that $ {\tt MergeSort} $ performs in the worst case while sorting an $ n $-element array is
\begin{equation} \label{eq:levComp2} 
W(n) = n \lceil \lg n \rceil - 2^{\lceil \lg n \rceil} + 1 .
\end{equation}
\end{theorem}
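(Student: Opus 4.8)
The plan is to read the closed form directly off the recursion tree $T$ of Figure~\ref{fig:rectre}, summing the worst-case comparison counts level by level. The key local fact is that a single call to ${\tt Merge}$ on two sorted subarrays of combined length $a$ performs, in the worst case, exactly $a-1$ comparisons of keys when $a\geq 2$, and $0$ comparisons when $a=1$ (no merge is invoked); in both cases the count equals $a-1$. Since ${\tt MergeSort}$ performs no comparisons outside ${\tt Merge}$, the total $W(n)$ is the sum over all nodes $p$ of $(\text{value of }p)-1$, which I would regroup by levels of $T$.

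First I would pin down the shape of $T$. Invoking Theorem~\ref{thm:depthRectree}, the depth is $h=\lceil \lg n \rceil$. Using Property~\ref{pro:level} (any two values in a common level differ by at most $1$) together with the fact that the last level consists only of $1$'s, one shows that every level $0,\dots,h-1$ is maximal, i.e. contains $2^{i}$ nodes. Because $T$ is a $2$-tree, maximality of level $i+1$ forces every one of the $2^{i}$ nodes at level $i\leq h-2$ to have two children, so all nodes at levels $0,\dots,h-2$ are internal. The crucial accounting step is then a conservation law: since $\lfloor a/2\rfloor+\lceil a/2\rceil=a$, the value of each internal node equals the sum of the values of its two children, whence the sum of the values at level $i$ equals the sum at level $i+1$ for $i=0,\dots,h-2$; as the root shows $n$, the values at each of the levels $0,\dots,h-1$ sum to $n$.

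With these facts the summation is immediate. Writing $C_i$ for the number of comparisons charged to level $i$ and using the per-node cost $a_j-1$ (valid even for the value-$1$ leaves that may sit at level $h-1$), I get, for $0\leq i\leq h-1$,
\[ C_i = \sum_{j=1}^{2^i}(a_j - 1) = \Big(\sum_{j=1}^{2^i} a_j\Big) - 2^i = n - 2^i, \]
while $C_h = 0$ since the last level holds only $1$'s. Summing the resulting geometric series gives
\[ W(n) = \sum_{i=0}^{h-1}(n-2^i) = n h - (2^{h}-1) = n\lceil \lg n \rceil - 2^{\lceil \lg n \rceil} + 1, \]
after substituting $h=\lceil \lg n \rceil$. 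The main obstacle is not the arithmetic but the structural claim that every level except the last is maximal and internal; once Property~\ref{pro:level} and Theorem~\ref{thm:depthRectree} secure this, the conservation law and the telescoping sum finish the proof with no case analysis on the parities of the successive halvings.
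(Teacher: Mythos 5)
Your proof is correct and takes essentially the same route as the paper's own derivation: both count comparisons level by level with per-node cost $a_j-1$, use Property~\ref{pro:level} and Theorem~\ref{thm:depthRectree} to establish that levels $0,\dots,h-1$ are maximal and that the values at each such level sum to $n$, and then evaluate $\sum_{i=0}^{h-1}(n-2^i) = nh - 2^h + 1$. There is no gap to report.
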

\textbf{Proof} follows from the above derivation.
\hfill $ \Box $

\medskip

Using the well-known\footnote{See \cite{knu:art}.} closed-form formula for $ \sum_{i=1} ^{n} \lceil \lg i \rceil $,
 I conclude that
\begin{equation} \label{eq:recmergesort900} 
W(n) =  \sum_{i=1} ^{n} \lceil \lg i \rceil.
\end{equation}

\section{Proof of $ \sum _{i=0} ^{m-1} \lfloor \frac{n+i}{m} \rfloor = n $} \label{sec:proof1}

\begin{thm1}
For every natural number n and every positive natural number m,
\[ \sum _{i=0} ^{m-1} \lfloor \frac{n+i}{m} \rfloor = n. \]
\end{thm1}

\begin{proof} Let $n = k m + l$, where $0 \leq l < m$.

\medskip

We have

\[\lfloor \frac{n+i}{m} \rfloor = \lfloor \frac{k m + l +i}{m} \rfloor = \lfloor k + \frac{l +i}{m} \rfloor
= k + \lfloor \frac{l +i}{m} \rfloor . \]

\medskip

Therefore,

\[  \sum _{i=0} ^{m-1} \lfloor \frac{n+i}{m} \rfloor = mk + \sum _{i=0} ^{m-1} \lfloor \frac{l +i}{m} \rfloor =
mk + \sum _{i=m-l} ^{m-1} \lfloor \frac{l +i}{m} \rfloor = mk + \sum _{i=m-l} ^{m-1} 1 = mk + l = n. \]
\end{proof}

\bibliographystyle{siam}
\bibliography{ref}
\bigskip
\bigskip
\copyright \textit{2016 Marek A. Suchenek. All rights reserved by the author. \newline A non-exclusive license to distribute this article is granted to arXiv.org}.

\end{document}